\title{General Knapsack Problems in a Dynamic Setting} 
\author{Yaron Fairstein}{Computer Science Department, Technion, Haifa 3200003, Israel }{yyfairstein@gmail.com}{}{}
\author{Ariel Kulik}{Computer Science Department, Technion, Haifa 3200003, Israel }{naor@cs.technion.ac.il}{}{}
\author{Joseph (Seffi) Naor}{Computer Science Department, Technion, Haifa 3200003, Israel }{kulik@cs.technion.ac.il}{}{}
\author{Danny Raz}{Computer Science Department, Technion, Haifa 3200003, Israel }{danny@cs.technion.ac.il}{}{}
\authorrunning{Y. Fairstein, A. Kulik, J. Naor and D. Raz} 
\keywords{Multistage, Multiple-Knapsacks, Multidimensional Knapsack} 
\begin{document}
\nolinenumbers
\maketitle

\begin{abstract}
The world is dynamic and changes over time, thus any optimization problem used to model real life problems must address this dynamic nature, taking into account the cost of changes to a solution over time. 
The multistage model was introduced with this goal in mind. In this model we are given a series of instances of an optimization problem, corresponding to different times, and a solution is provided for each instance. The strive for obtaining near-optimal solutions for each instance on one hand, while maintaining similar solutions for consecutive time units on the other hand, is quantified and integrated into the objective function.
In this paper we consider the Generalized Multistage $d$-Knapsack problem, a generalization of the multistage variants of the Multiple Knapsack problem, as well as the $d$-Dimensional Knapsack problem. We present a PTAS for Generalized Multistage $d$-Knapsack. 
\end{abstract}
\newpage

\section{Introduction}\label{sec:intro}

In many optimization settings, the problem of interest is defined over a time horizon in which the actual setting evolves, resulting in changes over time to the problem constraints and the objective function.
Thus, even if the optimization problem at hand can be solved efficiently for a single time unit, it may not be clear how to extend this solution to a time-evolving setting.

An example of such a setting comes from the world of cloud management. A cloud provider maintains a data center with servers and offers clients virtual machines having different processing capabilities. Each client demands a virtual machine (with certain properties), and if provided it must pay for it. It would be na\"{\i}ve to assume that the demand of clients is static over time. Factors, such as peak vs. off-hours, and the day of the week, might affect client demand. Also, the cloud provider might either turn off servers to reduce hardware deterioration and electricity usage, or open more servers to meet higher demand.  Thus, the optimization problem is partitioned into multiple {\em stages}, where in each stage there are different constraints and possibly a different optimization goal. 

A simple solution is to ignore the dynamicity of the problem, and solve each stage separately and independently of other stages. Thus, profit at each stage is maximized, ignoring the solutions computed for the other stages. Such a solution may result in disgruntled clients, as it can lead to intermittent service between stages. Instead, we will aim for a {\em multistage} solution that balances between the optimum of each stage, while preserving some continuity between consecutive stages. This will be achieved by incorporating the continuity of the solution into the overall profit.

The multistage model was first introduced by Gupta et al. \cite{gupta2014changing} and Eisenstat et al. \cite{eisenstat2014facility} to address dynamic environments. Since its introduction, it has received growing attention (examples include \cite{an2017dynamic,fairstein2018algorithms,bampis2018multistage,bampis2019multistage,fluschnik2019multistage,deng2020approximation}). In the multistage model we are given a sequence of instances of an optimization problem. A solution constitutes of a series of solutions, one for each instance. 

Two different ideas were used to enforce a balance between single stage optimality and continuity. In \cite{eisenstat2014facility,gupta2014changing} a change cost is charged for the dissimilarity of consecutive solutions, while in \cite{bampis2019multistage} additional gains were given for their similarity. In the aforementioned cloud management problem, the change cost can be interpreted as installation costs and eviction costs charged when a client is initially served, and then its service is discontinued. The gains can be modeled as increased costs the client is charged to guarantee the continuity of its service.

The cloud management problem described above can be viewed as a multistage problem where the underlying optimization problem is the Multiple Knapsack problem (MKP). In MKP we are given a set of items, each associated with a weight and a profit. Also, we are given a set of bins, each one having a capacity. A feasible solution for MKP is an assignment of items to bins such that the total weight of the items assigned to each bin does not exceed its capacity. The objective is to find a feasible solution maximizing the profit accrued from the assigned items. In the context of the cloud management setting, the items are the virtual machine demands of the clients and the bins are the available servers.


\subsection{Problem Definition}\label{sec:definition}
We study the Generalized Multistage $d$-Knapsack problem. We begin with an informal description of the problem. An instance of the problem consists of $T$ stages, where in each stage we are given an instance of a generalization of the classic knapsack problem. While the instances differ between stages, in all stages the same set of items $I$ can  be packed. 
The continuity of the solution is enforced by quantifying the similarity of consecutive solutions and integrating it into the objective function.

We quantify continuity by four types of values. The first two values specify gains earned for the similarity of solutions. For example, if an item $i$ is packed in stages $t-1$ and $t$, gain $g^+_{i,t}$ is awarded. Similarly, $g^-_{i,t}$ is awarded if $i$ is not packed in $t-1$ and $t$. The other two values define the cost of changes between consecutive solutions. For example, if an item $i$ was not packed in stage $t-1$, and it is decided to pack it in stage $t$, a change cost of $c^+_{i,t}$ is charged. Similarly, $c^-_{i,t}$ is charged if $i$ is packed in $t$, but not in $t+1$.

The packing problem at each stage generalizes the Multiple Knapsack problem, as well as the $d$-Dimensional Knapsack problem. In each instance of the problem we are given $d$ sets of bins, and the weight an item occupies in a bin depends on the set to which the bin belongs to. The profit of an item is accrued once it is assigned to some bin in all $d$ sets of bins. This problem is called $d$-Multiple Knapsack Constraints Problem and is formally defined below.

A Multiple Knapsack Constraint (MKC) is a tuple $K=(w,B,W)$ defined over a set of items $I$. The function $w:I\rightarrow \mathbb{R}_+$ defines the weight of the items, $B$ is a set of bins, each equipped with a capacity defined by the function $W:B\rightarrow\mathbb{R}_+$. An {\em assignment} is a function $A:B\rightarrow 2^I$, defining which items are assigned to each of the bins. An assignment is feasible if $w(A(b))=\sum_{i\in A(b)}w_i \leq W(b)$ for each bin $b\in B$. Similarly, given a tuple of MKCs $\mathcal{K}=(K_j)_{j=1}^d$ over $I$, a tuple of $d$ assignments $\mathcal{A}=(A^j)_{j=1}^d$ is feasible for $\mathcal{K}$ if for each $j=1,\ldots,d$ assignment $A^j$ is a feasible assignment for $K_j$. We say $A$ is an assignment of set $S\subseteq I$ if $S=\cup_{b\in B}A_b$.

In $d$-Multiple Knapsack Constraints Problem ($d$-MKCP), a problem first introduced in \cite{fairstein2020tight}, we are given a tuple $\left( I, \mathcal{K},p \right)$, where $I$ is a set of items, $\mathcal{K}$ is a tuple of $d$ MKCs and $p:I\rightarrow \mathbb{R}_{\geq 0}$ defines the profit of each item. A feasible solution for $d$-MKCP is a set $S\subseteq I$ and a tuple of feasible assignments ${\mathcal{A}}$ (w.r.t $\mathcal{K}$) of $S$. The goal is to find a feasible solution that maximizes $p(S)=\sum_{i\in S}p(i)$. We note that if there exists an item with negative profit it can be discarded in advance. This fact is used later on, in Section \ref{sec:reduction}.

The Generalized Multistage $d$-Knapsack problem ($d$-GMK), is the multistage model of $d$-MKCP. The problem is defined over a time horizon of $T$ stages as follows. An instance of the problem is a tuple $\left((\mathcal{P}_t)_{t=1}^T, g^+,g^-, c^+,c^- \right)$, where  $\mathcal{P}_t = \left(I,\mathcal{K}_t,p_t\right)$ is a $d_t$-MKCP instance with $d_t\leq d$ for $t\in[T]$, $g^+,g^-\in \mathbb{R}_+^{I\times[2,T]}$ are the gain vectors and $c^+,c^-\in \mathbb{R}_+^{I\times[1,T]}$ are the change cost vectors.\footnote{We use the notations $[n,m]=\{ i\in \mathbb{N}~|~n\leq i\leq m\}$ and $[n]=[1,n]$ for $n,m\in \mathbb{N}$.}  We use $g^+_{i,t}$ and $g^-_{i,t}$ to denote the gain of item $i$ at stage $t$. Similarly, we use $c^+_{i,t}$ and $c^-_{i,t}$ to denote the change cost of item $i$ at stage $t$. 

A feasible solution for $d$-GMK is a tuple $(S_t,\mathcal{A}_t)_{t=1}^T$, where $(S_t,\mathcal{A}_t)$ is a feasible solution for $\mathcal{P}_t$ (note that $\mathcal{A}_t$ is a tuple of assignments of $S_t$). Throughout the paper we assume $S_0=S_{T+1}=\emptyset$ and denote the objective function of instance $\mathcal{Q}$ by $f_{\mathcal{Q}}:I^T\rightarrow \mathbb{R}$, where
\begin{align*}
    f_{\mathcal{Q}}\left((S_t)_{t=1}^T\right)=&\sum_{t=1}^T\sum_{i\in S_t} p_t(i) + \sum_{t=2}^{T}\left(\sum_{i\in S_{t-1}\cap S_{t}}g^+_{i,t} + \sum_{i\notin S_{t-1}\cup S_{t}}g^-_{i,t}\right) \\
    &~~~~~- \sum_{t=1}^{T}\left( \sum_{i\in S_{t}\setminus S_{t-1}} c^+_{i,t} + \sum_{i\in S_{t}\setminus S_{t+1}} c^-_{i,t} \right).
\end{align*}

The goal is to find a feasible solution that maximizes the objective function $f_\mathcal{Q}$.

A study of $d$-GMK reveals it does not admit a constant factor approximation algorithm (see Section~\ref{sec:d-dim-kp-hardness}). We found that in hard instances the change costs are much larger than the profits. Thus we consider an important parameter of the problem, the {\em profit-cost ratio}. It is defined as the maximum ratio, over all items, between the change cost ($c^+,c^-$) and the profit of an item over all stages. It is denoted by $\phi_\mathcal{Q}$ for any instance $\mathcal{Q}$, and is formally defined as 
\[ \phi_{\mathcal{Q}} = \min\left(\Big\{\infty\Big\}\bigcup \left\{r\geq 0 ~\Big|~ \forall i\in I, t_1,t_2\in[T] : ~\max\left\{c^+_{i,t_1},c^-_{i,t_1}\right\}\leq r\cdot p_{t_2}(i)\right\}\right) \]
We show that $d$-GMK instances where the profit-cost ratio is bounded by a constant admit a PTAS.

We also consider Subdmodular $d$-GMK, a submodular variant of $d$-GMK where the profit functions are replaced with monotone submodular set functions. A set function $p:2^I\rightarrow \mathbb{R}$ is submodular if for every $A\subseteq B\subseteq I$ and $i\in I\setminus B$ it holds that $p(A\cup\{i\})-p(A)\geq p(B\cup\{i\})-p(B)$. Submodular functions appear naturally in many settings such as coverage \cite{feige1998threshold}, matroid rank \cite{calinescu2007maximizing} and cut functions \cite{feige1995approximating}. We use similar techniques to develop the algorithms for $d$-GMK and Submodular $d$-GMK. Thus, we focus on $d$-GMK and  defer the formal definition as well as the algorithm for Submoduar $d$-GMK to Appendix \ref{app:submodular}.

Both $d$-GMK and Submodular $d$-GMK generalize the Multistage Knapsack problem recently considered by Bampis et al. \cite{bampis2019multistage}. There are several aspects by which it is generalized. First, handling multiple knapsack constraints as well as $d$-dimensional knapsack vs a single knapsack in \cite{bampis2019multistage}. Second, the profit earned from assigning items can be described as a submodular function, not only by a modular function. Third, \cite{bampis2019multistage} considered only symmetric gains, i.e., the same gain is earned whether an item is assigned or not assigned in consecutive stages. Lastly, change costs were not considered in \cite{bampis2019multistage}.

\subsection{Our Results}\label{sec:results}
Our main result is stated in the following theorem.
\begin{theorem}\label{thm:ptas-for-modular}
For any fixed $d\in\mathbb{N}$ and $\phi\geq 1$ there exists a randomized PTAS for $d$-GMK with a profit-cost ratio bounded by  $\phi$.
\end{theorem}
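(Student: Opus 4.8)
Regard a solution through its $\{0,1\}$ indicator variables $x_{i,t}$ (with $x_{i,t}=1$ iff $i\in S_t$). Then $f_{\mathcal{Q}}$ is linear in the $x_{i,t}$ together with terms coupling only the two consecutive indicators $x_{i,t-1},x_{i,t}$ of the \emph{same} item; equivalently, once the packed sets $S_1,\dots,S_T$ are fixed, the objective is a sum of per-item ``chain'' contributions, and the only genuine cross-stage coupling is the per-stage requirement that $S_t$ be feasible for the $d_t$-MKCP instance $\mathcal{P}_t$. This structure calls for a dynamic program sweeping over the stages $t=1,\dots,T$; since $T$ is polynomial in the input size, the entire difficulty is to keep the DP state polynomially bounded, and this is where the profit-cost ratio and the PTAS for a single $d$-MKCP instance come in.

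First I would normalize the instance using the reduction of Section~\ref{sec:reduction}: absorb the gain vectors $g^{+},g^{-}$ into effective per-stage profits and change costs (this is the step that creates items, or item--stage pairs, of negative profit, which may then be discarded), put each stage's constraint into a normal form with a constant number of profit classes and regularized bins, and guess a value $\nu$ with $\mathrm{OPT}\le\nu\le(1+\epsilon)\mathrm{OPT}$ (polynomially many guesses suffice). Call an item \emph{heavy} if its largest effective per-stage profit is at least $\epsilon^{c}\nu$ for a suitable constant $c=c(d)$, and \emph{light} otherwise. The key structural lemma -- the crux of the argument, and the place where the bound $\phi$ on the profit-cost ratio is used -- is that some $(1-O(\epsilon))$-optimal solution packs only $O_{\epsilon,\phi,d}(1)$ heavy items in total, each in a union of $O_{\epsilon,\phi,d}(1)$ intervals. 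The driving observation is that because $c^{\pm}_{i,t}\le\phi\cdot p_{t'}(i)$ for \emph{every} $t'$, the total change cost of any solution is at most $2\phi$ times its total profit, so change costs never swamp profits; given this, if a near-optimal solution packs many heavy items (or gives one of them many intervals), then deleting the least valuable of them perturbs the objective only negligibly, and iterating bounds their number in terms of $\epsilon,\phi,d$.

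Granting the lemma, enumerate -- over polynomially many possibilities -- the set of heavy items together with their complete trajectories: each trajectory is a union of $O_{\epsilon,\phi,d}(1)$ intervals of $[T]$, hence at most $T^{O_{\epsilon,\phi,d}(1)}$ choices per item, and there are only $O_{\epsilon,\phi,d}(1)$ heavy items. For each such guess, in every stage $t$ we are left with a residual $d_t$-MKCP instance on the light items, the capacities already reduced by the placed heavy items. Since every light item has effective per-stage profit below $\epsilon^{c}\nu$ and hence, by the profit-cost ratio, effective per-stage change cost below $\phi\epsilon^{c}\nu$, and since the gains have already been absorbed, the light-item components of distinct stages decouple up to an additive $O(\epsilon)\cdot\mathrm{OPT}$ error in aggregate; it therefore suffices to solve, independently for each stage $t$, the residual $d_t$-MKCP instance on light items with a PTAS for $d$-MKCP (this per-stage subroutine is where the randomness of the statement enters). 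Returning the best solution found over all guesses and enumerations gives, in expectation, value at least $(1-O(\epsilon))\,\mathrm{OPT}$; rescaling $\epsilon$ proves Theorem~\ref{thm:ptas-for-modular}.

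The main obstacle is the structural lemma above, and inside it the treatment of the \emph{gains}: the profit-cost ratio controls the change costs but says nothing about $g^{+},g^{-}$, so items whose gains are large relative to their profits must be neutralized either by the absorption step of Section~\ref{sec:reduction} or by moving them into the heavy category, and bounding how many such items an approximately optimal solution can profitably use -- which is precisely what makes the light stages decouple -- requires a delicate accounting of the total gain collectable by a near-optimal solution (and, for $\phi$ large, a more careful charging than the crude ``cost $\le 2\phi\cdot$ profit'' estimate above). A secondary point is checking that the residual per-stage instances still meet the hypotheses needed by the $d$-MKCP PTAS after heavy items are removed, and that the per-stage $(1\pm\epsilon)$ errors, the rounding of profits, and the discretizations used in the reduction together contribute at most $\epsilon\cdot\mathrm{OPT}$ across all $T$ stages.
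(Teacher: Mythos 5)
Your plan diverges from the paper's (which reduces bounded-horizon $d$-GMK to $dT$-MKCP+ by enumerating whole item trajectories $(i,D)\in I\times 2^{[T]}$ and invoking the known PTAS for that problem, and then handles general $T$ by shifted cut points plus an averaging argument in which the profit-cost ratio pays for the losses at cuts), and as written it has a genuine gap that I do not see how to close. The central problem is your decoupling step for light items. The gains $g^{+},g^{-}$ are \emph{not} controlled by the profit-cost ratio $\phi$ (only $c^{+},c^{-}$ are), and a gain $g^{+}_{i,t}$ depends jointly on membership of $i$ in $S_{t-1}$ and $S_t$, so it cannot be ``absorbed into effective per-stage profits'': there is no well-defined per-stage quantity to absorb it into before the solution is known. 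The paper's reduction in Section~\ref{sec:reduction} does not do this either; it linearizes the objective only by introducing one element per item per \emph{subset of stages}, which is polynomial solely because $T$ is bounded there. Moreover, the cross-stage coupling through $g^{+}$ can carry essentially all of the optimal value even when every item is light: the paper's own hardness construction (Lemma~\ref{lem:2kp-reduction}) has $T=2$, zero per-stage profits, zero change costs, and value coming entirely from $g^{+}_{i,2}$, and it encodes $2$-dimensional knapsack. In such instances solving each stage's residual $d_t$-MKCP independently on light items is arbitrarily bad (a per-stage optimum can earn zero or force incompatible sets across stages), so the claimed ``decouple up to $O(\epsilon)\cdot\mathrm{OPT}$'' fails, and no choice of heavy threshold fixes it because heaviness is measured in profits while the value sits in gains.

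The heavy-item structural lemma is also shaky as stated (an item is classified by its \emph{maximum} per-stage profit but may be packed in many stages where its profit is tiny, and many heavy items can each contribute a small but collectively non-negligible amount), but the decisive failure is the decoupling. The paper avoids all of this by never decoupling stages within a window: for a window of length $O(\phi/\epsilon^{2})$ it solves the coupled problem exactly in the reduction's trajectory space via the $dT$-MKCP+ PTAS, and only across window boundaries does it argue losses are small, using $1/\mu$ shifted families of cut points, splitting the optimal solution's maximal assignment intervals into short and long ones, and charging the $g^{+}$ and change-cost losses at each cut to profit and gains earned within distance $\phi$ of the cut --- this is exactly where the hypothesis $\phi_{\mathcal{Q}}\le\phi$ is used. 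If you want to salvage your outline, you would have to replace the per-stage light-item step by a procedure that still optimizes the gain coupling over each bounded window, at which point you have essentially reconstructed the paper's argument.
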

The result uses the general framework of~\cite{bampis2019multistage}, in which the authors first presented an algorithm for instances with bounded time horizon, and then showed how it can be scaled for general instances.
To handle bounded time horizons we show an approximation factor preserving reduction (as defined in \cite{vazirani2013approximation}\footnote{A formal definition is provided in Appendix \ref{app:omitted-proofs} for completeness}) from $d$-GMK to a generalization of $q$-MKCP.
The reduction illuminates the relationship between $d$-GMK and $q$-MKCP.  As $q$-MKCP admits a PTAS \cite{fairstein2020tight}, this results in a PTAS for  $d$-GMK instances with a bounded time horizon.

We note the reduction can be applied to the problem considered in~\cite{bampis2019multistage} as well. In this case the target optimization problem is $d$-dimensional knapsack with a matroid constraint. As the latter problem is known to admit a PTAS \cite{grandoni2010approximation}, this suggest a simpler solution for bounded time horizon in comparison to the one given in \cite{bampis2019multistage}.

 To generalize the result to unbounded time horizon we use an approach similar to \cite{bampis2019multistage}, though a more sophisticated analysis was required to handle the change costs. The generalization is  achieved by cutting the time horizon into sub-instances with a fixed time horizon. Each sub-instance is solved separately, and then the solutions are combined to create a solution for the full instance. 
 Handling change costs is trickier as cutting an instance may lead to an excessive charge of change costs at the cut points. We must compensate for these additional costs, or we will not be able to bound the value of the solution.

The results for the modular variant generalizes the PTAS  for Multistage Knapsack \cite{bampis2019multistage}. For $d\geq 2$, we cannot expect better results as even $d$-KP, also generalized by $d$-GMK, does not admit an {\em efficient} PTAS (EPTAS). Theorem~\ref{thm:no-eptas} shows an EPTAS cannot be obtained for $1$-GMK  as well
\begin{theorem}\label{thm:no-eptas}
Unless $W[1]=FPT$, there is no EPTAS for $1$-GMK, even if the length of the time horizon is $T=2$, the set of bins in each MKC contains one bin and there are no change costs.
\end{theorem}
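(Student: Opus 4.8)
The plan is to reduce from a known $W[1]$-hard problem that already rules out an EPTAS, most naturally the $d$-dimensional knapsack problem ($d$-KP) or, equivalently, a parameterized version of a packing problem whose hardness is parameterized by $1/\varepsilon$. The cleanest route is to observe that $1$-GMK with $T=2$ can encode a $2$-dimensional knapsack instance, since the two stages together provide two independent capacity constraints on the \emph{same} ground set of items, and the symmetric gains $g^-$ can be tuned so that, in effect, an item only contributes profit when it is packed in \emph{both} stages. Concretely, given a $2$-KP instance with items $I$, weight vectors $(a_i, b_i)$ and capacities $(W_a, W_b)$, build a $1$-GMK instance with stages $t=1,2$: in stage $1$ put a single bin of capacity $W_a$ with weights $a_i$, in stage $2$ put a single bin of capacity $W_b$ with weights $b_i$; set all change costs to $0$; and set the gains $g^+, g^-$ so that the combined objective is monotone in "packed in both stages" and essentially indifferent to packing an item in only one stage. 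The point is that $2$-KP is known not to admit an EPTAS unless $W[1]=\mathrm{FPT}$ (indeed it is $W[1]$-hard parameterized by $1/\varepsilon$), so if $1$-GMK with $T=2$ had an EPTAS we could solve $2$-KP with an EPTAS, a contradiction.

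The key steps, in order, are: (i) fix the source of hardness — cite the result that $2$-KP (or $d$-KP for $d\ge 2$) has no EPTAS unless $W[1]=\mathrm{FPT}$, which is already alluded to in the excerpt (the remark that "even $d$-KP does not admit an EPTAS"); (ii) describe the gadget construction above mapping a $2$-KP instance to a $1$-GMK instance with $T=2$, one bin per MKC, and $c^+=c^-=0$; (iii) choose the gain values carefully so that the optimal $1$-GMK solution corresponds exactly to packing a common set $S$ satisfying $a(S)\le W_a$ and $b(S)\le W_b$, with objective equal to (a fixed affine function of) $p(S)$ — this is where one uses that $g^-_{i,t}$ is awarded when $i$ is packed in \emph{neither} stage and $g^+_{i,t}$ when packed in \emph{both}, so that choosing to pack $i$ in only one stage can be made weakly dominated; (iv) verify the reduction is approximation-factor preserving and polynomial-time in the parameters, so an EPTAS for $1$-GMK would transfer to an EPTAS for $2$-KP; (v) conclude.

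The main obstacle is step (iii): arranging the gains so that the reduction is \emph{exact} (or at least approximation-preserving) despite the fact that $1$-GMK always allows the "package an item in one stage only" option, which has no analogue in $2$-KP. One has to make such partial packings either profitless or strictly suboptimal without introducing change costs (which the theorem forbids) and without making the gain vectors depend on $S$. A natural fix is to set the profits $p_t(i)$ to $0$ in both stages and encode the item profit $p_i$ of the $2$-KP instance entirely through the "both" gain $g^+_{i,2}=p_i$, while setting $g^-_{i,2}=0$; then the objective of a $1$-GMK solution $(S_1,S_2)$ is exactly $\sum_{i\in S_1\cap S_2} p_i$, which is maximized by taking $S_1=S_2=S$ for the best feasible common set $S$ — precisely the $2$-KP optimum. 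A secondary subtlety is that $1$-GMK as defined allows $d_t\le d=1$, a single MKC per stage with a single bin, which matches the theorem's hypotheses exactly, so no extra care is needed there. One should also double-check that the $2$-KP hardness instances can be taken with nonnegative integer data so that all gain values are in $\mathbb{R}_+$ as required.
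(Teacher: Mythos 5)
Your reduction is exactly the paper's: it sets both stage profits to zero, encodes the $2$-KP profit via $g^+_{i,2}=p_i$ with $g^-_{i,2}=0$ and no change costs, so the objective becomes $\sum_{i\in S_1\cap S_2}p_i$, and then invokes the result of Kulik and Shachnai that $2$-KP admits no EPTAS unless $W[1]=FPT$. The construction and the way partial packings are rendered profitless match the paper's proof of Lemma~\ref{lem:2kp-reduction}, so the proposal is correct and takes essentially the same approach.
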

The theorem is proved using a simple reduction from $2$-dimensional knapsack.
Bampis et al. \cite{bampis2019multistage} considered a similar withered down instance and proved that even if the gains are symmetric (i.e., $g^+_{i,t}=g^-_{i,t}$) a {\em Fully} PTAS (FPTAS) does not exist for the problem. 

Using a reduction from multidimensional knapsack we show that $1$-GMK, in its general form, cannot be approximated to any constant factor. 
\begin{theorem}\label{thm:phi-hardness}
For any $d\geq 1$, 
there is no polynomial time  approximation algorithm for $d$-GMK with a constant approximation ratio, unless $NP=ZPP$.
\end{theorem}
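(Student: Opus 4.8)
The plan is to give an approximation-preserving reduction to $1$-GMK from \emph{$m$-dimensional knapsack}, where the number of dimensions $m$ is part of the input. This latter problem admits no constant-factor polynomial-time approximation unless $NP=ZPP$: encoding Maximum Independent Set of a graph $G=(V,E)$, create an item per vertex with profit $1$, and for each edge $\{u,v\}$ a dimension in which $u,v$ have weight $1$, all other items weight $0$, and capacity $1$; the feasible sets are exactly the independent sets of $G$ and their profit equals their size, so any constant-factor approximation would contradict H{\aa}stad's $n^{1-\varepsilon}$ inapproximability of Max Clique / Max Independent Set. Since every $1$-GMK instance is a valid $d$-GMK instance (take $d_t=1\le d$ in all stages), it suffices to rule out a constant-factor approximation for $1$-GMK.

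Given an $m$-dimensional knapsack instance on items $[n]$ with profits $p_i\ge 0$, weights $w_{i,j}$ and capacities $W_j$ ($j\in[m]$), I would construct a $1$-GMK instance with $T=m$ stages and item set $I=[n]$: stage $t$ carries a single MKC consisting of one bin of capacity $W_t$ in which item $i$ has weight $w_{i,t}$, so the feasible sets $S_t$ for stage $t$ are exactly those obeying the $t$-th knapsack constraint; set $p_t(i)=p_i$ in every stage; set all gains $g^+,g^-$ to $0$; and set every change cost to a large value $M:=T\cdot\max_i p_i+1$ \emph{except} for the boundary costs $c^+_{i,1}=c^-_{i,T}=0$. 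The key property of this choice is that an item included in \emph{all} $T$ stages is charged no change cost, while an item included in a nonempty proper subset of the stages is charged at least $M$: it must either first appear at a stage $\ge 2$ (paying some $c^+_{i,t}=M$) or last appear at a stage $\le T-1$ (paying some $c^-_{i,t}=M$). The construction has polynomial size, and its profit-cost ratio grows with $m$, consistent with the PTAS of Theorem~\ref{thm:ptas-for-modular} for bounded $\phi$.

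Two directions complete the reduction. \emph{(Completeness.)} A feasible knapsack set $S$ gives the $1$-GMK solution $S_t=S$ for all $t$, assigning all of $S$ to the single bin of each stage; this is feasible and has value exactly $T\cdot p(S)$, so $\mathrm{OPT}_{\mathrm{1GMK}}\ge T\cdot\mathrm{OPT}_{\mathrm{knap}}$. \emph{(Soundness.)} Let $(S_t,\mathcal A_t)_{t=1}^T$ be any $1$-GMK solution of value $V$ and put $\widehat S=\bigcap_{t=1}^T S_t$; since $\widehat S\subseteq S_t$ and $S_t$ satisfies the $t$-th constraint, $\widehat S$ is feasible for the knapsack instance. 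Decomposing $V$ by item, $V=\sum_{i}\bigl(p_i\,|\{t: i\in S_t\}|-\mathrm{cc}_i\bigr)$ where $\mathrm{cc}_i$ is the change cost charged to $i$: an item in all stages contributes $T p_i$ (no change cost), an item in no stage contributes $0$, and every remaining item contributes at most $(T-1)p_i-M<0$ by the choice of $M$; hence $V\le T\cdot p(\widehat S)$, i.e.\ $p(\widehat S)\ge V/T$. Running a purported polynomial-time $\rho$-approximation for $1$-GMK on the constructed instance returns a solution of value $V\ge\mathrm{OPT}_{\mathrm{1GMK}}/\rho\ge T\cdot\mathrm{OPT}_{\mathrm{knap}}/\rho$, so $\widehat S$ is a feasible knapsack solution of profit $\ge\mathrm{OPT}_{\mathrm{knap}}/\rho$ — a constant-factor approximation for $m$-dimensional knapsack, hence for Max Independent Set, forcing $NP=ZPP$.

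The heart of the argument is designing the change costs so that near-continuity is \emph{genuinely} enforced in both directions. Making all change costs large would impose an unavoidable entry-plus-exit charge of $2M$ on every selected item and render the empty solution optimal; offsetting this with a profit bonus at stage $1$ reopens a loophole in which the solver packs a large set cheaply in stage $1$ alone and ignores dimensions $2,\dots,m$. Zeroing out precisely the two boundary costs $c^+_{i,1}$ and $c^-_{i,T}$ is what makes the all-or-nothing structure appear with no bonus needed, so that the intersection $\bigcap_t S_t$ can be read off as the extracted knapsack solution; the short case check behind the per-item bound $\mathrm{cc}_i\ge M$ over the possible shapes of $\{t:i\in S_t\}$ is the only mildly fiddly step.
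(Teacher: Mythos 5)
Your reduction is essentially the paper's own proof of Theorem~\ref{thm:phi-hardness}: both map a Multidimensional Knapsack instance (whose constant-factor inapproximability under $NP\neq ZPP$ is the Chekuri--Khanna result that your MIS encoding rederives) to a $1$-GMK instance with $T=m$ stages, one single-bin constraint per stage, zero gains, and change costs that vanish only at the boundary entries $c^+_{i,1},c^-_{i,T}$, so that the intersection $\bigcap_t S_t$ is read off as the knapsack solution; the paper merely scales the per-stage profit to $p(i)/d$ and uses change cost $p(i)$ where you use a big-$M$, which is an equivalent normalization. One small touch-up: your justification ``first appears at a stage $\geq 2$ or last appears at a stage $\leq T-1$'' does not literally cover occupancy sets that contain both stage $1$ and stage $T$ but have an internal gap, yet such items still pay $c^+_{i,t}=M$ at their re-entry stage $t\in[2,T]$, so the per-item bound $\mathrm{cc}_i\geq M$ (and hence your soundness step) remains valid.
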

This result justifies our study of the special cases of $d$-GMK in which the profit-cost ratio is bounded by a constant.

The techniques used to develop the algorithm for $d$-GMK can be adjusted slightly to produce an approximation algorithm for Submodular $d$-GMK.
\begin{theorem}\label{thm:submodular-result}
For any fixed $d\in\mathbb{N}$ and $\epsilon>0$ there exists a randomized $\left(1-\frac{1}{e}-\epsilon\right)$-approximation algorithm for Submodular $d$-GMK.
\end{theorem}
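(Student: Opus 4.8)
The plan is to mirror the structure used for the modular case (Theorem~\ref{thm:ptas-for-modular}): first handle instances with a bounded time horizon via an approximation-preserving reduction to a submodular generalization of $q$-MKCP, then lift the bounded-horizon algorithm to arbitrary $T$ by cutting the time line into overlapping blocks of constant length. The key difference from the modular case is that a PTAS is no longer available for the single-shot target problem; instead, the best one can hope for is the $1-1/e$ barrier that is inherent already to maximizing a monotone submodular function subject to a knapsack constraint. So the target of the reduction should be \emph{monotone submodular maximization subject to $q$ knapsack constraints} (and, when the reduction from~\cite{bampis2019multistage} is imitated, possibly an additional matroid constraint), for which a $\left(1-1/e-\epsilon\right)$-approximation is known via the continuous greedy / multilinear relaxation machinery.

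First I would set up the reduction for bounded $T$. The encoding from the modular proof bundles the profit of an item together with the gains $g^+_{i,t}, g^-_{i,t}$ and (after the change-cost compensation argument) the change costs $c^+_{i,t}, c^-_{i,t}$ into a per-stage ``super-item'' structure over a ground set of size polynomial in $|I|$ and exponential only in $T$ (which is constant). The point to verify is that when the stage profit functions $p_t$ are replaced by monotone submodular set functions, the resulting objective $f_\mathcal{Q}$ — viewed as a function of the chosen super-items — is still monotone and submodular: the $\sum_t p_t(S_t)$ term is a sum of submodular functions of disjoint coordinate blocks, hence submodular; the symmetric/asymmetric gain terms and the (compensated) change-cost terms are, as in the modular analysis, expressible as modular or as simple submodular adjustments that do not destroy submodularity. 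Once this is checked, running the $\left(1-1/e-\epsilon\right)$-approximation for submodular maximization under $q$ knapsack constraints on the reduced instance yields the same factor for bounded-horizon Submodular $d$-GMK, since the reduction is approximation-factor preserving.

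For unbounded $T$, I would reuse the block-decomposition argument: cut $[T]$ into windows of length $\ell = \ell(\epsilon)$, solve each window's bounded-horizon instance, and stitch. The subtlety inherited from the modular case is the over-charging of change costs at the cut points and the need to compensate for it; here one additionally has to argue that combining submodular objectives across windows only loses an $\epsilon$ factor, which works because the per-window objectives act (up to the boundary gain/cost terms) on disjoint time coordinates, so the optimum of the global instance restricted to a window is a valid feasible solution there and the sum of window-optima lower-bounds a $(1-O(\epsilon))$ fraction of the global optimum. Randomized shifting of the cut points (as in~\cite{bampis2019multistage}) is used to make the expected loss from boundary effects small.

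I expect the main obstacle to be the interaction between the submodular profit and the boundary terms in the window decomposition: in the modular case the objective decomposes additively across windows up to a clean accounting of a constant number of boundary change-cost/gain terms, but with submodular $p_t$ one must be careful that the ``restriction'' of a global solution to a window does not gain value from the submodular function being evaluated on a smaller set (monotonicity gives the inequality in the right direction, but one must track it precisely), and that the change-cost compensation trick — which adds corrective terms to the objective — is still compatible with submodularity and with the $1-1/e$-type rounding. Verifying that the reduced objective stays in the class handled by the known submodular-maximization algorithm (monotone submodular, $q$ knapsack constraints, possibly plus a matroid) is the crux; everything else is a faithful adaptation of the modular arguments with $\epsilon$ replaced by $1-1/e-\epsilon$ wherever the single-shot PTAS was invoked.
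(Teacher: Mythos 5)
Your overall architecture (bounded-horizon reduction plus shifted cutting) matches the paper, but there are two concrete problems. First, you assume the submodular variant carries change costs and plan a ``change-cost compensation'' at the cut points. In the paper, Submodular $d$-GMK (Appendix~\ref{app:submodular}) is defined \emph{without} change costs -- only the gains $g^+,g^-$ remain -- and this is not a cosmetic choice: since a nonnegative modular profit is monotone submodular, Submodular $d$-GMK with change costs would contain $d$-GMK with unbounded profit-cost ratio, and Theorem~\ref{thm:phi-hardness} rules out any constant-factor approximation for that. So a version of Theorem~\ref{thm:submodular-result} that includes change costs but has no $\phi$ parameter, as your proposal implicitly targets, is false; your compensation step would have to smuggle in a bounded-$\phi$ hypothesis that is absent from the statement. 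The absence of change costs is also exactly why the paper's cutting analysis (Lemma~\ref{lem:submodular-cutting}) is short: restricting the optimum to each window is already feasible for the cut instance, only gain terms at cut points are lost, and averaging over the $1/\mu$ shifted cut-point families (which pairwise intersect only in $\{1,T+1\}$) loses a single $(1-\mu)$ factor -- no per-item accounting of short/long intervals is needed.

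Second, your bounded-horizon step leans on ``monotone submodular maximization subject to $q$ knapsack constraints (plus a matroid), solved by continuous greedy.'' The reduced instance is not of that form: it is Submodular $dT$-MKCP+, where each constraint is a \emph{multiple knapsack} constraint -- a set of bins with capacities and an explicit assignment of selected elements to bins -- plus a partition matroid over the ground set $I\times 2^{[T]}$. A single MKC is not a single linear packing inequality, and the standard multilinear-relaxation result for $q$ knapsack constraints does not yield feasible bin assignments; the paper instead invokes the $(1-1/e)$-approximation for Submodular $d$-MKCP+ of \cite{fairstein2020tight} (Lemma~\ref{lem:submodular-for-fixed-horizon}). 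The submodularity check you correctly identify as the crux is handled in the paper by Claim~\ref{submodular-extension}: each lifted stage profit $\tilde p_t(S)=p_t(\{i \mid \exists (i,D)\in S,\ t\in D\})$ is the composition of $p_t$ with a projection and remains monotone and submodular, while the gain terms are modular per element $(i,D)$; note these functions all act on the full ground set through the sets $D$, not on disjoint coordinate blocks as your sketch suggests. With those two corrections -- drop the change costs (or add a bounded-$\phi$ hypothesis) and route the bounded-horizon case through Submodular $dT$-MKCP+ -- your outline coincides with the paper's proof.
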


In the submodular variant one cannot hope for vast improvement over our results as the algorithm is almost tight. This is due to the hardness results for submodular maximization subject to a cardinality constraint presented by Nemhauser and Wolsey \cite{nemhauser1978best}.

\subsection{Related Work}\label{sec:related}
In the multistage model we are given a series of instances of an optimization problem, and we search for a solution which optimizes each instance while maintains some similarity between solutions. Many optimization problem were considered under this framework. These include matching \cite{bampis2018multistage,chimani2020approximating}, clustering \cite{deng2020approximation}, subset sum \cite{bampis2019online}, vertex cover \cite{fluschnik2019multistage} and minimum $s-t$ path \cite{fluschnik2020multistage}.

The multistage model was first presented by both Eisenstat et al. \cite{eisenstat2014facility} and Gupta et al. \cite{gupta2014changing}. In \cite{eisenstat2014facility} the Multistage Facility Location problem was considered, where the underlying metric in which clients and facility reside changes over time. A logarithmic approximation algorithm was presented for two variants of the problem; the hourly opening costs where the opening cost of a facility is charged at each stage in which it is open, and the fixed opening costs where a facility is open at all stages after its opening costs is paid. A logarithmic hardness result was also presented for the fixed opening costs variant. An et al. \cite{an2017dynamic} improved the result for the hourly opening costs variant and presented a constant factor approximation. Fairstein et al. \cite{fairstein2018algorithms} proved that the logarithmic hardness result does not hold if only the client locations change over time and the facilities are static.

As mentioned, the multistage model was also introduced by Gupta et al. \cite{gupta2014changing}, where the Multistage Matroid Maintenance (MMM) problem was considered. In MMM we are given a set of elements equipped with costs that change over time. In addition, we are given a matroid. The goal is to select a base of minimum costs at each stage whilst minimizing the cost of the difference of bases selected for consecutive stages. Gupta et al. \cite{gupta2014changing} presented a logarithmic approximation algorithm for the problem, as well as fitting lower bound, proving this result is tight.

\paragraph*{Organization.} 
Section~\ref{sec:PTAS} provides the approximation schemes from Theorem~\ref{thm:ptas-for-modular}.
Hardness results are presented in Section \ref{sec:hardness}. The extension to Submodular $d$-GMK is given in Appendix~\ref{app:submodular}.

\section{Approximation Scheme for \texorpdfstring{$d$} --GMK}
\label{sec:PTAS}
In this section we derive the approximation scheme for $d$-GMK with bounded profit-cost ratio. 
In Section~\ref{sec:reduction} we show how a PTAS for instances with bounded time horizon can be 
obtain via a reduction to a variant of $d$-MKCP. Subsequently, in Section~\ref{sec:breaking} we show how the algorithm for bounded time horizon can be used to approximate general instance. 
\subsection{Bounded Time Horizon}\label{sec:reduction}
In this section we provide a reduction from an instance of $q$-GMK to a generalization of $d$-MKCP (for specific values of $d$ and $q$). The generalization was presented in \cite{fairstein2020tight} and is called $d$-MKCP With A Matroid Constraint ($d$-MKCP+) and is defined by a tuple $(I,\mathcal{K},p,\mathcal{I})$, where $(I,\mathcal{K},p)$ forms an instance of $d$-MKCP. Also, the set $\mathcal{I}\subseteq 2^I$ defines a matroid\footnote{A formal definition for matroid can be found in~\cite{schrijver2003combinatorial}} constraint. A feasible solution for $d$-MKCP+ is a set $S\in\mathcal{I}$ and a tuple of feasible assignments ${\mathcal{A}}$ (w.r.t $\mathcal{K}$) of $S$. The goal is to find a feasible solution which maximizes $\sum_{i\in S}p(i)$. The following definition presents the construction of the reduction.

\begin{definition}\label{def:modular-reduction}
Let $\mathcal{Q}=\left((\mathcal{P}_t)_{t=1}^T,g^+,g^-,c^+,c^-\right)$ be an instance of $d$-GMK, where $\mathcal{P}_{t} = \left(I,\mathcal{K}_{t},p_t\right)$ and $\mathcal{K}_t = (K_{t,j})_{j=1}^{d_t}$. Define $R(\mathcal{Q})=\left(E,\tilde{\mathcal{K}},\tilde{p},\mathcal{I}\right)$ where
\begin{itemize}
    \item $E=I\times 2^{[T]}$
    \item $\mathcal{I}=\left\{ S\subseteq E~\middle|~ \forall i\in I :~\left|S\cap \left(\{i\}\times2^{[T]}\right)\right|\leq 1 \right\}$
    \item For $t\in [T], j\leq d_t$ set MKC $\tilde{K}_{t,j}=(\tilde{w}_{t,j},B_{t,j},W_{t,j})$ over $E$, \\where $K_{t,j} = (w_{t,j},B_{t,j},W_{t,j})$ and
    \[ \tilde{w}_{t,j}((i,D)) = 
\begin{cases}
	w_{t,j}(i) & t\in D\\
	0 & \textnormal{otherwise}
\end{cases} \]
    \item For $t=1,...,T, d_t<j\leq d$ set MKC $\tilde{K}_{t,j}= (w_0,\{b\},W_0)$ over $E$, where $w_0:2^E\rightarrow \{0\}$, $W_0(b)=0$ and $b$ is an arbitrary bin (object).
    \item $\tilde{\mathcal{K}} = \left( \tilde{K}_{t,j} \right)_{t\in[T],j\in [d]}$.
    \item The objective function $\tilde{p}$ is defined as follows.
    \[ \tilde{p}(S) = \sum_{(i,D)\in S}\left(\sum_{t\in D}p_t(i)+\sum_{t\in D : t-1\in D}g^+_{i,t} +\sum_{t\notin D : t-1\notin D}g^-_{i,t} - \sum_{t\in D: t-1\notin D}c^+_{i,t} - \sum_{t\in D: t+1\notin D}c^-_{i,t} \right) \]
\end{itemize}
\end{definition}

Each element $(i,D)\in E$ states the subset of stages in which item $i$ is assigned. I.e., $i$ is only assigned in stages $t\in D$. Thus any solution should include at most one element $(i,D)$ for each $i\in I$. This constraint is fully captured by the partition matroid constraint defined by the set of independent sets $\mathcal{I}$. Finally, if an element $(i,D)$ is selected, we must assign $i$ in each MKC $K_{t,j}$ for $j\in [d_t], t\in D$. This is captured by the weight function $\tilde{w}$, as an element $(i,D)$ weighs $w_{t,j}(i)$ if and only if $t\in D$ (otherwise its weight is zero and it can be assigned for ``free'').

\begin{lemma}
For any $d$-GMK instance $\mathcal{Q}$ with time horizon $T$, it holds that  
$R(\mathcal{Q})$ is a $dT$-MKCP+ instance.
\end{lemma}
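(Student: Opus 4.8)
The plan is to verify that the object $R(\mathcal{Q})$ produced by Definition~\ref{def:modular-reduction} satisfies the defining requirements of a $dT$-MKCP+ instance, i.e.\ that it is a tuple $(E,\tilde{\mathcal{K}},\tilde{p},\mathcal{I})$ in which $\tilde{\mathcal{K}}$ is a tuple of exactly $dT$ MKCs over the ground set $E$, $\mathcal{I}$ is the family of independent sets of a matroid on $E$, and $\tilde p$ is a well-defined real-valued (in fact nonnegative) function on $2^E$. Since this is essentially a bookkeeping check, I would organize it as a short sequence of three observations, one per component.

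First I would count the MKCs. For each stage $t\in[T]$ the definition produces the constraints $\tilde K_{t,j}$ for $j\in[d_t]$ (the ``copies'' of the original $K_{t,j}$ with the reweighting $\tilde w_{t,j}$) together with the trivial zero-capacity constraints $\tilde K_{t,j}$ for $d_t<j\le d$; together these give exactly $d$ constraints per stage, hence $\tilde{\mathcal K}=(\tilde K_{t,j})_{t\in[T],j\in[d]}$ has $dT$ entries, as required. I would also check that each $\tilde K_{t,j}$ is a legitimate MKC over $E$: its weight function maps $E$ into $\mathbb{R}_+$ (for $j\le d_t$ this is $w_{t,j}(i)$ or $0$, using that original weights are nonnegative; for $j>d_t$ it is the constant $0$), and its bin set comes with a nonnegative capacity function, so the tuple $(\tilde w_{t,j},B_{t,j},W_{t,j})$ (resp.\ $(w_0,\{b\},W_0)$) is a valid MKC over $E$.

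Second, I would confirm that $\mathcal{I}$ as defined is the independent-set family of a matroid on $E$. The natural way is to exhibit it explicitly as a partition matroid: partition $E=I\times 2^{[T]}$ into the blocks $\{i\}\times 2^{[T]}$ indexed by $i\in I$, and observe that $\mathcal{I}=\{S\subseteq E : |S\cap(\{i\}\times 2^{[T]})|\le 1 \text{ for all } i\in I\}$ is exactly the collection of sets meeting each block at most once, which is the canonical partition matroid with all capacities equal to $1$; this is a matroid. (Alternatively one can just verify the three matroid axioms directly — nonemptiness, downward closure, and the exchange property — but quoting the partition-matroid structure is cleaner.)

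Third, I would note that $\tilde p:2^E\to\mathbb{R}$ is well-defined: it is given as a sum over the elements $(i,D)\in S$ of a finite expression involving the finitely many quantities $p_t(i),g^\pm_{i,t},c^\pm_{i,t}$, so it is a perfectly good set function; no special structure (such as modularity beyond what the formula gives) is demanded by the definition of $d$-MKCP+. I do not expect any genuine obstacle here — the statement is really just asserting that the construction ``type-checks''. The only point requiring a moment's care is the MKC count, where one must remember that the trivial constraints $\tilde K_{t,j}$ for $d_t<j\le d$ are deliberately included precisely so that every stage contributes the same number $d$ of constraints, yielding the clean total $dT$ independently of the individual $d_t$'s.
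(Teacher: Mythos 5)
Your first two observations (the count of $dT$ MKCs, including the trivial zero-capacity constraints for $d_t<j\le d$, and the identification of $\mathcal{I}$ as a partition matroid with blocks $\{i\}\times 2^{[T]}$) are correct and match the paper's argument. The gap is in your third step, where you assert that ``no special structure (such as modularity beyond what the formula gives) is demanded'' of $\tilde p$. That is not true for the problem as defined here: a $d$-MKCP+ instance is a tuple $(I,\mathcal{K},p,\mathcal{I})$ in which $(I,\mathcal{K},p)$ is a $d$-MKCP instance, and in a $d$-MKCP instance the profit is a function $p:I\rightarrow\mathbb{R}_{\geq 0}$ on single items, so the objective must be modular and non-negative (this is also what the PTAS of \cite{fairstein2020tight} invoked later requires). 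Hence the substantive content of the lemma is precisely that $\tilde p$ fits this profit model, and your proof does not check it.

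Both points do need an argument, though each is short. Modularity holds because $\tilde p(S)=\sum_{e\in S}v(e)$ for the per-element value
\[ v((i,D)) = \sum_{t\in D}p_t(i) + \sum_{t\in D:\,t-1\in D}g^+_{i,t} +\sum_{t\notin D:\,t-1\notin D}g^-_{i,t} - \sum_{t\in D:\,t-1\notin D}c^+_{i,t} -\sum_{t\in D:\,t+1\notin D}c^-_{i,t}, \]
which is exactly how the paper argues. Non-negativity is not automatic: because of the subtracted change-cost terms, $v((i,D))$ can be negative, so one cannot simply declare $\tilde p$ nonnegative. The paper handles this by invoking the convention stated in Section~\ref{sec:definition} that items (here, elements) of negative profit are discarded in advance, which is the step your proposal is missing. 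With these two points added, your ``type-checking'' outline becomes a complete proof along the same lines as the paper's.
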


\begin{proof}
Let $\mathcal{Q}=\left((\mathcal{P}_t)_{t=1}^T,g^+,g^-,c^+,c^-\right)$ be an instance of $d$-GMK, where $\mathcal{P}_{t} = \left(I,\mathcal{K}_{t},p_t\right)$ and $\mathcal{K}_t = (K_{t,j})_{j=1}^{d_t}$. Also, let $R(\mathcal{Q})=\left(E,\tilde{\mathcal{K}},\tilde{p},\mathcal{I}\right)$ be the reduced instance of $\mathcal{Q}$ as defined in Definition \ref{def:modular-reduction}. It is easy to see that the set $\mathcal{I}$ is the independent sets of a partition matroid, as for each item $i$ at most one element $(i,D)$ can be chosen. Thus, $\mathcal{I}$ is the family of independent sets of a matroid as required. 

Next, $\mathcal{K}$ defines a tuple of MKCs, so all that is left to prove is that $\tilde{p}$ is non-negative and modular. For each element $(i,D)\in E$ we can define a fixed value
\[ v((i,D)) = \sum_{t\in D}p_t(i) + \sum_{t\in D :~ t-1\in D}g^+_{i,t} +\sum_{t\notin D :~ t-1\notin D}g^-_{i,t} - \sum_{t\in D :~ t-1\notin D}c^+_{i,t} +\sum_{t\in D :~ t+1\notin D}c^-_{i,t} \]
It immediately follows that $\tilde{p}(S) = \sum_{e\in S}v(e)$ and that $\tilde{p}$ is modular.
As stated in Section \ref{sec:definition}, elements with negative values are discarded in advance such that $\tilde{p}$ is also non-negative.
\end{proof}

\begin{lemma}\label{lem:modular-reduction-dir1}
Let $\mathcal{Q}$ be an instance of $d$-GMK with time horizon $T$. For any feasible solution $(S_t,\mathcal{A}_t)_{t=1}^T$ of $\mathcal{Q}$ there exists a feasible solution $\left(S,(\tilde{A}_{t,j})_{t\in [T],j\in [d]}\right)$ of $R(\mathcal{Q})$  such that $f_\mathcal{Q}\left((S_t)_{t=1}^T\right) = \tilde{p}(S)$. 
\end{lemma}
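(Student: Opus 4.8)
The plan is to read off the solution of $R(\mathcal{Q})$ from the ``packing pattern'' of the given multistage solution. For each item $i\in I$ let $D_i=\{t\in[T] : i\in S_t\}$ be the set of stages in which $i$ is packed, and set $S=\{(i,D_i) : i\in I\}$. Since $S$ meets each fibre $\{i\}\times 2^{[T]}$ in exactly one element, $S\in\mathcal{I}$. Note that the element $(i,D_i)$ is kept in $S$ even when $D_i=\emptyset$: this is what will make the ``not packed in $t-1$ nor in $t$'' gains $g^-_{i,t}$ come out right.

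Next I would produce the assignments $(\tilde A_{t,j})_{t\in[T],j\in[d]}$. Fix $t\in[T]$. For $j\le d_t$ the MKC $\tilde K_{t,j}$ has bin set $B_{t,j}$ and capacities $W_{t,j}$ inherited from $K_{t,j}$, and $(i,D)$ has $\tilde w_{t,j}$-weight $w_{t,j}(i)$ if $t\in D$ and $0$ otherwise. For each $i\in S_t$ choose a bin $b^{t,j}_i$ with $i\in A^j_t(b^{t,j}_i)$ (it exists since $A^j_t$ is an assignment of $S_t$) and place $(i,D_i)$ there; for $i\notin S_t$ the element $(i,D_i)$ has weight $0$ in $\tilde K_{t,j}$, so place it in an arbitrary bin of $B_{t,j}$ (first appending a capacity-$0$ dummy bin if $B_{t,j}=\emptyset$, which changes nothing). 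Because $t\in D_i\iff i\in S_t$, the $\tilde w_{t,j}$-weight assigned to any bin $b$ is at most $w_{t,j}(A^j_t(b))\le W_{t,j}(b)$, so $\tilde A_{t,j}$ is feasible, and it clearly assigns every element of $S$. For $d_t<j\le d$ the MKC $\tilde K_{t,j}$ has a single bin of capacity $0$ and the identically-zero weight function, so assign all of $S$ to that bin; this is trivially feasible. Hence $\bigl(S,(\tilde A_{t,j})_{t\in[T],j\in[d]}\bigr)$ is a feasible solution of $R(\mathcal{Q})$.

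It then remains to check $f_{\mathcal Q}\bigl((S_t)_{t=1}^T\bigr)=\tilde p(S)$. Here I would expand $\tilde p(S)=\sum_{i\in I} v((i,D_i))$ using the per-element summand of Definition~\ref{def:modular-reduction}, and sum each of its five groups of terms over $i$. Substituting the equivalence $t\in D_i\iff i\in S_t$ and the conventions $S_0=S_{T+1}=\emptyset$ (so $0,T+1\notin D_i$, which matches the index ranges $[2,T]$ of the gain terms and $[1,T]$ of the cost terms), the five groups become respectively $\sum_{t=1}^T\sum_{i\in S_t}p_t(i)$, $\sum_{t=2}^T\sum_{i\in S_{t-1}\cap S_t}g^+_{i,t}$, $\sum_{t=2}^T\sum_{i\notin S_{t-1}\cup S_t}g^-_{i,t}$, $\sum_{t=1}^T\sum_{i\in S_t\setminus S_{t-1}}c^+_{i,t}$ and $\sum_{t=1}^T\sum_{i\in S_t\setminus S_{t+1}}c^-_{i,t}$, which add up (with the profit and gain groups positive and the cost groups negative) to exactly $f_{\mathcal Q}\bigl((S_t)_{t=1}^T\bigr)$.

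The construction is essentially forced, so the real work --- and the only place an error could creep in --- is this last bookkeeping step, where one must keep the index ranges of $g^\pm$ and $c^\pm$ and the boundary conventions $S_0=S_{T+1}=\emptyset$ straight; the one conceptual point is that dropping the never-packed elements $(i,\emptyset)$ from $S$ would undercount the objective by the missing $g^-$ terms. As a side remark, for this direction we may take the ground set of $R(\mathcal Q)$ to be the full set $I\times 2^{[T]}$; the later pruning of elements with negative $v$-value (needed only to invoke the $dT$-MKCP+ PTAS of \cite{fairstein2020tight}) is harmless since such elements are never used by an optimal solution of $R(\mathcal Q)$.
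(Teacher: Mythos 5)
Your construction is correct and essentially identical to the paper's: the same set $S=\{(i,D_i)\}_{i\in I}$ (including the $(i,\emptyset)$ elements to preserve the $g^-$ terms), the same feasibility argument via the zero weight of $(i,D)$ when $t\notin D$, and the same term-by-term expansion of $\tilde p(S)$ into $f_{\mathcal Q}$; the only cosmetic difference is that the paper dumps all zero-weight elements into one designated bin $b^*\in B_{t,j}$ rather than into arbitrary bins. So the proposal matches the paper's proof.
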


\begin{proof}
Let $\mathcal{Q}=\left((\mathcal{P}_t)_{t=1}^T,g^+,g^-,c^+,c^-\right)$ be an instance of $d$-GMK, where $\mathcal{P}_{t} = \left(I,\mathcal{K}_{t},p_t\right)$ and $\mathcal{K}_t = (K_{t,j})_{j=1}^{d_t}$. Also, let $R(\mathcal{Q})=\left(E,\tilde{\mathcal{K}},\tilde{p},\mathcal{I}\right)$ be the reduced instance of $\mathcal{Q}$, where $\tilde{\mathcal{K}}=\left(\tilde{K}_{t,j }\right)_{t\in [T], j\in [d]}$
and $\tilde{K}_{t,j}=(\tilde{w},B_{t,j}, W_{t,j})$ (see Definition \ref{def:modular-reduction}). Consider some feasible solution $(S_t,\mathcal{A}_t)_{t=1}^T$ for $\mathcal{Q}$, where $\mathcal{A}_t = (A_{t,j})_{j=1}^{d_t}$. In the following we define a solution $\left(S, \left(\tilde{A}_{t,j}\right)_{t\in [T],~j\in [d]}\right)$ for  $R(\mathcal{Q})$.
Let 
\[S= \left\{(i,D)~\middle|~i\in I, ~D=\{t\in[T]~|~i\in S_t \}\right\}\]
It can be easily verified that $S\in \mathcal{I}$. The value of the subset $S$ is 
\[
\begin{aligned}
    &\tilde{p}(S) = \\
    &\sum_{(i,D)\in S}\Bigg(\sum_{t\in D}p_t(i)+\sum_{t\in D :~ t-1\in D} g^+_{i,t} +\sum_{t\notin D :~ t-1\notin D}g^-_{i,t} 
    - \sum_{t\in D:~ t-1\notin D}c^+_{i,t} - \sum_{t\in D:~ t+1\notin D}c^-_{i,t} \Bigg)= \\
   &  \sum_{t=1}^T\sum_{i\in S_t} p_t(i) + \sum_{t=2}^{T}\left( \sum_{i\in S_{t-1}\cap S_{t}} g^+_{i,t} + \sum_{i\notin S_{t-1}\cup S_{t}} g^-_{i,t} \right)
    - \sum_{t=1}^{T}\left( \sum_{i\in S_{t}\setminus S_{t-1}} c^+_{i,t} + \sum_{i\in S_t\setminus S_{t+1}} c^-_{i,t} \right) = \\
    & f_{\mathcal{Q}}\left((S_t)_{t=1}^T \right).
    \end{aligned}
\]
Next, for each $t\in[T],j\in[d]$ we present an assignment $\tilde{A}_{t,j}$ of $S$. Consider the following two cases:
\begin{enumerate}
    \item If $j>d_t$, recall $\tilde{K}_{t,j}=(w_0, \{b\} , W_0)$ where $w_0(i,D)=0$ for all $(i,D)\in E$ and $W_0(b)=0$. We define $\tilde{A}_{t,j}$ by $\tilde{A}_{t,j}(b)=S$. It thus holds that $w_0(\tilde{A}_{t,j}(b))=0=W_0$. That is, $\tilde{A}_{t,j}$ is feasible. 
    \item  If $j\leq d_t$, let $b^*\in B_{t,j}$ be some unique bin in $B_{t,j}$ and define assignment $\tilde{A}_{t,j}:B_{t,j}\rightarrow2^E$ by 
    \begin{equation}
    \label{eq:red_assign}
    \begin{aligned}
        &\tilde{A}_{t,j}(b) = \left(A_{t,j}(b)\times 2^{[T]}\right)\cap S &\forall b\in B_{t,j}\setminus\{b^*\}\\
        &\tilde{A}_{t,j}(b^*) = \left( \left(A_{t,j}(b^*)\times 2^{[T]}\right)\cap S \right) \cup \{(i,D)\in S~|~t\not\in D\}
    \end{aligned}
    \end{equation}
    The assignment $\tilde{A}_{t,j}$ is a feasible assignment w.r.t $\tilde{K}_{t,j}$ since for each bin $b\in B_{t,j}$ it holds that
\[ \sum_{(i,D)\in \tilde{A}_{t,j}(b)}\tilde{w}_{t,j}((i,D)) = \sum_{i\in A_{t,j}(b)}w_{t,j}(i) \leq W_{t,j}(b) \]

Let $(i,D)\in S$. If $i\in S_t$ there is $b\in B_{t,j}$ such that $i\in A_{t,j}(b)$, hence $(i,D)\in \tilde{A}_{t,j}(b)$ by \eqref{eq:red_assign}. If $i\not\in S_t$ then $t\notin D$ and thus $(i,D)\in \tilde{A}_{t,j}(b^*)$. Overall, we have $S\subseteq \bigcup_{b\in B_{t,j}} \tilde{A}_{t,j}(b)$. By \eqref{eq:red_assign} it follows that $S\supseteq \bigcup_{b\in B_{t,j}} \tilde{A}_{t,j}(b)$ as well, thus $S= \bigcup_{b\in B_{t,j}} \tilde{A}_{t,j}(b)$. I.e, $\tilde{A}_{t,j}$ is an assignment of $S$.
\end{enumerate}
Note that the assignments can be constructed in polynomial time. We can conclude that $\left(S,\left(\tilde{A}_{t,j}\right)_{t\in[A], j\in [d]}\right)$ is a feasible solution for $R(\mathcal{Q})$, and its value is $f_{\mathcal{Q}}\left((S_t)_{t=1}^T \right)$.
\end{proof}

\begin{lemma}\label{lem:modular-reduction-dir2}
Let $\mathcal{Q}$ be an instance of $d$-GMK (with arbitrary time horizon $T$). For any feasible solution $\left(S,(\tilde{A}_{t,j})_{t\in [T],j\in [d]}\right)$ for $R(\mathcal{Q})$ a feasible solution  $(S_t,\mathcal{A}_t)_{t=1}^T$ for $\mathcal{Q}$ such that $f_\mathcal{Q}\left((S_t)_{t=1}^T\right) = \tilde{f}(S)$ can be constructed in polynomial time.
\end{lemma}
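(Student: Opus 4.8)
The plan is to mirror the construction in Lemma~\ref{lem:modular-reduction-dir1}, reading it in reverse. Write $\mathcal{Q}=\left((\mathcal{P}_t)_{t=1}^T,g^+,g^-,c^+,c^-\right)$ with $\mathcal{P}_t=(I,\mathcal{K}_t,p_t)$, $\mathcal{K}_t=(K_{t,j})_{j=1}^{d_t}$ and $K_{t,j}=(w_{t,j},B_{t,j},W_{t,j})$, and let $R(\mathcal{Q})=(E,\tilde{\mathcal{K}},\tilde{p},\mathcal{I})$ be as in Definition~\ref{def:modular-reduction}, with $\tilde K_{t,j}=(\tilde w_{t,j},B_{t,j},W_{t,j})$ for $j\le d_t$. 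Given a feasible $\left(S,(\tilde{A}_{t,j})_{t\in[T],j\in[d]}\right)$ for $R(\mathcal{Q})$, the constraint $S\in\mathcal{I}$ means that for each $i\in I$ there is at most one $D$ with $(i,D)\in S$. As a preliminary step I would \emph{complete} $S$: for every $i$ not appearing in any element of $S$, add $(i,\emptyset)$ to $S$ and put it into an arbitrary bin of each $\tilde A_{t,j}$. Since $v((i,\emptyset))=\sum_{t=2}^{T}g^{-}_{i,t}\ge 0$, the element $(i,\emptyset)$ is never removed in the pruning of $E$; it has zero weight in every MKC of $\tilde{\mathcal{K}}$, so feasibility is preserved; and $\tilde p$ only increases. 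After this we may assume $S=\{(i,D_i)\mid i\in I\}$ has exactly one element per item.

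Next, define $S_t=\{\,i\in I\mid t\in D_i\,\}$ for $t\in[T]$, and for $j\le d_t$ and $b\in B_{t,j}$ set $A_{t,j}(b)=\{\,i\in S_t\mid (i,D_i)\in\tilde A_{t,j}(b)\,\}$; put $\mathcal A_t=(A_{t,j})_{j=1}^{d_t}$ (the dummy constraints with $j>d_t$ carry no information and are dropped). For feasibility: $\tilde w_{t,j}((i,D))$ equals $w_{t,j}(i)$ when $t\in D$ and $0$ otherwise, and every $i\in A_{t,j}(b)$ has $t\in D_i$, so $\sum_{i\in A_{t,j}(b)}w_{t,j}(i)=\sum_{(i,D)\in\tilde A_{t,j}(b)}\tilde w_{t,j}((i,D))\le W_{t,j}(b)$. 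Moreover $A_{t,j}$ is an assignment of $S_t$: any $i\in S_t$ has $(i,D_i)\in S$ with $t\in D_i$, and since $\tilde A_{t,j}$ is an assignment of $S$ there is $b$ with $(i,D_i)\in\tilde A_{t,j}(b)$, hence $i\in A_{t,j}(b)$; conversely $\bigcup_b A_{t,j}(b)\subseteq S_t$ directly from the definition. Everything here is clearly polynomial-time computable.

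Finally, $f_{\mathcal{Q}}\left((S_t)_{t=1}^{T}\right)=\tilde p(S)$ is obtained by running the chain of equalities in the proof of Lemma~\ref{lem:modular-reduction-dir1} backwards: writing $\tilde p(S)=\sum_{i\in I}v((i,D_i))$ and using the equivalence $t\in D_i\iff i\in S_t$ one matches, term by term, $\sum_{t\in D_i}p_t(i)$ with $\sum_t\sum_{i\in S_t}p_t(i)$, the two gain sums with $\sum_{t}\sum_{i\in S_{t-1}\cap S_t}g^{+}_{i,t}$ and $\sum_t\sum_{i\notin S_{t-1}\cup S_t}g^{-}_{i,t}$, and the two cost sums with $\sum_t\sum_{i\in S_t\setminus S_{t-1}}c^{+}_{i,t}$ and $\sum_t\sum_{i\in S_t\setminus S_{t+1}}c^{-}_{i,t}$, all under the conventions $S_0=S_{T+1}=\emptyset$. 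I expect the main obstacle to be exactly the interplay of this computation with the completion step: unlike in Lemma~\ref{lem:modular-reduction-dir1}, a feasible solution of $R(\mathcal{Q})$ need not mention every item, and the $g^{-}$ contributions of the missing items must be recovered --- which is why completing $S$ with the weightless, never-pruned elements $(i,\emptyset)$ is both legitimate and what makes the two objective values agree (without it one only gets $f_{\mathcal{Q}}\left((S_t)_{t=1}^{T}\right)\ge\tilde p(S)$, which already suffices for the approximation-preserving reduction). The feasibility argument for the projected assignments, which relies on $\tilde w_{t,j}$ vanishing off the support of $D$, is the only other delicate point.
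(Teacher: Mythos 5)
Your proposal is correct and follows essentially the same route as the paper's proof: project $S$ to the sets $S_t=\{i\mid \exists(i,D)\in S:\,t\in D\}$, project each $\tilde A_{t,j}$ bin-by-bin, use the vanishing of $\tilde w_{t,j}$ off the support of $D$ for feasibility, and rerun the objective computation of Lemma~\ref{lem:modular-reduction-dir1} in reverse. Your extra completion step with the zero-weight elements $(i,\emptyset)$ is a legitimate (and slightly more careful) touch: the paper's own equality chain tacitly assumes every item of $I$ appears in $S$, and as you note, without that assumption one only gets $f_\mathcal{Q}\left((S_t)_{t=1}^T\right)\ge\tilde p(S)$, which is all the approximation-factor-preserving reduction actually requires.
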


The proof of Lemma \ref{lem:modular-reduction-dir2} is similar to the proof of Lemma \ref{lem:modular-reduction-dir1}, thus it is deferred to Appendix~\ref{app:omitted-proofs}.

For any $d$-GMK instance with a fixed time horizon $T$, the reduction $R(\mathcal{Q})$ can be constructed in polynomial (as $|E|=|I|\cdot 2^{|T|}$). The next corollary follows from this observation and lemmas \ref{lem:modular-reduction-dir1} and \ref{lem:modular-reduction-dir2}.

\begin{corollary}
For any fixed $T\in\mathbb{N}$, there exists an approximation factor preserving reduction from $d$-GMK with a time horizon bounded by $T$ to $dT$-MKCP+.
\end{corollary}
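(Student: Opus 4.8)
The plan is to assemble the corollary directly from the three lemmas already established, using the definition of an approximation factor preserving reduction recalled in Appendix~\ref{app:omitted-proofs}. Recall that such a reduction from a maximization problem $\Pi_1$ to a maximization problem $\Pi_2$ consists of a pair of polynomial-time maps: a map $R$ sending an instance $\mathcal{Q}$ of $\Pi_1$ to an instance $R(\mathcal{Q})$ of $\Pi_2$, and a map $\tau$ sending a feasible solution of $R(\mathcal{Q})$ back to a feasible solution of $\mathcal{Q}$, with the property that $\mathrm{OPT}(R(\mathcal{Q})) = \mathrm{OPT}(\mathcal{Q})$ and that $\tau$ does not decrease the (normalized) objective value. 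So the bulk of the work is to verify these bookkeeping conditions.

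First I would fix $T \in \mathbb{N}$ and take $R$ to be the construction of Definition~\ref{def:modular-reduction}, with $d$ in that definition instantiated so that $R(\mathcal{Q})$ is a $dT$-MKCP+ instance (this is exactly the preceding unnumbered lemma, which guarantees $R(\mathcal{Q})$ is a legitimate $dT$-MKCP+ instance). Since $|E| = |I|\cdot 2^{T}$ and $T$ is a fixed constant, $|E|$ is polynomial in $|I|$, and every component of $R(\mathcal{Q})$ — the bins, the weight functions $\tilde w_{t,j}$, and the modular objective $\tilde p$ (equivalently the values $v(\cdot)$) — is computed by a direct, polynomially-bounded enumeration, so $R$ is computable in polynomial time. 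For the reverse map $\tau$ I would use the construction in Lemma~\ref{lem:modular-reduction-dir2}, which, given a feasible solution of $R(\mathcal{Q})$, produces in polynomial time a feasible solution of $\mathcal{Q}$ with the same objective value.

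Next I would check the value-preservation conditions. Lemma~\ref{lem:modular-reduction-dir1} shows that every feasible solution of $\mathcal{Q}$ has an image under a forward map that is feasible for $R(\mathcal{Q})$ with equal objective value, hence $\mathrm{OPT}(R(\mathcal{Q})) \ge \mathrm{OPT}(\mathcal{Q})$; Lemma~\ref{lem:modular-reduction-dir2} shows the symmetric inequality $\mathrm{OPT}(R(\mathcal{Q})) \le \mathrm{OPT}(\mathcal{Q})$ together with the fact that $\tau$ preserves the objective value of any feasible solution it is given. Combining, $\mathrm{OPT}(R(\mathcal{Q})) = \mathrm{OPT}(\mathcal{Q})$ and for every feasible $\tilde s$ of $R(\mathcal{Q})$ we have $f_{\mathcal{Q}}(\tau(\tilde s)) = \tilde p(\tilde s)$, so the approximation ratio achieved on $R(\mathcal{Q})$ is transferred verbatim to $\mathcal{Q}$. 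This is precisely the definition of an approximation factor preserving reduction, completing the proof.

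The only mild subtlety — and the place I would be most careful — is purely notational: Definition~\ref{def:modular-reduction} and the lemmas are stated with the source problem named ``$d$-GMK'' while the target has $dT$ constraints, so one must be sure that the parameter called $d$ in the corollary statement is the same one fed into the definition and that ``$q$-GMK'' / ``$d$-GMK'' are used consistently; there is also a harmless typo in Lemma~\ref{lem:modular-reduction-dir2} ($\tilde f$ versus $\tilde p$) to reconcile. No genuine mathematical obstacle remains: the corollary is a direct corollary, exactly as its name suggests, and the one-line argument is ``$R$ and $\tau$ are polynomial by the size bound $|E| = |I|\cdot 2^T$, and $\mathrm{OPT}$ and per-solution values are preserved by Lemmas~\ref{lem:modular-reduction-dir1} and~\ref{lem:modular-reduction-dir2}.''
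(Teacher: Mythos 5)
Your proposal is correct and follows essentially the same route as the paper: the corollary is obtained by combining the polynomial-time constructibility of $R(\mathcal{Q})$ (since $|E|=|I|\cdot 2^{T}$ with $T$ fixed) with Lemma~\ref{lem:modular-reduction-dir1} for $OPT(R(\mathcal{Q}))\geq OPT(\mathcal{Q})$ and Lemma~\ref{lem:modular-reduction-dir2} for the polynomial-time, value-preserving map back to $\mathcal{Q}$, exactly matching Definition~\ref{def:approx-reduction}. Your extra remark that the optima are in fact equal is a harmless strengthening of what the definition requires.
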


In \cite{fairstein2020tight} a PTAS for $d$-MKCP+ is presented. Thus, the next lemma follows from the above corollary.
\begin{lemma}\label{lem:ptas-for-fixed-horizon}
For any fixed $T\in\mathbb{N}$ there exists a randomized PTAS for $d$-GMK with a time horizon bounded by $T$.
\end{lemma}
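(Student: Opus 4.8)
The plan is to chain together the results already established in this section. The corollary immediately preceding the lemma gives, for any fixed $T\in\mathbb{N}$, an approximation factor preserving reduction from $d$-GMK with time horizon bounded by $T$ to $dT$-MKCP+. Since $d$ and $T$ are both fixed, the target dimension $dT$ is a fixed constant as well, so we may invoke the PTAS for $d'$-MKCP+ from \cite{fairstein2020tight} with $d'=dT$.

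First I would recall the definition of an approximation factor preserving reduction: it consists of a polynomial-time map $f$ sending an instance $\mathcal{Q}$ of the source problem to an instance $f(\mathcal{Q})$ of the target problem, together with a polynomial-time map $g$ sending a feasible solution of $f(\mathcal{Q})$ back to a feasible solution of $\mathcal{Q}$, such that (for maximization problems) $\mathrm{OPT}(f(\mathcal{Q})) \geq \mathrm{OPT}(\mathcal{Q})$ and the value of $g(s)$ is at least the value of $s$. Here $f = R$ and $g$ is the construction from Lemma~\ref{lem:modular-reduction-dir2}; the inequality $\mathrm{OPT}(R(\mathcal{Q})) \geq \mathrm{OPT}(\mathcal{Q})$ follows from Lemma~\ref{lem:modular-reduction-dir1} (in fact both lemmas together show the optima are equal).

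Next I would compose. Given $\epsilon>0$, run the $(1-\epsilon)$-approximation algorithm for $dT$-MKCP+ on $R(\mathcal{Q})$ to obtain a solution $s$ with $\tilde{p}(s) \geq (1-\epsilon)\,\mathrm{OPT}(R(\mathcal{Q})) \geq (1-\epsilon)\,\mathrm{OPT}(\mathcal{Q})$, then apply the map of Lemma~\ref{lem:modular-reduction-dir2} to get a feasible solution $(S_t,\mathcal{A}_t)_{t=1}^T$ for $\mathcal{Q}$ with $f_{\mathcal{Q}}\!\left((S_t)_{t=1}^T\right) = \tilde{p}(s) \geq (1-\epsilon)\,\mathrm{OPT}(\mathcal{Q})$. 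All three steps — constructing $R(\mathcal{Q})$, running the PTAS, and converting the solution back — run in polynomial time for fixed $T$ (the construction of $R(\mathcal{Q})$ is polynomial because $|E| = |I|\cdot 2^{T}$ with $T$ fixed), and randomization is inherited solely from the PTAS of \cite{fairstein2020tight}, which is itself randomized. This yields a randomized PTAS.

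There is essentially no mathematical obstacle here; the statement is a bookkeeping composition of the corollary with the cited PTAS. The only point requiring a sentence of care is observing that the parameter of the target problem, $dT$, remains a fixed constant — so the PTAS of \cite{fairstein2020tight} applies with a fixed dimension — and that the polynomial running time of $R$ relies on $T$ being fixed (otherwise $2^T$ blows up). I would state both observations explicitly and then let the approximation-factor-preserving property do the rest.
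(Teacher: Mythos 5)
Your proposal is correct and follows exactly the paper's route: chain the approximation factor preserving reduction to $dT$-MKCP+ (with $dT$ a fixed constant) with the PTAS of \cite{fairstein2020tight}, converting the solution back via Lemma~\ref{lem:modular-reduction-dir2}. The paper states this composition in one line; you have merely spelled out the same bookkeeping in more detail.
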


\subsection{General Time Horizon}\label{sec:breaking}
In this section we present an algorithm for $d$-GMK with a general time horizon $T$. This is done by cutting the time horizon at several stages into sub-instances. Each sub-instance is optimized independently and then the solutions are combined to create a solution for the complete instance. A somewhat similar technique was used in \cite{bampis2019multistage}.  However, they considered a model without change costs which is much simpler. Our analysis is more delicate as it requires local consideration of the assignment of each item to ensure that any additional costs charged are covered by profit and gains earned. In Appendix \ref{app:submodular-cutting} we present the analysis for Submodular $d$-GMK. Since for this case there are no change costs it is more comparable to the analysis of \cite{bampis2019multistage}.  

Given an instance $\mathcal{Q} = \left((\mathcal{P}_t)_{t=1}^T,g^+,g^-,c^+,c^-\right)$ we  define a sub-instance for the sub-range $[t_1,t_2]$, denoted throughout this section by $\left((\mathcal{P}_t)_{t=t_1}^{t_2},g^+,g^-,c^+,c^-\right)$ without shifting or truncating the gain and change costs vectors. For example, for $t\in[t_1,t_2]$ gain $g^+_{i,t}$ is earned for assigning item $i$ in stages $t$ and $t+1$. Also, observe that stages $t_1-1$ and $t_2+1$ are outside the scope of the instance. Thus, when evaluating a solution for the sub-instance it is assumed that $S_{t_1-1}=S_{t_2+1}=\emptyset$.

Given an integer $T\in \mathbb{N}$, a set of {\it cut points} $U = \{u_0,...,u_k\}$ of $T$ is a set of integers such that for every $j=0,...,k-1$ it holds that $u_j < u_{j+1}$ and $1= u_0 < u_k = T+1$. 

\begin{definition}
Let $\mathcal{Q} = \left((\mathcal{P}_t)_{t=1}^T,g^+,g^-,c^+,c^-\right)$ be an instance of $d$-GMK, where $\mathcal{P}_{t} = \left(I,\mathcal{K}_{t},p_t\right)$. Also, let $U=\{u_0,\ldots,u_k\}$ be a set of cut points. The tuple of $d$-GMK instances $\mathcal{Q}_U=\left((\mathcal{P}_t)_{t=u_j}^{u_{j+1}-1},g^+,g^-,c^+,c^-\right)_{j=0}^{k-1}$ is defined as the cut instances of $\mathcal{Q}$ w.r.t $U$.
\end{definition}

\begin{definition}
Let $\mathcal{Q}$ be an instance of $d$-GMK, $U=\{u_0,...,u_k\}$ be a set of cut points and $\mathcal{Q}_U$ be the respective cut instances. Also, let $\left(\left(S_t,\mathcal{A}_t\right)_{t=u_j}^{u_{j+1}-1}\right)_{j=0}^{k-1}$ be a tuple of feasible solutions for the tuple of cut instances $\mathcal{Q}_U$. Then, the solution $\left(S_t,\mathcal{A}_t\right)_{t=1}^{T}$ for $\mathcal{Q}$ is called a {\it cut solution}. 
\end{definition}

The next corollary elaborates on the relationship between a cut solution and the cut instance solutions from which it is constructed.
\begin{corollary}\label{cor:cut-solution}
Given any $d$-GMK instance $\mathcal{Q}$, cut points $U$, cut instances $\mathcal{Q}_U$ and feasible solutions for the cut instances, the respective cut solution is a feasible solution for $\mathcal{Q}$, and its value is at least the sum of values of the solutions for the cut instances.
\end{corollary}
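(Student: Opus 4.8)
The plan is first to dispose of feasibility, which is immediate: a feasible solution for $\mathcal{Q}$ is by definition any tuple $(S_t,\mathcal{A}_t)_{t=1}^T$ in which each $(S_t,\mathcal{A}_t)$ is feasible for $\mathcal{P}_t$, and since the blocks $[u_j,u_{j+1}-1]$, $j=0,\ldots,k-1$, partition $[1,T]$ and each stage's pair comes from a feasible solution of the cut instance containing that stage, stagewise feasibility is inherited directly. The substance is therefore the value inequality, which I would prove by expanding both sides of $f_{\mathcal{Q}}\big((S_t)_{t=1}^T\big)\ge\sum_{j=0}^{k-1} f_{\mathcal{Q}^{(j)}}\big((S_t)_{t=u_j}^{u_{j+1}-1}\big)$ --- where $\mathcal{Q}^{(j)}=\big((\mathcal{P}_t)_{t=u_j}^{u_{j+1}-1},g^+,g^-,c^+,c^-\big)$ is the $j$-th cut instance --- into the three natural summands of $f$ (profits, gains, change costs) and comparing them one at a time.

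For the profit summand there is nothing to check: the blocks partition $[1,T]$, so $\sum_{i\in S_t}p_t(i)$ is counted exactly once on each side and the two contributions are equal.

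For the gain summand I would observe that the objective of $\mathcal{Q}^{(j)}$ collects gain terms only for stages $t\in[u_j+1,u_{j+1}-1]$ (a gain at stage $t$ couples $S_{t-1}$ and $S_t$, and both must lie inside the block, with $S_{u_j-1}=\emptyset$). Taking the union over $j$ and using $u_0=1$, $u_k=T+1$, these index sets cover $\{2,\ldots,T\}$ except precisely for the internal cut points $u_1,\ldots,u_{k-1}$. Hence the total gain collected by the cut instances equals the gain summand of $f_{\mathcal{Q}}$ minus the quantities $\sum_{i\in S_{u_j-1}\cap S_{u_j}}g^+_{i,u_j}+\sum_{i\notin S_{u_j-1}\cup S_{u_j}}g^-_{i,u_j}$ for $j=1,\ldots,k-1$, which are nonnegative because $g^+,g^-\ge 0$; in particular it is no larger than the gain summand of $f_{\mathcal{Q}}$.

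The change-cost summand is the only delicate part, and I expect it to be the main (though still modest) obstacle. Each $\mathcal{Q}^{(j)}$ charges $c^+_{i,t}$ and $c^-_{i,t}$ over all its stages $t\in[u_j,u_{j+1}-1]$ but with the convention $S_{u_j-1}=S_{u_{j+1}}=\emptyset$; I would argue that at any stage strictly inside a block the neighbours $S_{t-1},S_{t+1}$ coincide with those used by $f_{\mathcal{Q}}$, so the charge is unchanged, and that the two global endpoints $t=1$ and $t=T$ cause no discrepancy either since $S_0=S_{T+1}=\emptyset$ on both sides. The only differences sit at internal cut points: at the first stage $u_j$ of a block the cut instance charges $c^+_{i,u_j}$ for every $i\in S_{u_j}$, whereas $f_{\mathcal{Q}}$ charges it only for $i\in S_{u_j}\setminus S_{u_j-1}$, and symmetrically at the last stage $u_{j+1}-1$ of a block for the $c^-$ charges. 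Since $c^+,c^-\ge 0$, each such difference can only increase the cut instance's charge, so the total change cost subtracted in $\sum_j f_{\mathcal{Q}^{(j)}}$ is at least that subtracted in $f_{\mathcal{Q}}$. Combining the three comparisons --- profits equal, cut-instance gains no larger, cut-instance change costs no smaller --- then yields $\sum_{j} f_{\mathcal{Q}^{(j)}}\le f_{\mathcal{Q}}$, which is exactly the claim. The only thing requiring care is the index bookkeeping: checking that the gain index sets glue up to omit precisely the internal cut points, and that each cut-point change-cost difference has the asserted sign, plus the easy boundary cases at $t=1$ and $t=T$.
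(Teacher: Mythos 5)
Your proposal is correct and follows essentially the same route as the paper's proof: stagewise feasibility is immediate, and the value bound comes from expanding $f_{\mathcal{Q}}$ versus the sum of cut-instance objectives, noting that profits coincide, the nonnegative gain terms at internal cut points are simply dropped, and the cut instances (with the convention $S_{u_j-1}=S_{u_{j+1}}=\emptyset$) can only charge additional nonnegative change costs $c^+_{i,u_j}$ and $c^-_{i,u_{j+1}-1}$ at block boundaries. Your termwise bookkeeping is a slightly more explicit write-up of the single chain of inequalities in the paper, but the argument is the same.
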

The proof of the corollary is fairly simple, and is defered to Appendix \ref{app:omitted-proofs}. We are now ready to present the algorithm for $d$-GMK with general time horizon length.

\begin{algorithm}[H]
	\SetAlgoLined
	\SetKwInOut{Input}{Input}
	\DontPrintSemicolon
	\Input{$0<\epsilon<\frac{1}{4},\phi\geq 1$, a $d$-GMK instance $\mathcal{Q}$ with time horizon $T$ such that $\phi_\mathcal{Q}\leq\phi$, and $\alpha$-approximation algorithm $A$ for $d$-GMK with time horizon $T\leq\frac{2\phi}{\epsilon^2}$.}
	
	Set $\mu = \frac{\epsilon^2}{\phi}$.\;
	
	\For{$j=1,...,\frac{1}{\mu}$}{
	    Set $U_j = \left\{\frac{a}{\mu}+j-1 ~\middle|~  a\in\mathbb{N},~a\geq 1,~ \frac{a}{\mu}+j-1\leq T-\frac{1}{\mu}\right\}\cup\{1,T+1\}$. \label{line:cut-points-set}\;
	    
	    Find a solution for each cut instance in $\mathcal{Q}_{U_j}$ using algorithm $A$ and set $\mathcal{S}_j$ as the respective cut solution.\;
	}
	
	Return the solution $\mathcal{S}_j$ which maximizes the objective function $f_\mathcal{Q}$.
	
	\caption{General Time Horizon}
	\label{alg:break-time-horizon}
\end{algorithm}

Before analysing the algorithm we present several definitions and lemmas that are essential for the proof. First, we start by reformulating the solution. Instead of describing the assignment of items by the tuple $(S_t)_{t=1}^T$, we define a new set of elements $E=I\times[T]\times [T]$, where each element $(i,t_1,t_2)\in E$ states that item $i$ is assigned in the interval $[t_1,t_2]$. Given a feasible solution $(S_t,\mathcal{A}_t)_{t=1}^T$ for $\mathcal{Q}$ we denote the representation of $(S_t)_{t=1}^T$ as a subset of $E$ by $E\left((S_t)_{t=1}^T\right)$ and it is equal to
\[ E\left((S_t)_{t=1}^T\right) = \left\{\left(i,t_1,t_2\right)\in E ~|~ \forall t\in [t_1,t_2] : i\in S_t ~\text{ and }~ i\notin S_{t_1-1}\cup S_{t_2+1}\right\}. \]
If $\tilde{S} = E\left((S_t)_{t=1}^T\right)$, we define the reverse mapping as $\tilde{S}(t) = \{i\in I ~|~ \exists (i,t_1,t_2)\in S : t\in [t_1,t_2]\}= S_t$. Now, we can define a solution for $d$-GMK using our new representation as $(\tilde{S},\mathcal{A}_t)_{t=1}^T$.

\begin{definition}\label{def:element-value}
The {\em value}, $v(e)$, of element $e=(i,t_1,t_2)$ is defined as the total value earned from assigning $i$ in the range $[t_1,t_2]$ minus the change costs charge for assigning and discarding it. Formally,
\[v(e) = \sum_{t=t_1}^{t_2}p_t(i) + \sum_{t=t_1+1}^{t_2}g^+_{i,t} - c^+_{i,t_1} - c^-_{i,t_2} \]
The value of solution $\tilde{S}\subseteq E$ for $d$-GMK instance $\mathcal{Q}$ is $\sum_{e\in \tilde{S}}v(e) + \sum_{t=2}^T\sum_{i\notin \tilde{S}(t-1)\cup\tilde{S}(t)}g^-_{i,t} $ and it is equal to $f_{\mathcal{Q}}\left((\tilde{S}(t))_{t=1}^T\right)$.
\end{definition}

In Algorithm \ref{alg:break-time-horizon} we consider a solution for a tuple of cut instances created by cutting an instance at a set of cut points. Here we consider the opposite action, the effect of cutting a solution at these cut points. We start by considering a single cut point.

\begin{definition}
Given an element $e=(i,t_1,t_2)$ and a cut point $u\in(t_1,t_2]$ we define the outcome of cutting $e$ at $u$ as the set of intervals $u(e) = \{(i,t_1,u-1),(i,u,t_2)\}$. Also, the {\em loss} caused by cutting $e$ at $u$ is defined as the difference between the value of $e$ and the sum of value of elements in $u(e)$. It is denoted by $\ell(e,u)$ and is equal to
\begin{equation}\label{eq:loss-at-cutpoint}
\begin{aligned}
    \ell(e,u)=v(e)-\sum_{e'\in u(e)}v(e') = g^+_{i,u}+c^+_{i,u}+c^-_{i,u-1}
\end{aligned}
\end{equation}
\end{definition}

We can similarly extend the definition to include more than one cut point as follows.

\begin{definition}
Given a set of cut points $U=\{u_0,u_1,\ldots, u_k\}$ and an element $e=(i,t_1,t_2)$ define $U(e)$ as the set of elements created by cutting the $e$ at all cut points in $U$. Formally,
\[
U\left((i,t_1,t_2)\right) = \left\{\left(i,\max\{t_1,u_{r-1}\},\min\{u_r-1, t_2 \} \right) \Big|  r\in [k] \textnormal{ and } [u_{r-1},u_r -1] \cap  [t_1, t_2] \neq \emptyset \right\}
\]
\end{definition}
Consider the following example as a demonstration of the above definition. If $e=(i,t_1,t_2)$ and $(t_1,t_2]\cap U_j = \{u_2,u_3\}$, then $U_j(e) = \{(i,t_1,u_2-1),(i,u_2,u_3-1),(i,u_3,t_2)\}$. 

The definition of loss caused by cutting an element can be extended to a set of cut points $U$. If element $e=(i,t_1,t_2)$ is cut by a of cut points $U$ the loss is
\begin{equation}\label{eq:loss-by-set-of-cuts}
    \ell(e,U)=v(e)-\sum_{e'\in U(e)}v(e')=\sum_{u\in U\cap(t_1,t_2]}\left(g^+_{i,u}+c^+_{i,u}+c^-_{i,u-1}\right) =  \sum_{u\in U\cap(t_1,t_2]}\ell(e,u)
\end{equation} 
since only gains $g^+$ are lost due to cutting as well as change cost for splitting an assignment into two intervals. This means that even if an element is cut multiple times, the loss due to each cut point can be considered separately.

\begin{lemma}\label{lem:cutting-algorithm}
Let $0<\epsilon<\frac{1}{4}$, $\phi\geq 1$ and $A$ be an $\alpha$-approximation algorithm for $d$-GMK with time horizon $T\leq\frac{2\phi}{\epsilon^2}$. Also, let $\mathcal{Q}$ be an instance of $d$-GMK such that $\phi_{\mathcal{Q}}\leq \phi$. Algorithm \ref{alg:break-time-horizon} approximates $\mathcal{Q}$ within a factor of $(1-\epsilon)\alpha$.
\end{lemma}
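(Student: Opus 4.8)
The plan is to fix an optimal solution $OPT$ of $\mathcal{Q}$, written in the interval representation $\tilde S^{*}=E\big((S^{*}_t)_{t=1}^{T}\big)$ of Definition~\ref{def:element-value}, and to exhibit \emph{one} shift $j$ for which the cut solution $\mathcal{S}_j$ has value at least $(1-\epsilon)\alpha\cdot f_\mathcal{Q}(OPT)$; since the algorithm returns the best $\mathcal{S}_j$, this suffices. I will repeatedly use three elementary facts: $f_\mathcal{Q}(OPT)\ge0$ (the empty solution has value $\sum_{t=2}^{T}\sum_i g^{-}_{i,t}\ge0$); $v(e)\ge0$ for every $e\in\tilde S^{*}$ (otherwise deleting $e$ from $OPT$, which can only add non-negative $g^{-}$ terms, would strictly improve it); and consequently $\sum_{e\in\tilde S^{*}}v(e)\le f_\mathcal{Q}(OPT)$ as well as $f_\mathcal{Q}(OPT)\ge\sum_{t=2}^{T}\sum_{i\notin S^{*}_{t-1}\cup S^{*}_t}g^{-}_{i,t}$. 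By the definition of the cut points in Line~\ref{line:cut-points-set}, every cut instance in $\mathcal{Q}_{U_j}$ has time horizon at most $\tfrac{2}{\mu}=\tfrac{2\phi}{\epsilon^{2}}$, so $A$ is applicable to each of them.

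For each shift $j$ I construct a feasible solution for the tuple $\mathcal{Q}_{U_j}$ from $OPT$: cut $\tilde S^{*}$ at $U_j$ and restrict $OPT$'s assignments to each cut instance (still feasible), \emph{except} that every element $e=(i,t_1,t_2)$ with $L(e):=t_2-t_1+1<\theta$ (a ``short'' element; set $\theta=\Theta(\phi/\epsilon)$) is discarded whenever $U_j$ has a cut point in $(t_1,t_2]$. By \eqref{eq:loss-by-set-of-cuts} together with the bookkeeping of the $g^{-}$ terms, the total value of this solution over the cut instances is at least $f_\mathcal{Q}(OPT)-\mathrm{Loss}_j$, where $\mathrm{Loss}_j$ is at most the sum of: (a) $\sum_{u\in U_j\cap(t_1,t_2]}\!\big(c^{+}_{i,u}+c^{-}_{i,u-1}\big)$ summed over all \emph{long} cut elements $e$; (b) $\sum_{u\in U_j\cap(t_1,t_2]}g^{+}_{i,u}$ over long cut elements, plus $\sum_{1<u\le T,\,u\in U_j}\sum_{i\notin S^{*}_{u-1}\cup S^{*}_u}g^{-}_{i,u}$; and (c) $\sum v(e)$ over the discarded short elements. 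By Corollary~\ref{cor:cut-solution} and the fact that $A$ is an $\alpha$-approximation on each cut instance, $\max_j f_\mathcal{Q}(\mathcal{S}_j)\ge\alpha\big(f_\mathcal{Q}(OPT)-\min_j\mathrm{Loss}_j\big)$, so it remains to prove $\mu\sum_{j=1}^{1/\mu}\mathrm{Loss}_j\le\epsilon\cdot f_\mathcal{Q}(OPT)$.

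For (b), each position $u\in\{2,\dots,T\}$ lies in $U_j$ for at most one $j$, so averaging over $j$ replaces both sums by $\mu$ times quantities that are $\le f_\mathcal{Q}(OPT)$ for the $g^{-}$ part and $\le\sum_{e\text{ long}}\sum_{u=t_1+1}^{t_2}g^{+}_{i,u}$ for the $g^{+}$ part; writing $p^{\min}_i=\min_t p_t(i)$, for a long element $v(e)\ge0$ and $\max_t\{c^{+}_{i,t},c^{-}_{i,t}\}\le\phi\,p^{\min}_i\le\phi\,P(e)/L(e)\le\epsilon\,P(e)$ give $P(e)\le O(1)\,v(e)$, hence $\sum_{u=t_1+1}^{t_2}g^{+}_{i,u}=v(e)-P(e)+c^{+}_{i,t_1}+c^{-}_{i,t_2}\le O(1)\,v(e)$ and the $g^{+}$ part is $O(1)\sum_e v(e)\le O(1)\,f_\mathcal{Q}(OPT)$; thus the average of (b) is $O(\mu)\,f_\mathcal{Q}(OPT)=O(\epsilon^{2})\,f_\mathcal{Q}(OPT)$. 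For (c), a short element straddles a cut point for at most $L(e)-1<\theta$ of the $1/\mu$ shifts, i.e.\ a fraction $<\mu\theta=O(\epsilon)$, so the average of (c) is $O(\epsilon)\sum_e v(e)\le O(\epsilon)\,f_\mathcal{Q}(OPT)$. For (a), within each cut instance the cut points are $\tfrac1\mu$ apart, so a long element is split into pieces that --- with at most two exceptions per element --- have length exactly $\tfrac1\mu$ and hence profit at least $\tfrac1\mu p^{\min}_i$; charging each cut's cost $c^{+}_{i,u}+c^{-}_{i,u-1}\le2\phi\,p^{\min}_i$ to the profit of item $i$ in an adjacent full-length piece (and, for a ``barely long'' element that is cut only once, to the larger of its two parts, of profit at least $\tfrac{\theta}{2}p^{\min}_i$) shows that for \emph{every} $j$, (a) is at most $O(\mu\phi+\phi/\theta)\sum_{e\text{ long}}P(e)=O(\epsilon)\sum_{e\text{ long}}P(e)\le O(\epsilon)\,f_\mathcal{Q}(OPT)$. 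Summing (a)--(c) and fixing the constants in $\theta$ (rescaling $\epsilon$ by a constant if necessary) yields $\mu\sum_j\mathrm{Loss}_j\le\epsilon\,f_\mathcal{Q}(OPT)$, completing the proof.

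The main obstacle is part (a): change costs charged at cut points are not controlled by the profit-cost ratio on their own, so a fixed cut set is not automatically cheap --- a single cut of a short interval whose profit is consumed by its own endpoint change costs can cost more than everything that interval earns. Splitting intervals by length, paying each extra change cost out of the profit the same item earns in a neighbouring slab of length $\Theta(1/\mu)$, and for the leftover short intervals exploiting that a uniformly random shift straddles them with probability only $O(\epsilon)$, is precisely the reason $\mu$ is set to $\epsilon^{2}/\phi$ rather than $\epsilon/\phi$.
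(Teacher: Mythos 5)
Your proof is correct, and its skeleton is the one the paper uses: shifted families $U_1,\dots,U_{1/\mu}$ of cut points, the interval representation of the optimum, a short/long dichotomy at threshold $\Theta(\phi/\epsilon)$, an averaging (``best shift'') argument, and the profit--cost ratio to absorb change costs; the observations $v(e)\ge 0$ for all $e\in\tilde S^*$ and $\sum_e v(e)\le f_{\mathcal Q}(\mathrm{OPT})$ that you make explicit are also implicitly what the paper relies on. The one genuine deviation is the handling of long elements at cut points. The paper discards every piece of length less than $\phi$ created by a cut and charges the discarded value together with the cut loss $g^+_{i,u}+c^+_{i,u}+c^-_{i,u-1}$ to the profits and $g^+$-gains of item $i$ in a window of width $2\phi$ around $u$, then exploits disjointness of the $U_j$'s to sum this over all shifts; you instead keep \emph{all} pieces (including possibly negative-value stubs) and bound the change-cost loss deterministically for every shift, charging each cut to an adjacent full piece of length $1/\mu$ (or to the larger half of a barely-long element). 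Your scheme is sound and arguably cleaner for the change costs, since it needs no averaging for term (a); the paper's window-charging buys slightly tighter constants. That is the only caveat: with $\theta=\phi/\epsilon$ your estimates give $(1-c\epsilon)\alpha$ for some absolute constant $c>1$, and since Lemma~\ref{lem:cutting-algorithm} fixes $\mu=\epsilon^2/\phi$ inside Algorithm~\ref{alg:break-time-horizon}, ``rescaling $\epsilon$'' analyzes a (harmlessly) different parameter setting; this does not affect Theorem~\ref{thm:ptas-for-modular}, but matching the stated factor $(1-\epsilon)\alpha$ verbatim requires tightening the constants along the lines of the paper's inequalities \eqref{eq:bound-short-elements}--\eqref{eq:bound-gains-term}.
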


\begin{proof}
Let $\mathcal{Q}=\left((\mathcal{P}_t)_{t=1}^T,g^+,g^-,c^+,c^-\right)$ be an instance of $d$-GMK with time horizon $T$ and profit-cost ratio $\phi_\mathcal{Q}\leq\phi$. We assume for simplicity $\phi$ is integral. Let $0<\epsilon<\frac{1}{4}$ and $\mu = \frac{\epsilon^2}{\phi}$. Also, let $A$ be an $\alpha$-approximation algorithm for $d$-GMK with time horizon $T'\leq\frac{2\phi}{\epsilon^2}=\frac{2}{\mu}$. Note, if $T\leq\frac{2}{\mu}$, the cut points set $U_0$ is an empty set, and in this case  $A$  returns an $\alpha$-approximation solution for $\mathcal{Q}$ as required. 

Let $U_j = \{u^j_1,\ldots, u^j_{k_j}\}$ for $j=1,\ldots \frac{1}{\mu}$. We show that there exists a set of cut points $U_j$ and a tuple of solutions $\left((S_t,\mathcal{A}_t)_{t=u^j_r}^{u^j_{r+1}-1}\right)_{r=0}^{k_j-1}$ for each cut instance in $\mathcal{Q}_{U_j}=(q_j^r)_{r=0}^{k_j-1}$, such that the sum of values of the solutions, $\sum_{r=0}^{k_j-1}f_{q_j^r}\left((S_t)_{t=u_r}^{u_{r+1}-1}\right)$, is sufficiently large. From Corollary \ref{cor:cut-solution} it follows that the value of a cut solution is larger than the sum of its parts (due to lost gains and change costs saved if an item is assigned in adjacent instances). Thus, this also proves that the maximum cut solution found is sufficiently large as well. 

Let $(S^*_t,\mathcal{A}^*_t)_{t=1}^T$ be an optimal solution for $\mathcal{Q}$, and let $\tilde{S}^* = E\left((S^*_t)_{t=1}^T\right)$. We partition $\tilde{S}^*$ into two subsets by the length of the interval they describe. Formally, $X=\left\{(i,t_1,t_2)\in \tilde{S}^* ~|~ t_2-t_1< \frac{\phi}{\epsilon}\right\}$ and $Y=\tilde{S}^*\setminus X$. So $X$ contains short intervals, and $Y$ contains long intervals.

Define $\tilde{S}_j$ as the subset of elements longer than $\phi$ in $\cup_{e\in Y}U_j(e)$ as well as short elements $e\in X$ that are not cut by $U_j$. I.e., 
\[ \tilde{S}_j = \{e\in X ~|~ U_j(e) = \{e\}\}\cup\bigcup_{e\in Y}\left\{(i,t_1,t_2)\in U_j(e) ~|~ t_2-t_1\geq \phi\right\} \]
At each stage $t\in [T]$ it holds that $\tilde{S}_j(t)\subseteq S^*_t$. Thus there exists a tuple of assignments, denoted by $\mathcal{A}^j_t$, such that $(\tilde{S}_j,\mathcal{A}^j_t)_{t=1}^T$ is a feasible solution for $\mathcal{Q}$. We partition set $\tilde{S}_j$ as follows. Let $\tilde{S}_{j,r} = \left\{(i,t_1,t_2)\in \tilde{S}_j~|~ [t_1,t_2]\subseteq [u^j_r,u^j_{r+1}-1]\right\}$. It holds that $\tilde{S}_j = \cup_{r=0}^{k_j-1}\tilde{S}_{j,r}$ as each element $(i,t_1,t_2)$ is contained in exactly one interval $[u^j_r,u^j_{r+1}-1]$. Thus  $\left((\tilde{S}_{j,r},\mathcal{A}^j_t)_{t=u^j_{r}}^{u^j_{r+1}-1}\right)_{r=0}^{k_j-1}$ is a tuple of feasible solutions for the cut instances in $\mathcal{Q}_{U_j}$ such that $(\tilde{S}_{j,r},\mathcal{A}^j_t)_{t=u^j_r}^{u^j_{r+1}-1}$ is a solution for the $r$-th instance. The value of all elements in the defined solutions for the cut instances is
\[
\sum_{r=0}^{k_j-1}\sum_{e\in\tilde{S}_{j,r}}v(e) = \sum_{e\in \tilde{S}_{j}}v(e) = \sum_{e\in X : U_j(e)=\{e\}}v(e) + \sum_{e\in Y} \sum_{e'=(i,t_1,t_2)\in U_j(e) : t_2-t_1\geq \phi}v(e')
\]
Thus, after including the value of gains $g^-$ earned by solutions $\tilde{S}_{j,r}$, we can bound the total value of optimal solutions for all cut instances $\mathcal{Q}_{U_j}$ by 
\begin{equation}\label{eq:lower-bound-cut-instances}
\sum_{e\in X : U_j(e)=\{e\}}v(e) + \sum_{e\in Y} \sum_{\begin{array}{cc}
    {\scriptstyle   e'=(i,t_1,t_2)\in U_j(e) :} \\
    {\scriptstyle  t_2-t_1\geq \phi}
\end{array}}v(e') +\sum_{\begin{array}{cc}
{\scriptstyle t\in [u^j_r+1,u^j_{r+1}-1]: }\\
{\scriptstyle u^j_r\in U_j }
\end{array}}\sum_{
i\notin \tilde{S}_{j,r}(t-1)\cup\tilde{S}_{j,r}(t)}g^-_{i,t}
\end{equation}

We define $\mathcal{B}$ as the total sum of values of optimal solutions for the cut instances $\left(\mathcal{Q}_{U_j}\right)_{j=1}^{\frac{1}{\mu}}$. By utilizing Equation \eqref{eq:lower-bound-cut-instances} we can bound $\mathcal{B}$ as follows.
\begin{equation}\label{eq:bound-three-terms}
\begin{aligned}
    \mathcal{B}&\geq \sum_{j=1}^{\frac{1}{\mu}}\sum_{r=0}^{k_j-1}\left(\sum_{e\in \tilde{S}_{j,r}}v(e) + \sum_{t\in [u^j_r+1,u^j_{r+1}-1]: u^j_r\in U_j}\sum_{i\notin \tilde{S}_{j,r}(t-1)\cup\tilde{S}_{j,r}(t)}g^-_{i,t}\right) \\
    &= \sum_{j=1}^{\frac{1}{\mu}}\sum_{e\in\tilde{S}_j}v(e) + \sum_{j=1}^{\frac{1}{\mu}}\sum_{t\in [2,T]\setminus U_j}\sum_{i\notin \tilde{S}_{j}(t-1)\cup\tilde{S}_{j}(t)}g^-_{i,t}  \\
    &=\sum_{e\in X}\sum_{j\in \left[\frac{1}{\mu}\right] : U_j(e)=\{e\}}v(e) +
    \sum_{e\in Y}\sum_{j=1}^{\frac{1}{\mu}} \sum_{e'=(i,t_1,t_2)\in U_j(e) : t_2-t_1\geq \phi}v(e')\\ 
    & ~~~~~ +\sum_{j=1}^{\frac{1}{\mu}}\sum_{t\in [2,T]\setminus U_j}\sum_{i\notin\tilde{S}_{j}(t-1)\cup\tilde{S}_{j}(t)}g^-_{i,t}
\end{aligned}
\end{equation}
We bound the value of each of the three terms separately by comparing it to the value of the optimal solution.

Consider the first term, value earned from short elements, i.e., elements $e=(i,t_1,t_2)\in X$. It holds that $e\in\tilde{S}_j$ if and only if $U_j(e)=\{e\}$ which means that $(t_1,t_2]\cap U_j=\emptyset$. Since for every $j_1\neq j_2$ it holds that $U_{j_1}\cap U_{j_2} = \{1,T+1\}$ and since there are $\frac{1}{\mu}$ sets of cut point, for each element $e\in X$ it holds that $e\in \tilde{S}_j$ for at least $\frac{1}{\mu} - \frac{\phi}{\epsilon}$ values of $j\in[\frac{1}{\mu}]$. Thus,
\begin{equation}\label{eq:bound-short-elements}
\sum_{e\in X}\sum_{j\in \left[\frac{1}{\mu}\right] : U_j(e)=\{e\}}v(e) \geq \left(\frac{1}{\mu}-\frac{\phi}{\epsilon}\right)\sum_{e\in X}v(e) = \frac{1}{\mu}\left(1-\epsilon\right)\sum_{e\in X}v(e)
\end{equation}

Next, we bound the second term, the value earned from long elements, $e\in Y$. Consider the set of cut points $U_j$. Two operators are applied to each long element. First, it is cut and the subset $U_j(e)$ is defined. Second, short elements are discarded from $U_j(e)$. The resulting subset is $\{(i,t_1,t_2)\in U_j(e) ~|~ t_2-t_1\geq \phi\}$ and therefore we would like to bound the difference
\[ \sum_{e\in Y}\left(v(e)-\sum_{e'\in\{(i,t_1,t_2)\in U_j(e) ~|~ t_2-t_1\geq \phi\}}v(e')\right) \]
Consider an element $e=(i,t_1,t_2)\in Y$ cut by cut points set $U_j$. As shown in Equation \eqref{eq:loss-by-set-of-cuts}, the loss caused by cutting $e$ at cut point $u\in(t_1,t_2]$ is independent of other cuts that are applied to $e$ and is equal to $\ell(e,u)$. Thus we can consider each cut point separately.

As mentioned above, if $e=(i,t_1,t_2)\in Y$, the second operator discards elements $e'\in U_j(e)$ that are short. Since the distance between each pair of cut points in $U_j$ is at least $\frac{1}{\mu}=\frac{\phi}{\epsilon^2}>\phi$, each such short element $e'$ is either $(i,t_1,u)$ or $(i,u,t_2)$ for some unique cut point $u\in U_j$. In addition, it must hold that either $u-t_1<\phi$ or $t_2-u<\phi$. We associate the value lost by discarding $e'$ to this unique cut point $u$.

Let $e=(i,t_1,t_2)\in Y$ and $u\in U_j$ be a cut point such that $u\in(t_1,t_2]$, i.e., $u$ cuts $e$. There are three cases to consider.
\begin{enumerate}
    \item If $u-t_1<\phi$, element $e'=(i,t_1,u-1)\in U_j(e)$ is discarded and a loss of $v(e')$ is associated with $u$ in addition to $\ell(e,u)$. Thus the total loss is at most 
    \begin{align*}
        v(e') + \ell(e,u) &= \sum_{t=t_1}^{u-1}p_t(i) + \sum_{t=t_1+1}^{u-1}g^+_{i,t} -c^+_{i,t_1}-c^-_{i,u-1} + g^+_{i,u}+c^+_{i,u}+c^-_{i,u-1} \leq \\
        &\leq \sum_{t=t_1}^{u-1}p_t(i)+\sum_{t=t_1+1}^{u}g^+_{i,t}+c^+_{i,u-1} \leq \sum_{t=t_1}^{u+\phi-1}p_t(i)+\sum_{t=t_1+1}^{u+\phi-1}g^+_{i,t}
    \end{align*}
    where the equality is due to Equation \eqref{eq:loss-at-cutpoint} and the last inequality is due to the profit-cost ratio.
    
    \item If $t_2-u<\phi$, element $e'=(i,u,t_2)\in U_j(e)$ is discarded and a loss of $v(e')$ is associated with $u$ in addition to $\ell(e,u)$. Thus the total loss is at most  
    \begin{align*}
        v(e') + \ell(e,u) &= \sum_{t=u}^{t_2}p_t(i) + \sum_{t=u+1}^{t_2}g^+_{i,t} -c^+_{i,u}-c^-_{i,t_2} + g^+_{i,u}+c^+_{i,u}+c^-_{i,u-1} \leq \\
        &\leq \sum_{t=u}^{t_2}p_t(i)+\sum_{t=u}^{t_2}g^+_{i,t}+c^-_{i,u-1} \leq \sum_{t=u-\phi}^{t_2}p_t(i)+\sum_{t=u-\phi+1}^{t_2}g^+_{i,t}
    \end{align*}
    where the equality is due to Equation \eqref{eq:loss-at-cutpoint} 
    and the last inequality is due to the profit-cost ratio.
    
    \item If $t_2-u\geq\phi$ and $u-t_1\geq\phi$, no elements are discarded from $U_j(e)$. Thus the only loss is $\ell(e,u)$ and can be bounded by
    \[ \ell(e,u) = g^+_{i,u}+c^+_{i,u}+c^-_{i,u-1} \leq \sum_{t=u-\phi}^{u+\phi-1}p_t(i)+\sum_{t=u-\phi+1}^{u+\phi-1}g^+_{i,t} \]
\end{enumerate}
Overall we can bound the loss induced by cutting long elements at cut points $U_j$ (due to loss $\ell(e,u)$ and discarded short elements) by
\[ \sum_{(i,t_1,t_2)\in Y}\sum_{u\in (t_1,t_2]\cap U_j}\left(\sum_{t=\max\{t_1,u-\phi\}}^{\min\{t_2,u+\phi-1\}}p_t(i)+\sum_{t=\max\{t_1+1,u-\phi+1\}}^{\min\{t_2,u+\phi-1\}}g^+_{i,t}\right) \]
This means that the total value gained from elements that were originally in $Y$ is
\begin{align*}
    &\sum_{e\in Y}\sum_{j=1}^{\frac{1}{\mu}} \sum_{e'=(i,t_1,t_2)\in u(e,U_j) : t_2-t_1\geq \phi}v(e') \geq \\
    &\frac{1}{\mu}\sum_{e\in Y}v(e) - \sum_{j=1}^{\frac{1}{\mu}}\sum_{(i,t_1,t_2)\in Y}\sum_{u\in [t_1+1,t_2]\cap U_j}\left(\sum_{t=\max\{t_1,u-\phi\}}^{\min\{t_2,u+\phi-1\}}p_t(i)+\sum_{t=\max\{t_1+1,u-\phi+1\}}^{\min\{t_2,u+\phi-1\}}g^+_{i,t}\right) \geq \\
    & \frac{1}{\mu}\sum_{e\in Y}v(e) - \sum_{(i,t_1,t_2)\in Y}\sum_{u\in[t_1+1,t_2]}\left(\sum_{t=\max\{t_1,u-\phi\}}^{\min\{t_2,u+\phi-1\}}p_t(i)+\sum_{t=\max\{t_1+1,u-\phi+1\}}^{\min\{t_2,u+\phi-1\}}g^+_{i,t}\right) \geq \\
    & \frac{1}{\mu}\sum_{e\in Y}v(e) -2\phi\sum_{(i,t_1,t_2)\in Y}\left(\sum_{t\in[t_1,t_2]}p_t(i)+\sum_{t\in[t_1+1,t_2]}g^+_{i,t}\right)
\end{align*}
where the second inequality follows from the fact that for every $j_1\neq j_2$ it holds that $U_{j_1}\cap U_{j_2}=\{1,T+1\}$. The last inequality is due to the fact that the profit and gains of element $(i,t_1,t_2)$ is lost at stage $t\in[t_1,t_2]$ if it is cut by a cut point $u$ such that $|u-t|\leq\phi$. Thus, its value is lost in at most $2\phi$ instances. Due to the profit-cost ratio, for each long element $e=(i,t_1,t_2)\in Y$ it holds that
\[ c^+_{i,t_1}+c^-_{i,t_2}\leq 2\phi\cdot\frac{\sum_{t=t_1}^{t_2}p_t(i)+\sum_{t=t_1+1}^{t_2}g^+_{i,t}}{t_2-t_1} \leq \epsilon\cdot\sum_{t=t_1}^{t_2}p_t(i)+\epsilon\cdot\sum_{t=t_1+1}^{t_2}g^+_{i,t} \]
By substituting $4\phi(c^+_{i,t_1}+c^-_{i,t_2})\leq 4\phi\epsilon\left(\sum_{t=t_1}^{t_2}p_t(i)+\sum_{t=t_1+1}^{t_2}g^+_{i,t}\right)$ we get that
\begin{align*}
    & \frac{1}{\mu}\sum_{e\in Y}v(e) -2\phi\sum_{(i,t_1,t_2)\in Y}\left(\sum_{t=t_1}^{t_2}p_t(i)+\sum_{t=t_1+1}^{t_2}g^+_{i,t}\right) = \\
    & \sum_{(i,t_1,t_2)\in Y}\left(\frac{1}{\mu}\left(\sum_{t=t_1}^{t_2}p_t(i) + \sum_{t=t_1+1}^{t_2}g^+_{i,t} - c^+_{i,t_1}-c^-_{i,t_2}\right) -2\phi\left(\sum_{t=t_1}^{t_2}p_t(i)+\sum_{t=t_1+1}^{t_2}g^+_{i,t}\right)\right) \geq \\
    & \sum_{(i,t_1,t_2)\in Y}\left(\left(\frac{1}{\mu}-2\phi-4\phi\epsilon\right)\left(\sum_{t=t_1}^{t_2}p_t(i) + \sum_{t=t_1+1}^{t_2}g^+_{i,t}\right) - \left(\frac{1}{\mu}-4\phi\right)\left(c^+_{i,t_1}+c^-_{i,t_2}\right)\right) \geq \\
    & \left(\frac{1}{\mu}-2\phi-4\phi\epsilon\right)\sum_{e\in Y}v(e) \geq \left(\frac{1}{\mu}-\frac{\phi}{\epsilon}\right)\sum_{e\in Y}v(e)
\end{align*}
where the last inequality follows the fact that $\epsilon<\frac{1}{4}$. Overall, we get that
\begin{equation}\label{eq:bound-long-elements}
    \sum_{e\in Y}\sum_{j=1}^{\frac{1}{\mu}} \sum_{e'=(i,t_1,t_2)\in U_j(e) : t_2-t_1\geq \phi}v(e')  \geq \left(\frac{1}{\mu}-\frac{\phi}{\epsilon}\right)\sum_{e\in Y}v(e)
\end{equation}

Lastly, we bound the third term, gains $g^-$ earned in all cut instance solutions. Consider a cut point set $U_j$ and gain $g^-_{i,t}$ earned in solution $\tilde{S}^*$, i.e., $i\notin S^*_{t-1}\cup S^*_t$. Therefore, any element $(i,t_1,t_2)$ such that $t\in [t_1,t_2]$ or $t-1\in[t_1,t_2]$ is not in $\tilde{S}_j$. Thus, unless $t\in U_j$, gain $g^-_{i,t}$ is earned in $\tilde{S}_j$ and in some solution  $\tilde{S}_{j,r}\subseteq \tilde{S}_j$ such that $t\in [u^j_r,u^j_{r+1}-1]$ and $u^j_r\in U_j$. Again, we can use the fact that for every $j_1\neq j_2$ it holds that $U_{j_1}\cap U_{j_2}=\{1,T+1\}$ and get that
\begin{equation}\label{eq:bound-gains-term}
\begin{aligned}
    \sum_{j=1}^{\frac{1}{\mu}}\sum_{t\in [2,T]\setminus U_j}\sum_{i\notin \tilde{S}_j(t-1)\cap\tilde{S}_j(t)}g^-_{i,t} \geq 
    &\left(\frac{1}{\mu}-1\right)\sum_{t=1}^T\sum_{i\notin \tilde{S}^*(t-1)\cap\tilde{S}^*(t)}g^-_{i,t}
\end{aligned}
\end{equation}

By substituting inequalities \eqref{eq:bound-short-elements},\eqref{eq:bound-long-elements} and \eqref{eq:bound-gains-term} in Inequality \eqref{eq:bound-three-terms} we get that
\[ \mathcal{B} \geq \left(\frac{1}{\mu}-\frac{\phi}{\epsilon}\right)\left(\sum_{e\in \tilde{S}^*}v(e)+\sum_{t\in [2,T]}\sum_{i\notin \tilde{S}^*(t-1)\cap\tilde{S}^*(t)}g^-_{i,t}\right) = \left(\frac{1}{\mu}-\frac{\phi}{\epsilon}\right)f_\mathcal{Q}((S^*_t)_{t=1}^T) \]

For each set of values, their average is smaller or equal to their maximum value. Thus there must exist at least one set of cut points $U_{j^*}$ such that the sum of values of the solutions $\left(\tilde{S}_{j^*,r}\right)_{r=0}^{k_j-1}$ for its cut instances, $\mathcal{Q}_{U_j^*}$, is at least $\mu\cdot\mathcal{B}$, the average value of a set of solutions for a set of cut instance $\mathcal{Q}_{U_j}$ (for $j=1,\ldots,\frac{1}{\mu}$). We get that
\[\sum_{e\in\tilde{S}_{j^*}}v(e) + \sum_{t\in [2,T]}\sum_{i\notin \tilde{S}_{j^*}(t-1)\cap\tilde{S}_{j^*}(t)}g^-_{i,t} \geq \mu\mathcal{B} =  \left(1-\epsilon\right)f_\mathcal{Q}\left((S^*_t)_{t=1}^T\right)\]
At iteration $j^*$, in which Algorithm \ref{alg:break-time-horizon} considers the cut instances $\mathcal{Q}_{U_{j^*}}$, algorithm $A$ provides an approximate solution for each cut instance. Thus the value of the solution returned by $A$ for the $r$-th cut instance is at least 
\[ \alpha\cdot\left(\sum_{e\in \tilde{S}_{j^*,r}}v(e) + \sum_{t\in[u^j_r+1,u^j_{r+1}-1]:u^j_r\in U_j}\sum_{i\notin\tilde{S}_{j^*,r}(t-1)\cup\tilde{S}_{j^*,r}(t)}g^-_{i,t} \right) \]
Summing over all cut instances in $\mathcal{Q}_{U_{j^*}}$ provides a solution with value at least
\[ \alpha\cdot\left(\sum_{e\in\tilde{S}_{j^*}}v(e) + \sum_{t\in [2,T]}\sum_{i\notin \tilde{S}_{j^*}(t-1)\cap\tilde{S}_{j^*}(t)}g^-_{i,t}\right) \geq\left(1-\epsilon\right)\cdot\alpha\cdot f_\mathcal{Q}\left((S^*_t)_{t=1}^T\right) \]
\end{proof}
The correctness of Theorem \ref{thm:ptas-for-modular} follows immediately from Theorem \ref{lem:cutting-algorithm} and Lemma \ref{lem:ptas-for-fixed-horizon}.

\section{Hardness Results}\label{sec:hardness}
In this section we present two hardness results for $1$-GMK. First, we show no constant approximation ratio exists for $1$-GMK (with unbounded profit-cost ratio), even if there is only one bin per stage. Then, we show that even if we wither down the model by removing the change costs, limiting the time horizon length to $T=2$, and only having a single bin per stage, the problem still does not admit an EPTAS.

The above results are proved by showing an approximation preserving reduction from  $d$-Dimensional Knapsack ($d$-KP) and Multidimensional Knapsack. For $d\in \mathbb{N}$, in $d$-KP we are given a set of items $I$, each equipped with a profit $p_i$, as well as a $d$-dimensional weight vector $\bar{w}^i\in [0,1]^d$. We denote $j$-th coordinate of $\bar{w}^i$ by $\bar{w}^i_j$. In addition, we are given a single bin equipped with a $d$-dimensional capacity vector $\bar{W}\in\mathbb{R}_{\geq  0}^d$. A subset $S\subseteq I$ is a feasible solution if $\sum_{i\in S}\bar{w}^i\leq \bar{W}$. The objective is to find a feasible solution $S$ which maximizes $\sum_{i\in S}p_i$. 

The Multidimensional Knapsack problem is the generalization of $d$-KP in which $d$ is not fixed. That is, the input for the problem is a $d$-KP instance for some $d\in \mathbb{N}$. The  solutions and their values are the solution and values of the $d$-KP instance.

Note that $d$-KP is a special case of $d$-MKCP, where the set of MKCs is $\mathcal{K}=(K_j)_{j=1}^d$. The $j$-th MKC is $K_j=(w_j,B_j,W_j)$, where $w_j(i)=\bar{w}_j^i, B_j=\{b\}$ and $W_j(b)=\bar{W}_j$, where $\bar{W}_j$ is the $j$-th coordinate of the capacity vector $\bar{W}$. Finally, the profit function $p:I\rightarrow\mathbb{R}_{\geq  0}$ is defined as $p(i) = p_i$ for any $i\in I$. For simplicity we will use this notation for $d$-KP and Multidimensional Knapsack throughout this section.

\label{sec:d-dim-kp-hardness}



\begin{lemma}\label{lem:d-kp-reduction}
There is an approximation preserving reduction from the Multidimensional Knapsack problem to $1$-GMK with a single bin in each stage.
\end{lemma}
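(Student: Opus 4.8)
The plan is to encode a Multidimensional Knapsack instance, which has an unbounded number $d$ of weight dimensions, into a $1$-GMK instance, where each stage has only a single one-dimensional knapsack (one MKC with one bin), but there are many stages. The key idea is to use the time horizon to simulate the dimensions: set the number of stages to $T=d$, and in stage $t$ enforce the $t$-th dimensional capacity constraint as the capacity of the single bin in that stage. The continuity mechanism of $d$-GMK is then exploited to force the selected set of items to be \emph{the same} in every stage, so that a feasible solution across all stages corresponds exactly to a single feasible set for the Multidimensional Knapsack instance.

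Concretely, given a $d$-KP instance $(I,p,(\bar w^i)_{i\in I},\bar W)$, I would construct a $1$-GMK instance $\mathcal{Q}$ with $T=d$ stages as follows. For stage $t\in[T]$, the single MKC is $K_t=(w_t,\{b_t\},W_t)$ with $w_t(i)=\bar w^i_t$ and $W_t(b_t)=\bar W_t$. I would set the per-stage profits to $0$ (or to $p_i/T$, but $0$ is cleaner) and instead place the whole profit into the \emph{gains}: set $g^+_{i,t}$ so that keeping item $i$ in consecutive stages earns a total of $p_i$ spread over the $T-1$ consecutive pairs, say $g^+_{i,t} = p_i/(T-1)$ for $t\in[2,T]$, and $g^-_{i,t}=0$. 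The change costs $c^+,c^-$ are set to a value large enough to make any ``toggling'' of an item across stages strictly worse than not selecting it at all — for instance $c^+_{i,1}$ larger than $p_i$, or more robustly charge a prohibitive change cost whenever an item's membership changes. This way, in an optimal (and WLOG any ``sensible'') solution, each item is either in every $S_t$ or in no $S_t$: if item $i$ is in some but not all stages, the change costs incurred exceed the gains $g^+$ it could contribute, so dropping it everywhere is at least as good. Hence optimal solutions are in bijection (value-preserving) with feasible subsets $S\subseteq I$ that satisfy $\sum_{i\in S}\bar w^i_t\le\bar W_t$ for all $t$, i.e., with feasible Multidimensional Knapsack solutions, and the $1$-GMK objective on such a solution equals $\sum_{i\in S}p_i$.

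To make this an \emph{approximation preserving} reduction in the sense of \cite{vazirani2013approximation}, I would exhibit the two maps: from a Multidimensional Knapsack solution $S$ to the $1$-GMK solution $(S_t,\mathcal{A}_t)_{t=1}^T$ with $S_t=S$ and $\mathcal{A}_t$ the forced single-bin assignment (feasible since $S$ respects each capacity), whose $f_\mathcal{Q}$-value is exactly $p(S)$; and conversely, from any feasible $1$-GMK solution, first ``clean'' it by removing every item that is not present in all stages — an operation that, by the choice of change costs, does not decrease the objective — to obtain a solution of the form $S_t\equiv S$ for a single $S$ that is then a feasible Multidimensional Knapsack solution with $p(S)$ equal to the cleaned objective, hence at least the original objective. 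Composing these shows $\mathrm{OPT}$ values coincide and that any $\rho$-approximate $1$-GMK solution yields, after cleaning, a $\rho$-approximate Multidimensional Knapsack solution, and both maps are polynomial-time computable; note $T=d$ is part of the input here, so the reduction is genuinely to general (unbounded-horizon) $1$-GMK.

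The main obstacle I anticipate is pinning down the ``cleaning'' step and the exact numeric choice of change costs so that removing a partially-present item provably never hurts the objective, \emph{accounting for all four terms} — the lost profit/gains of that item, the change costs it no longer incurs, and crucially the $g^-$ gains that its removal might now \emph{create} at stages where it becomes absent (here using $g^-\equiv 0$ sidesteps the delicate direction) and any $g^-$ interactions with neighboring stages. One must verify that after deleting such an item the remaining tuple is still feasible (trivial, since we only remove items) and that the bookkeeping inequality $f_\mathcal{Q}(\text{cleaned})\ge f_\mathcal{Q}(\text{original})$ holds term by term; setting all per-stage profits and all $g^-$ to zero, all $g^+_{i,t}=p_i/(T-1)$, and $c^+_{i,t},c^-_{i,t}$ to, say, $p_i+1$ (so that a single membership change already costs more than the item's entire potential contribution $p_i$) makes this inequality transparent, which is the configuration I would use to finish the proof.
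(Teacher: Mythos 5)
Your overall construction is the same as the paper's: simulate the $d$ dimensions by $T=d$ stages, put the $t$-th capacity constraint as the single bin of stage $t$, and use prohibitive change costs to force every item to be selected either in all stages or in none, so that solutions correspond to feasible Multidimensional Knapsack sets. However, your concrete choice of change costs breaks the reduction. In the $d$-GMK objective the boundary convention is $S_0=S_{T+1}=\emptyset$, so $c^+_{i,1}$ is charged for \emph{every} item in $S_1$ and $c^-_{i,T}$ for every item in $S_T$. With your parameters (you explicitly suggest $c^+_{i,1}>p_i$, and later ``all $c^+_{i,t},c^-_{i,t}=p_i+1$''), the intended constant solution $S_t\equiv S$ pays $c^+_{i,1}+c^-_{i,T}\geq p_i+1$ per selected item while earning only $\sum_{t=2}^{T}g^+_{i,t}=p_i$, so its value is negative and the first requirement of an approximation factor preserving reduction, $OPT_{\text{GMK}}\geq OPT_{\text{MK}}$, fails (indeed with those costs the empty solution is optimal). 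Your statement that ``a single membership change already costs more than the item's entire potential contribution'' is exactly the problem: for any selected item the boundary ``changes'' at $t=1$ and $t=T$ are unavoidable, so they must not be penalized.

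The fix is precisely what the paper does: charge the prohibitive cost only at interior transitions, i.e.\ set $c^+_{i,t}$ large for $t\in[2,T]$ but $c^+_{i,1}=0$, and $c^-_{i,t}$ large for $t\in[1,T-1]$ but $c^-_{i,T}=0$. With that correction your argument goes through: an item present in some but not all stages has an interior transition costing more than the at most $(T-2)\cdot p_i/(T-1)<p_i$ it can earn in $g^+$ gains, so the ``cleaning'' step (keep only items present in all stages, which restricts each assignment and preserves feasibility) does not decrease the objective, and since $g^-\equiv 0$ no new gains need accounting. The remaining difference from the paper is cosmetic: the paper places the profit as per-stage profits $p_i/d$ with all gains zero (so the intersection-based back-map needs no gains bookkeeping), whereas you place it in the $g^+$ gains, mirroring the paper's separate reduction from $2$-KP; either works once the boundary change costs are zeroed.
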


\begin{proof}
Let $\mathcal{Q}=(I,\mathcal{K},p)$ be an instance of  Multidimensional Knapsack, where $\mathcal{K} = (K_j)_{j=1}^d$. 
We define an instance of $1$-GMK as follows. Define $T=d$, and for $j=1,\ldots,d$ define  $\mathcal{P}_j = (I,(K_j),h)$ with $h(i)=\frac{p(i)}{d}$ for all $i\in I$. The gains vectors are defined as zero vectors, $g^+=g^-=\vv{0}$. Finally, we define the change cost vectors. For all $i\in I$ we set
\[c^+_{i,t}=
\begin{cases}
    p(i) & t\in[2,d] \\
    0 & \text{otherwise.}
\end{cases},~~~
c^-_{i,t}=
\begin{cases}
    p(i) & t\in[1,d-1] \\
    0 & \text{otherwise.}
\end{cases}
\]
The tuple $\tilde{\mathcal{Q}}=\left((\mathcal{P}_t)_{t=1}^d, g^+,g^-,c^+,c^-\right)$ is a $1$-GMK instance with time horizon $T=d$. 


Let $(S,\mathcal{A})$ be a feasible solution for $\mathcal{Q}$, where $\mathcal{A} = (A_j)_{j=1}^d$. We can easily construct a solution for  $\tilde{\mathcal{Q}}$ by setting $\mathcal{A}_j=(A_j)$ for $j=1,\ldots,d$. Then, $(S_t,\mathcal{A}_t)_{t=1}^d$, where $S_j=S$ for $j=1,\ldots,d$, is a solution for $\tilde{\mathcal{Q}}$. Note that all items are either assigned or not assigned in all stages. Thus the value of the solution is
\[ f_{\tilde{\mathcal{Q}}}((S_t)_{t=1}^d) = \sum_{t=1}^d h(S_t)-\sum_{t=1}^T\left(\sum_{i\in S_t\setminus S_{t+1}}c^-_{i,t}+\sum_{i\in S_{t}\setminus S_{t-1}}c^+_{i,t}\right) = d\cdot h(S) = p(S) \]
The solution is also feasible as for every $j\in[d]$ it holds that $A_j$ is a feasible assignments for MKC $K_j$.

Next, let $(S_t,\mathcal{A}_t)_{t=1}^d$ be a feasible solution for $\tilde{\mathcal{Q}}$, where $\mathcal{A}_j = (\tilde{A}_j)$ for $j=1,\ldots,d$. 
For $j=1,\ldots,d$ let $B_j=\{b_j\}$. We define the selected items set as $S=\bigcap_{j\in[d]}S_j$ and define the assignments accordingly, $A_j(b_j)=S$ for $j=1,\ldots,d$ and $\mathcal{A}=(A_j)_{j=1}^d$. Consider some $j\in[s]$, the assignment $A_j$ is feasible as $A_j(b_j)\subseteq \tilde{A}_j(b_j)$ and
\[ \sum_{i\in A_j(b_j)}w_j(i)\leq \sum_{i\in \tilde{A}_j(b_j)}w_j(i)\leq W_j(b_j) \]
The value of the solution is 
\[ p(S) = d\cdot h(S) \geq \sum_{t=1}^d h(S_t) - \sum_{t=1}^{T}\left(\sum_{i\in S_t\setminus S_{t+1}}c^-_{i,t} + \sum_{i\in S_{t}\setminus S_{t-1}}c^+_{i,t}\right) = f_{\tilde{\mathcal{Q}}}((S_t)_{t=1}^d) \]
where the inequality follows from the construction of $\tilde{Q}$. Furthermore, note that $S$ can be constructed in polynomial time, which concludes the proof.
\end{proof}

In 
\cite{chekuri2004multidimensional} Chekuri and Kahanna showed that Multidimensional Knapsack does not admit any constant approximation ratio unless $NP=ZPP$. 
Theorem \ref{thm:phi-hardness} follows immediately from the hardness result of~\cite{chekuri2004multidimensional} and Lemma~\ref{lem:d-kp-reduction}.
\label{sec:2-dim-kp-hardness}


We  now proceed to the second hardness result. 
\begin{lemma}\label{lem:2kp-reduction}
There is an approximation preserving reduction from $2$-Dimensional Knapsack problem to $1$-GMK with time horizon $T=2$, no change costs and a single bin per stage.
\end{lemma}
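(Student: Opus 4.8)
The plan is to mirror the reduction used in Lemma~\ref{lem:d-kp-reduction}, but now split the two weight dimensions of the $2$-KP instance across the two stages of a $1$-GMK instance rather than across $d$ stages. Concretely, given a $2$-KP instance $\mathcal{Q}=(I,\mathcal{K},p)$ with $\mathcal{K}=(K_1,K_2)$, I would build $\tilde{\mathcal{Q}}=\left((\mathcal{P}_1,\mathcal{P}_2),g^+,g^-,c^+,c^-\right)$ with time horizon $T=2$, a single bin per stage, and no change costs ($c^+=c^-=\vv{0}$). Stage $1$ uses the MKC $K_1$ and stage $2$ uses the MKC $K_2$; the per-stage profit function is $h(i)=\tfrac{p(i)}{2}$ for all $i\in I$. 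The gains must now do the work that the change costs did in Lemma~\ref{lem:d-kp-reduction}: to force an item to ``count'' only if it is packed in \emph{both} stages, I would set the symmetric similarity gain $g^+_{i,2}=0$ is not enough, so instead I would use a large $g^-$ penalty-free trick — actually the cleanest is to give a gain $g^+_{i,2}$ that is \emph{subtracted off} implicitly by lowering $h$; but since gains are nonnegative, the right move is to set $g^-_{i,2}=0$ and instead make the profit come \emph{entirely} from the $g^+$ term: set $h\equiv 0$ and $g^+_{i,2}=p(i)$, $g^-=\vv{0}$, $c^+=c^-=\vv{0}$.

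With that choice the objective of $\tilde{\mathcal{Q}}$ on a solution $(S_1,S_2)$ is exactly $\sum_{i\in S_1\cap S_2}p(i)$, since the single $g^+$ term (at $t=2$) contributes $p(i)$ for each $i$ packed in both stages and every other term vanishes. The first direction is then immediate: given a feasible $2$-KP solution $(S,\mathcal{A})$ with $\mathcal{A}=(A_1,A_2)$, put $S_1=S_2=S$, $\mathcal{A}_1=(A_1)$, $\mathcal{A}_2=(A_2)$; this is feasible because $A_j$ is feasible for $K_j$, and its value is $\sum_{i\in S\cap S}p(i)=p(S)$. For the reverse direction, given a feasible $1$-GMK solution $(S_t,\mathcal{A}_t)_{t=1}^2$ with $\mathcal{A}_1=(\tilde A_1)$, $\mathcal{A}_2=(\tilde A_2)$, set $S=S_1\cap S_2$ and define $A_j$ to assign $S$ into the single bin $b_j$ of $B_j$; feasibility follows since $A_j(b_j)=S\subseteq \tilde A_j(b_j)$ so $\sum_{i\in A_j(b_j)}w_j(i)\le\sum_{i\in\tilde A_j(b_j)}w_j(i)\le W_j(b_j)$, and the value is $p(S)=p(S_1\cap S_2)=f_{\tilde{\mathcal{Q}}}((S_t)_{t=1}^2)$. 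Since the values match exactly in both directions and each transformation is polynomial-time, this is an approximation preserving reduction.

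The step I expect to need the most care is the choice of how the gain/profit structure encodes the ``packed in both dimensions'' constraint while respecting the nonnegativity of $h$, $g^+$, $g^-$, $c^+$, $c^-$ — the $h\equiv 0$, $g^+_{i,2}=p(i)$ encoding above is the clean resolution, but one should double-check against the definition of $f_{\mathcal{Q}}$ that no stray $g^-$ or boundary term (recall $S_0=S_3=\emptyset$) sneaks in: with $g^-=\vv 0$ and $c^\pm=\vv 0$ the only surviving sum is $\sum_{i\in S_1\cap S_2}g^+_{i,2}$, so this is fine. Finally, combined with the known fact that $2$-KP does not admit an EPTAS unless $W[1]=\mathrm{FPT}$ (the standard source being the hardness of $2$-dimensional knapsack), Lemma~\ref{lem:2kp-reduction} yields Theorem~\ref{thm:no-eptas}: an EPTAS for this withered-down $1$-GMK would transfer through the reduction to an EPTAS for $2$-KP.
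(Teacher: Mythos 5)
Your final construction (zero per-stage profits $h\equiv 0$, gains $g^+_{i,2}=p(i)$, $g^-=c^+=c^-=\vv{0}$, stage $t$ carrying the MKC $K_t$) and both directions of the value-preserving correspondence via $S_1=S_2=S$ and $S=S_1\cap S_2$ are exactly the paper's proof of Lemma~\ref{lem:2kp-reduction}. The proposal is correct; the only cosmetic difference is the exploratory detour before settling on the gain-based encoding, which the paper states directly.
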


\begin{proof}
Let $\mathcal{Q}=(I,\mathcal{K},p)$ be an instance of $2$-dimensional knapsack, where $\mathcal{K} = (K_1,K_2)$. Also, since $p$ is modular, it holds that $p(S) = \sum_{i\in S}p_i$. 

We define an instance of $1$-GMK as follows. Set $T=2$, $\mathcal{P}_1 = (I,(K_1),h)$ and $\mathcal{P}_2 = (I,(K_2),h)$, where $h$ is the zero function, i.e., $h:I\rightarrow\{0\}$ such that $\forall i\in I$ it holds that $h(i)=0$. Since there are only two stages, gains exists only for stage for $t=2$. Set $g^+_{i,2}=p(i)$ and $g^-_{i,2}=0$ for each item $i\in I$. Finally, we set the change cost vectors as $c^+=c^-=\vv{0}$. The tuple $\tilde{\mathcal{Q}}=((\mathcal{P}_1,\mathcal{P}_2), g^+,g^-,c^+,c^-)$ is a $1$-GMK instance with time horizon $T=2$. Note that since all profits, change costs and gains $g^-$ are zero we can write the objective function as
\[f_{\tilde{\mathcal{Q}}}\left((S_1,S_2)\right) = \sum_{i\in S_1\cap S_2}g^+_{i,2}.\]

Let $(S,\mathcal{A})$ be a feasible solution for $\mathcal{Q}$, where $\mathcal{A} = (A_1,A_2)$. We can easily construct a solution for the $\tilde{\mathcal{Q}}$ by setting $\mathcal{A}_1=(A_1)$ and $\mathcal{A}_2=(A_2)$. Then, $(S_t,\mathcal{A}_t)_{t=1}^2$, where $S_1=S_2=S$, is a solution for $\tilde{\mathcal{Q}}$. Note that all items are either assigned in both stages or not assigned in both stages. Thus the value of the solution is
\[ f_{\tilde{\mathcal{Q}}}((S,S)) = \sum_{i\in S_1\cap S_2} g^+_{i,2} = \sum_{i\in S} g^+_{i,2} = \sum_{i\in S}p_i = p(S) \]
The solution is also feasible as $A_1$ and $A_2$ are feasible assignments of $K_1$ and $K_2$ (respectively).

Next, let $(S_t,\mathcal{A}_t)_{t=1}^2$ be a feasible solution for $\tilde{\mathcal{Q}}$, where $\mathcal{A}_1 = (\tilde{A}_1)$ and $\mathcal{A}_2=(\tilde{A}_2)$. 
Let $B_1=\{b_1\}$ and $B_2=\{b_2\}$. We define the selected items set as $S=S_1\cap S_2$ and define the assignments accordingly, $A_1(b_1)=A_2(b_2) = S$ and $\mathcal{A}=(A_1,A_2)$. Assignment $A_1$ is feasible as $A_1(b_1)\subseteq \tilde{A}_1(b_1)$ and
\[ \sum_{i\in A_1(b_1)}w_1(i)\leq \sum_{i\in \tilde{A}_1(b_1)}w_1(i)\leq W_1(b_1) \]
A similar statement shows that assignment $A_2$ is feasible as well. The value of the solution is 
\[ p(S) = \sum_{i\in S}p(i) =  \sum_{i\in S}g^+_{i,2} =\sum_{i\in S_1\cap S_2} g^+_{i,2} = f_{\tilde{\mathcal{Q}}}((S,S)) \]
which concludes the proof.
\end{proof}

In \cite{kulik2010there} Kulik and Shachnai showed that there is no EPTAS for $2$-KP unless $W[1]=FPT$. Theorem~\ref{thm:no-eptas} follows immediately from the hardness result of \cite{kulik2010there} and Lemma~\ref{lem:2kp-reduction}.


\bibliographystyle{plainurl}
\bibliography{bib}

\appendix

\newpage
\section{Omitted Proofs and Definition}\label{app:omitted-proofs}

\begin{definition}\label{def:approx-reduction}
Let $\Pi_1,\Pi_2$ be two maximization problems. An {\em approximation factor preserving reduction} from $\Pi_1$ to $\Pi_2$ consists of two polynomial time algorithms $f,g$ such that for any two instances $I_1$ of problem $\Pi_1$ and $I_2=f(I_1)$ of problem $\Pi_2$ it holds that
\begin{itemize}
    \item $I_2\in \Pi_2$ and $OPT_{\Pi_2}(I_2)\geq OPT_{\Pi_1}(I_1)$.
    \item for any solution $s_2$ for $I_2$, solution $s_1=g(I_1,s_2)$ is a solution for $I_1$ and $obj_{\Pi_1}(I_1,s_1)\geq obj_{\Pi_2}(I_2,s_2)$. 
\end{itemize}
where $OPT_{\Pi}(I)$ is the value of an optimal solution for instance $I$ of problem $\Pi$, and $obj_{\Pi}(I,s)$ is the value of solution $s$ for instance $I$ of problem $\Pi$.
\end{definition}

\begin{proof}[Proof of Lemma \ref{lem:modular-reduction-dir2}]
Let $\mathcal{Q}=\left((\mathcal{P}_t)_{t=1}^T,g^+,g^-,c^+, c^-\right)$ be an instance of $d$-GMK, where $\mathcal{P}_{t} = \left(I,\mathcal{K}_{t},p_t\right)$ and $\mathcal{K}_t = (K_{t,j})_{j=1}^{d_t}$. Also, let $R(\mathcal{Q})=\left(E,\tilde{\mathcal{K}},\tilde{p},\mathcal{I}\right)$ be the reduced instance of $\mathcal{Q}$. and $\left(S, \left(\tilde{A}_{t,j}\right)_{t\in [T],~j\in [d]}\right)$ be a feasible solution for $R(\mathcal{Q})$. We define the solution $(S_t,\mathcal{A}_t)_{t=1}^T$ for $\mathcal{Q}$ as follows. For every stage $t$ we set $S_t = \{i\in I~|~ \exists(i,D)\in S : t\in D\}$. For every $t=1,\ldots,T$, $j=1,\ldots,d_t$ and bin $b\in B_{t,j}$ (the set of bins in MKC $K_{t,j}$) let $A_{t,j}(b) = \{i\in I~|~ \exists (i,D)\in \tilde{A}_{t,j}(b) : t\in D\}$. Observe that the sets $(S_t)_{t=1}^{T}$ and assignments $(A_{t,j})_{t\in[T],~j\in[d_t]}$ can be constructed in polynomial time as at most $|I|$ elements can be chosen due to the matroid constraint. The assignment $A_{t,j}$ is an assignment of $S_t$ since
\[ S_t = \{i\in I~|~\exists(i,D)\in S:t\in D\} = \bigcup_{b\in B_{t,j}}\{i\in I~|~(i,D)\in \tilde{A}_{t,j}(b):t\in D\} = \bigcup_{b\in B_{t,j}}A_{t,j}(b),\]
where the second equality follows the feasibility of the solution for $R(\mathcal{Q})$. In addition, $A_{t,j}$ is a feasible assignment for MKC $K_{t,j}$ since for every bin $b\in B_{t,j}$ it holds that
\[ \sum_{i\in A_{t,j}(b)}w_{t,j}(i) = \sum_{(i,D)\in\tilde{A}_{t,j}(b) :t\in D}\tilde{w}_{t,j}((i,D)) = \sum_{(i,D)\in\tilde{A}_{t,j}(b)}\tilde{w}_{t,j}((i,D)) \leq W_{t,j}(b) \]
Thus $\left(S_t,\mathcal{A}_t\right)_{t=1}^T$ is a feasible solution for $\mathcal{Q}$.

Lastly, consider the value of the solution for $\mathcal{Q}$. It holds that
\[
\begin{aligned}
    &f_{\mathcal{Q}}\left((S_t)_{t=1}^T\right)= \\
    &\sum_{t=1}^T\sum_{i\in S_t} p_t(i) + \sum_{t=2}^{T}\left( \sum_{i\in S_{t-1}\cap S_{t}} g^+_{i,t} + \sum_{i\notin S_{t-1}\cup S_{t}} g^-_{i,t}\right) - \sum_{t=1}^{T}\left( \sum_{i\in S_{t}\setminus S_{t-1}} c^+_{i,t} + \sum_{i\in S_t\setminus S_{t+1}} c^-_{i,t} \right) = \\
    &\sum_{(i,D)\in S}\left(\sum_{t\in D}p_t(i)+\sum_{t\in D : t-1\in D}g^+_{i,t} +\sum_{t\notin D : t-1\notin D}g^-_{i,t} - \sum_{t\in D : t-1\notin D}c^+_{i,t} -\sum_{t\in D : t+1\notin D}c^-_{i,t} \right) = \\ 
    &\tilde{p}(S)
\end{aligned}
\]
\end{proof}

\begin{proof}[Proof of Corollary \ref{cor:cut-solution}]

Let $\mathcal{Q}$ be an instance of $d$-GMK, $U$ be a set of cut points. Also, let $\mathcal{Q}_U = (q_j)_{j=0}^{k-1} = \left((\mathcal{P}_t)_{t=u_j}^{u_{j+1}-1},g^+,g^-,c^+,c^-\right)_{j=0}^{k-1}$ be the corresponding tuple of cut instances, and $\left((S_t,\mathcal{A}_t)_{t=u_j}^{u_{j+1}-1}\right)_{j=0}^{k-1}$ be a tuple of feasible solutions for the cut instances. 

We define the solution $\left(S_t,\mathcal{A}_t\right)_{t=1}^T$ for $\mathcal{Q}$. It is easy to see that the assignments $\mathcal{A}_t$ to $\mathcal{K}_t$ are all feasible assignments of $S_t$. In addition, it holds that
\[
\begin{aligned}
    &f_{\mathcal{Q}}\left((S_t)_{t=1}^T\right) = \\
    &\sum_{t=1}^Tp_t(S_t) + \sum_{t=2}^{T}\left(\sum_{i\in S_{t-1}\cap S_{t}}g^+_{i,t}+\sum_{i\notin S_{t-1}\cup S_{t}}g^-_{i,t}\right) - \sum_{t=1}^{T}\left(\sum_{i\in S_t\setminus S_{t+1}}c^-_{i,t}+\sum_{i\in S_{t}\setminus S_{t-1}}c^+_{i,t}\right) \geq \\
    & \sum_{j=0}^{k-1}\left(\sum_{t=u_j}^{u_{j+1}-1}p_t(S_t) + \sum_{t=u_j+1}^{u_{j+1}-1}\left(\sum_{i\in S_{t-1}\cap S_{t}}g^+_{i,t}+\sum_{i\notin S_{t-1}\cup S_{t}}g^-_{i,t}\right) \right. \\
    & ~~~~~~~~ -\left.\sum_{t=u_j+1}^{u_{j+1}-1}\left(\sum_{i\in S_t\setminus S_{t+1}}c^-_{i,t}+\sum_{i\in S_{t}\setminus S_{t-1}}c^+_{i,t}\right) - 
    \sum_{i\in S_{u_{j+1}-1}}c^-_{i,u_{j+1}} - \sum_{i\in S_{u_j}}c^+_{i,u_j} \right) = \\
    &\sum_{j=0}^{k-1}f_{q_j}\left( (S_t)_{t=u_j}^{u_{j+1}-1} \right)
\end{aligned}
\]
where $f_{q_j}$ is the objective functions of cut instance $q_j$. This proves that a cut solution has a higher value than the sum of solutions for cut instance from which it was created.
\end{proof}
\section{Approximation Scheme for Submodular \texorpdfstring{$d$}--GMK}
\label{app:submodular}

In this section we define the Submodular $d$-MKCP problem and provide approximation algorithm for it. We start by defining several properties of set functions.

A set function $p:2^I\rightarrow \mathbb{R}_{\geq 0}$ is said to be submodular if for every $A\subseteq B\subseteq I$ and $i\in I\setminus B$ it holds that $p(A\cup\{i\})-f(A)\geq p(B\cup\{i\})-f(B)$. This property is sometimes referred to as diminishing returns. Set function $p$ is said to be monotone if for all $A\subseteq B\subseteq I$ it holds that $p(A)\leq p(B)$. Lastly, $p$ is said to be non-negative if for every $A\subseteq I$ it holds that $f(A)\geq 0$.

In Submodular $d$-MKCP we are given a tuple $\left( I, \mathcal{K},p \right)$, where $I$ is a set of items, $\mathcal{K}$ is a tuple of $d$ MKCs and $p:2^I\rightarrow \mathbb{R}_{\geq 0}$ is a non-negative, monotone and submodular set function, defining the profit of a subset $S\subseteq I$. A feasible solution for Submodular $d$-MKCP is a set $S\subseteq I$ and a tuple of feasible assignments ${\mathcal{A}}$ (w.r.t $\mathcal{K}$) of $S$. The goal is to find a feasible solution that maximizes $p(S)$. 

Submodular $d$-GMK is the multistage model of Submodular $d$-MKCP. The problem is defined over a time horizon of $T$ stages as follows. An instance of the problem is a tuple $\left((\mathcal{P}_t)_{t=1}^T, g^+,g^-\right)$, where  $\mathcal{P}_t = \left(I,\mathcal{K}_t,p_t\right)$ is a Submodular $d_t$-MKCP instance with $d_t\leq d$ for $t\in[T]$ and $g^+,g^-\in \mathbb{R}_+^{I\times[2,T]}$ are the gains vectors. We use $g^+_{i,t}$ and $g^-_{i,t}$ to denote the gain of item $i$ at stage $t$. 

A feasible solution for Submodular $d$-GMK is a tuple $(S_t,\mathcal{A}_t)_{t=1}^T$, where $(S_t,\mathcal{A}_t)$ is a feasible solution for $\mathcal{P}_t$ (note that $\mathcal{A}_t$ is a tuple of assignments of $S_t$). The objective function of instance $\mathcal{Q}$ is denoted by $f_{\mathcal{Q}}:I^T\rightarrow \mathbb{R}$, where
\[ f_{\mathcal{Q}}\left((S_t)_{t=1}^T\right)=\sum_{t=1}^T p_t(S_t) + \sum_{t=2}^{T}\left(\sum_{i\in S_{t-1}\cap S_{t}}g^+_{i,t} + \sum_{i\notin S_{t-1}\cup S_{t}}g^-_{i,t}\right) \]
The goal is to find a feasible solution that maximizes the objective function $f_\mathcal{Q}$.

\subsection{Bounded Time Horizon}\label{app:submodular-reduction}

In this section we provide a reduction from an instance of Submodular $q$-GMK to a generalization of Submodular $d$-MKCP (for specific values of $d$ and $q$). The generalization was presented in \cite{fairstein2020tight} and is called Submodular $d$-MKCP With A Matroid Constraint (Submodular $d$-MKCP+) and is defined by a tuple $(I,\mathcal{K},p,\mathcal{I})$, where $(I,\mathcal{K},p)$ forms an instance of Submodular $d$-MKCP. Also, the set $\mathcal{I}\subseteq 2^I$ defines a matroid constraint. A feasible solution for Submodular $d$-MKCP+ is a set $S\in\mathcal{I}$ and a tuple of feasible assignments ${\mathcal{A}}$ (w.r.t $\mathcal{K}$) of $S$. The goal is to find a feasible solution which maximizes $p(S)$.

The following claim is essential in proving the reduction from $q$-GMK to $d$-MKCP+.

\begin{claim}\label{submodular-extension}
	Let $f:2^I\rightarrow\mathbb{R}_{\geq 0}$ be a non-negative, monotone and submodular function, and let
	$E \subseteq I \times 2^X$
	for some set $X$ (each element of $E$ is a pair $(e,H)$ with $e\in I$ and $H\subseteq X$). Then for any $h\in X$ the function $g:2^{E}\rightarrow\mathbb{R}_{\geq 0}$ defined by $g(A) =f\left(\{i~|~ \exists (i,H)\in A : h\in H\}\right)$ is non-negative, monotone and submodular.
\end{claim}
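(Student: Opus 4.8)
The plan is to reduce the three properties of $g$ to the corresponding properties of $f$ by tracking the set $\pi(A) := \{i \mid \exists (i,H)\in A : h\in H\}$, the ``shadow'' of $A\subseteq E$ onto $I$ through the fixed coordinate $h$. First I would observe that $\pi$ is monotone with respect to inclusion: if $A\subseteq B\subseteq E$ then $\pi(A)\subseteq\pi(B)$, since any witnessing pair $(i,H)\in A$ with $h\in H$ also lies in $B$. Non-negativity and monotonicity of $g$ are then immediate: $g(A)=f(\pi(A))\geq 0$ because $f$ is non-negative, and $g(A)=f(\pi(A))\leq f(\pi(B))=g(B)$ by monotonicity of $\pi$ followed by monotonicity of $f$.

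The substantive part is submodularity. Fix $A\subseteq B\subseteq E$ and an element $(e,H)\in E\setminus B$; I must show $g(A\cup\{(e,H)\})-g(A)\geq g(B\cup\{(e,H)\})-g(B)$. The key case distinction is on whether $h\in H$. If $h\notin H$, then adding $(e,H)$ changes neither shadow: $\pi(A\cup\{(e,H)\})=\pi(A)$ and $\pi(B\cup\{(e,H)\})=\pi(B)$, so both marginal gains are $0$ and the inequality holds with equality. If $h\in H$, then $\pi(A\cup\{(e,H)\})=\pi(A)\cup\{e\}$ and $\pi(B\cup\{(e,H)\})=\pi(B)\cup\{e\}$. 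Now there is a further subtlety: $e$ may already lie in $\pi(A)$ (or $\pi(B)$) via some other pair. If $e\in\pi(A)$, then $e\in\pi(B)$ as well, both marginals vanish, and we are done. Otherwise $e\notin\pi(A)$; if also $e\notin\pi(B)$, then with $A':=\pi(A)\subseteq\pi(B)=:B'$ and $e\notin B'$, submodularity of $f$ gives exactly $f(A'\cup\{e\})-f(A')\geq f(B'\cup\{e\})-f(B')$, which is the desired inequality. The remaining sub-case is $e\notin\pi(A)$ but $e\in\pi(B)$: then the right-hand marginal is $0$, while the left-hand marginal is $f(\pi(A)\cup\{e\})-f(\pi(A))\geq 0$ by monotonicity of $f$, so the inequality again holds.

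I expect the main obstacle to be the bookkeeping in the $h\in H$ branch — specifically, remembering that the shadow $\pi$ can fail to be injective, so that adding a pair $(e,H)$ need not enlarge $\pi(A)$ even when $h\in H$. Once that case split ($e$ already in the shadow vs. not) is handled cleanly, submodularity of $f$ does the rest with no computation. The whole argument is elementary; the value of stating it as a separate claim is that it isolates exactly the manipulation needed to push the modular reduction of Section~\ref{sec:reduction} through to the submodular setting, where $\tilde p$ is built from the $p_t$ precisely by such coordinate-projection compositions (one per stage), and sums and nonnegative combinations of submodular functions are submodular.
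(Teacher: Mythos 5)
Your proof is correct, and it takes a different route from the paper's. The paper verifies submodularity of $g$ via the equivalent lattice inequality: it writes $g(S)+g(T)=f(\pi(S))+f(\pi(T))\geq f\left(\pi(S)\cup\pi(T)\right)+f\left(\pi(S)\cap\pi(T)\right)$, then uses the identity $\pi(S\cup T)=\pi(S)\cup\pi(T)$ together with the containment $\pi(S\cap T)\subseteq\pi(S)\cap\pi(T)$ and the monotonicity of $f$ to conclude $g(S)+g(T)\geq g(S\cup T)+g(S\cap T)$. You instead verify the diminishing-returns inequality directly, with a case split on whether $h\in H$ and on whether the projected item $e$ already lies in the shadow of $A$ or of $B$. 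The two arguments are of comparable length and rest on the same structural facts about the projection $\pi$ (it preserves unions but only contains intersections; your sub-case $e\in\pi(B)\setminus\pi(A)$ is exactly where the intersection defect shows up, and it is also exactly where both proofs must invoke monotonicity of $f$, so neither argument works for general non-monotone submodular $f$). The paper's version is a single chain of inequalities but implicitly uses the equivalence between the marginal definition it states and the union--intersection form; yours works verbatim with the definition as stated in the paper at the cost of some explicit bookkeeping, and it makes transparent why each hypothesis on $f$ is needed. Either proof is acceptable.
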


\begin{proof}

Let $S,T\subseteq E$, then
\begin{equation*}
\begin{aligned}
g(S)+g(T)&=
f\left(\left\{i~\middle|~\exists(i,H)\in S:~h\in H\right\}\right)+
f\left(\left\{i~\middle|~\exists(i,H)\in T:~h\in H\right\}\right)\\
&\geq
f\left(\left\{i~\middle|~\exists(i,H)\in S:~h\in H\right\}\cup \left\{i~\middle|~\exists(i,H)\in T:~h\in H\right\}\right) \\&~~~+ f\left(\left\{i~\middle|~\exists(i,H)\in S:~h\in H\right\}\cap \left\{i~\middle|~\exists(i,H)\in T:~h\in H\right\}\right)\\
&\geq f\left(\left\{i~\middle|~\exists(i,H)\in S\cup T:~h\in H\right\}\right)+
f\left(\left\{i~\middle|~\exists(i,H)\in S\cap T:~h\in H\right\}\right)\\
&=g(S\cup T)+g(S\cap T)
\end{aligned}
\end{equation*}
where the first inequality is by submodularity, and the second inequality uses 
\[\{i~|~\exists(i,D)\in S:~t\in D\}\cap \{i~|~\exists(i,D)\in T:~t\in D\}\subseteq \{i~|~\exists(i,D)\in S\cup T:~t\in D\}\] 
and the monontonicity of $f$. Thus $g$
 is submodular.
 
Let $S\subseteq T\subseteq E$, then
\[g(S)= f\left(\left\{i~\middle|~\exists(i,H)\in S:~h\in H\right\}\right) \leq f\left(\left\{i~\middle|~\exists(i,H)\in T:~h\in H\right\}\right)=g(T),\]
where the in equality holds since $f$ is monotone. Thus $g$ is monotone as well.

Finally, for any $S\subseteq E$, 
\[g(S)= f\left(\left\{i~\middle|~\exists(i,H)\in S:~h\in H\right\}\right)\geq 0,\]
as $f$ is non-negative. Thus $g$ is non-negative. 
\end{proof}

We are now ready to provide the reduction. The following definition presents the construction of the reduction.

\begin{definition}\label{def:submodular-reduction}
Let $\mathcal{Q}=\left((\mathcal{P}_t)_{t=1}^T,g^+,g^-\right)$ be an instance of Submodular $d$-GMK, where $\mathcal{P}_{t} = \left(I,\mathcal{K}_{t},p_t\right)$ and $\mathcal{K}_t = (K_{t,j})_{j=1}^{d_t}$. Given instance $\mathcal{Q}$, the operator $R$ returns the Submodular $dT$-MKCP+ instance $R(\mathcal{Q})=\left(E,\tilde{\mathcal{K}},\tilde{p},\mathcal{I}\right)$ where
\begin{itemize}
    \item $E=I\times 2^{[T]}$
    \item $\mathcal{I}=\left\{ S\subseteq E~\middle|~ \forall i\in I :~\left|S\cap \left(\{i\}\times2^{[T]}\right)\right|=1 \right\}$
    \item For $t=1,...,T, j\leq d_t$ set MKC $\tilde{K}_{t,j}=(\tilde{w}_{t,j},B_{t,j},W_{t,j})$ over $E$, where \\ $K_{t,j} = (w_{t,j},B_{t,j},W_{t,j})$ and
    \[ \tilde{w}_{t,j}((i,D)) = 
\begin{cases}
	w_{t,j}(i) & t\in D\\
	0 & \textnormal{otherwise}
\end{cases} \]
    \item For $t=1,...,T, d_t<j\leq d$ set MKC $\tilde{K}_{t,j}= (w_0,\{b\},W_0)$ over $E$, where $w_0:2^E\rightarrow \{0\}$, $W_0(b)=0$ and $b$ is an arbitrary bin (object).
    \item $\tilde{\mathcal{K}} = \left( \tilde{K}_{t,j} \right)_{t\in[T],j\in [d]}$.
    \item For $t=1,...,T$: set
    $ \tilde{p}_t(S)= 
        p_t\left(\{i\in I~|~ \exists (i,D)\in S : t\in D\}\right)
     $
    \item The objective function $\tilde{f}$ is defined as follows.
    \[ \tilde{p}(S) = \sum_{t=1}^T \tilde{p}_t(S) + \sum_{(i,D)\in S}\left(\sum_{t\in D :~ t-1\in D}g^+_{i,t} +\sum_{t\notin D :~ t-1\notin D}g^-_{i,t} \right) \]
\end{itemize}
\end{definition}

Each element $(i,D)\in E$ states the subset of stages in which item $i$ is assigned. I.e., $i$ is only assigned in stages $t\in D$. Thus any solution should include at most one element $(i,D)$ for each $i\in I$. This constraint is fully captured by the partition matroid constraint defined by the set of independent sets $\mathcal{I}$. Finally, if an element is $(i,D)$ is selected, we must assign $i$ in each MKC $K_{t,j}$ for $j\in [d_t], t\in D$. This is captured by the weight function $\tilde{w}$, as an element $(i,D)$ weighs $w_{t,j}(i)$ if and only if $t\in D$ (otherwise its weight is zero and it can be assigned for ``free'').

\begin{lemma}
The reduced instance $R(\mathcal{Q})$ of a Submodular $d$-GMK instance $\mathcal{Q}$ over time horizon $T$ is a valid Submodular $dT$-MKCP+ instance.
\end{lemma}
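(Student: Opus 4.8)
The plan is to check, one ingredient at a time, the four items in the definition of a Submodular $dT$-MKCP+ instance against the construction $R(\mathcal{Q}) = (E, \tilde{\mathcal{K}}, \tilde{p}, \mathcal{I})$ of Definition~\ref{def:submodular-reduction}: that $E$ is a finite ground set, that $\tilde{\mathcal{K}}$ is a tuple of $dT$ Multiple Knapsack Constraints over $E$, that $\mathcal{I}$ is a matroid constraint over $E$, and --- the only non-routine point --- that $\tilde{p}$ is a non-negative, monotone, submodular set function on $2^E$. I would first dispose of the first three, exactly as in the modular reduction (Definition~\ref{def:modular-reduction} and the lemma following it): $E = I\times 2^{[T]}$ is finite; for each $t\in[T]$, $j\in[d]$ the triple $\tilde{K}_{t,j}=(\tilde{w}_{t,j},B_{t,j},W_{t,j})$ has a weight function with values in $\{0\}\cup\{w_{t,j}(i) : i\in I\}\subseteq\mathbb{R}_+$ (for $j>d_t$ it is simply the trivial constraint with a single zero-capacity bin), hence is a valid MKC over $E$, and there are exactly $d\cdot T$ of them; and $\mathcal{I}$ is the partition-matroid constraint on $E$ with one part $\{i\}\times 2^{[T]}$ per item $i$ (one may read it as the family of bases of that matroid, which is harmless since $\tilde p$ will turn out monotone, so optimizing over $\mathcal{I}$ coincides with optimizing over its independent sets).

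The heart of the argument is the analysis of $\tilde{p}$, and the plan is to split it as $\tilde{p}(S) = \big(\sum_{t=1}^{T}\tilde{p}_t(S)\big) + \tilde{g}(S)$, where $\tilde{g}(S)=\sum_{(i,D)\in S} v_g(i,D)$ with $v_g(i,D)=\sum_{t\in D:\, t-1\in D} g^+_{i,t}+\sum_{t\notin D:\, t-1\notin D} g^-_{i,t}$. For each fixed $t$, I would invoke Claim~\ref{submodular-extension} with $X=[T]$, $h=t$ and $f=p_t$ (non-negative, monotone, submodular by definition of Submodular $d$-GMK) to conclude that $\tilde{p}_t(S)=p_t(\{i\mid \exists (i,D)\in S:\, t\in D\})$ is non-negative, monotone and submodular; a finite sum of such functions keeps all three properties. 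For $\tilde{g}$: since $g^+,g^-$ are non-negative vectors, each coefficient $v_g(i,D)$ is non-negative, so $\tilde{g}$ is a non-negative modular set function on $2^E$, hence also monotone and submodular. Adding the two, $\tilde{p}$ is a sum of a non-negative monotone submodular function and a non-negative modular function, so $\tilde{p}$ is non-negative, monotone and submodular, as required. Combining this with the verification of the other three ingredients yields the claim.

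I do not expect a genuine obstacle. The one place where care is needed is routing the submodularity (and monotonicity, non-negativity) of each per-stage profit $\tilde{p}_t$ through Claim~\ref{submodular-extension} rather than trying to argue it directly, and then observing that these three properties are preserved under finite sums and under adding a non-negative modular term. It is worth noting that, in contrast to the modular reduction --- where change costs could make an element's value negative and one had to fall back on the convention of discarding negative-value elements in advance --- the submodular variant has no change costs at all, so non-negativity of $\tilde{p}$ is immediate and no such preprocessing is invoked here.
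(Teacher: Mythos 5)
Your proposal is correct and follows essentially the same route as the paper: it verifies the matroid and MKC structure directly, decomposes $\tilde{p}$ into the per-stage functions $\tilde{p}_t$ (handled via Claim~\ref{submodular-extension}) plus a non-negative modular gain term, and concludes by closure of non-negativity, monotonicity and submodularity under such sums. The only extra touch is your explicit remark that $\mathcal{I}$ as written is the base family of the partition matroid and that this is harmless by monotonicity, a point the paper glosses over.
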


\begin{proof}
Let $\mathcal{Q}=\left((\mathcal{P}_t)_{t=1}^T,g^+,g^-\right)$ be an instance of Submodular $d$-GMK, where $\mathcal{P}_{t} = \left(I,\mathcal{K}_{t},p_t\right)$ and $\mathcal{K}_t = (K_{t,j})_{j=1}^{d_t}$. Also, let $R(\mathcal{Q})=\left(E,\tilde{\mathcal{K}},\tilde{p},\mathcal{I}\right)$ be the reduced instance of $\mathcal{Q}$ as defined in Definition \ref{def:submodular-reduction}. It is easy to see that the set $\mathcal{I}$ is the independent sets of a partition matroid, as for each item $i$ at most one element $(i,D)$ can be chosen. Thus, $\mathcal{I}$ is the family of independent sets of a matroid as required. 

Next, $\mathcal{K}$ defines a tuple of MKCs, so all that is left to prove is that $\tilde{p}$ is non-negative, monotone and submodular. For each element $(i,D)\in E$ we can define a fixed non-negative value
\[ v((i,D)) = \sum_{t\in D :~ t-1\in D}g^+_{i,t} +\sum_{t\notin D :~ t-1\notin D}g^-_{i,t} \]
Thus the function $h(S)=\sum_{e\in S}v(e)$ is non-negative and modular. For $t=1,\ldots,T$ function $p_t$ is non-negative, monotone and submodular. Thus, due to Claim \ref{submodular-extension}, functions $(\tilde{p}_t)_{t=1}^T$ are also non-negative, monotone and submodular. This means that $\tilde{p}$ is non-negative, monontone and submodular as it is the sum of non-negative, monotone and submodular functions.
\end{proof}

Next, we will prove that operator $R$ defines an approximation factor preserving reduction.

\begin{lemma}
Let $\mathcal{Q}$ be an instance of Submodular $d$-GMK with time horizon $T$. For any feasible solution $(S_t,\mathcal{A}_t)_{t=1}^T$ for $\mathcal{Q}$ a feasible solution $\left(S,(\tilde{A}_{t,j})_{t\in [T],j\in [d]}\right)$ for $R(\mathcal{Q})$ exists such that $f_\mathcal{Q}\left((S_t)_{t=1}^T\right) = \tilde{f}(S)$.
\end{lemma}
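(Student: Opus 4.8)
The plan is to mirror, almost verbatim, the proof of Lemma~\ref{lem:modular-reduction-dir1}, changing only what concerns the profit, which is now a monotone submodular set function rather than a modular one. Given a feasible solution $(S_t,\mathcal{A}_t)_{t=1}^T$ for $\mathcal{Q}$, I would set
\[ S = \left\{ (i,D) ~\middle|~ i \in I,~ D = \{ t \in [T] ~|~ i \in S_t \} \right\}, \]
so that $S$ contains exactly one pair per item. This gives $S \in \mathcal{I}$ for the partition matroid of Definition~\ref{def:submodular-reduction}; note that here the defining condition of $\mathcal{I}$ is an \emph{equality}, so it is essential that $S$ includes a pair $(i,D)$ for every $i \in I$, including those with $D = \emptyset$ (these are exactly the items that will contribute the ``always-absent'' gains $g^-$).

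Next I would evaluate $\tilde{p}(S)$ (which equals $\tilde{f}(S)$). The key observation is that since $S$ contains exactly one pair $(i,D_i)$ for each $i$ and $t \in D_i$ iff $i \in S_t$, we get $\{ i \in I ~|~ \exists (i,D) \in S : t \in D \} = S_t$ for every stage $t$, hence $\tilde{p}_t(S) = p_t(S_t)$. For the remaining, modular part of $\tilde{p}$ the same re-indexing of double sums as in Lemma~\ref{lem:modular-reduction-dir1} yields $\sum_{(i,D) \in S} \sum_{t \in D : t-1 \in D} g^+_{i,t} = \sum_{t=2}^T \sum_{i \in S_{t-1} \cap S_t} g^+_{i,t}$ and $\sum_{(i,D) \in S} \sum_{t \notin D : t-1 \notin D} g^-_{i,t} = \sum_{t=2}^T \sum_{i \notin S_{t-1} \cup S_t} g^-_{i,t}$. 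Adding these up gives $\tilde{p}(S) = \sum_{t=1}^T p_t(S_t) + \sum_{t=2}^T\left( \sum_{i \in S_{t-1} \cap S_t} g^+_{i,t} + \sum_{i \notin S_{t-1} \cup S_t} g^-_{i,t} \right) = f_\mathcal{Q}\left((S_t)_{t=1}^T\right)$.

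It then remains to exhibit feasible assignments $(\tilde{A}_{t,j})_{t \in [T], j \in [d]}$ of $S$, and here I would reuse the construction from Lemma~\ref{lem:modular-reduction-dir1} verbatim, since it does not reference the profit function: for $j > d_t$ take $\tilde{A}_{t,j}(b) = S$ (the single bin has the all-zero weight function and capacity $0$), and for $j \le d_t$ place each $(i,D) \in S$ with $t \in D$ in the bins $b$ with $i \in A_{t,j}(b)$ while dumping every $(i,D) \in S$ with $t \notin D$ into a fixed bin $b^*$. Feasibility holds because $\tilde{w}_{t,j}((i,D)) = w_{t,j}(i)$ when $t \in D$ and $0$ otherwise, so the load on each bin matches the load under $A_{t,j}$; and that this is an assignment \emph{of} $S$ (covers all of $S$) is checked exactly as in the modular case. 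I do not expect a real obstacle: the only points needing care are the stricter ``exactly one element per item'' requirement of $\mathcal{I}$ and the identity $\tilde{p}_t(S) = p_t(S_t)$ — this is the step where the well-definedness guaranteed by Claim~\ref{submodular-extension} is implicitly in play, although for this direction we only use the pointwise value of $\tilde{p}_t$. Everything else is identical to Lemma~\ref{lem:modular-reduction-dir1}.
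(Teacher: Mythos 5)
Your proposal is correct and follows essentially the same route as the paper's proof: the same set $S$ (one pair $(i,D)$ per item, including $D=\emptyset$), the identity $\tilde{p}_t(S)=p_t(S_t)$ plus the re-indexing of the gain sums, and the verbatim reuse of the assignment construction from the modular case. Your explicit remark about the equality form of the matroid constraint is a point the paper glosses over, but it does not change the argument.
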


\begin{proof}
Let $\mathcal{Q}=\left((\mathcal{P}_t)_{t=1}^T,g^+,g^-\right)$ be an instance of Submodular $d$-GMK, where $\mathcal{P}_{t} = \left(I,\mathcal{K}_{t},p_t\right)$ and $\mathcal{K}_t = (K_{t,j})_{j=1}^{d_t}$. Also, let $R(\mathcal{Q})=\left(E,\tilde{\mathcal{K}},\tilde{p},\mathcal{I}\right)$ be the reduced instance of $\mathcal{Q}$, where $\tilde{\mathcal{K}}=\left(\tilde{K}_{t,j }\right)_{t\in [T], j\in [d]}$
and $\tilde{K}_{t,j}=(\tilde{w},B_{t,j}, W_{t,j})$ (see Definition \ref{def:submodular-reduction}). Consider some feasible solution $(S_t,\mathcal{A}_t)_{t=1}^T$ for $\mathcal{Q}$, where $\mathcal{A}_t = (A_{t,j})_{j=1}^{d_t}$. In the following we define a solution $\left(S, \left(\tilde{A}_{t,j}\right)_{t\in [T],~j\in [d]}\right)$ for  $R(\mathcal{Q})$.
Let 
\[S= \left\{(i,D)~\middle|~i\in I, ~D=\{t\in[T]~|~i\in S_t \}\right\}\]
It can be easily verified that $S\in \mathcal{I}$. Since $\tilde{p}_t(S)=p_t(S)$ for every $t=1,\ldots,T$ and $S\subseteq E$ it holds that 
\[
\begin{aligned}
    &\tilde{p}(S) = \\ &\sum_{t=1}^T\tilde{p}_t(S) + \sum_{(i,D)\in S}\left(\sum_{t\in D :~ t-1\in D}g^+_{i,t} +\sum_{t\notin D :~ t-1\notin D}g^-_{i,t} - \sum_{t\in D:~ t-1\notin D}c^+_{i,t} - \sum_{t\in D:~ t+1\notin D}c^-_{i,t} \right) =\\
    & \sum_{t=1}^T p_t(S_t) + \sum_{t\in [2,T]}\left( \sum_{i\in S_{t-1}\cap S_{t}} g^+_{i,t} + \sum_{i\notin S_{t-1}\cup S_{t}} g^-_{i,t} \right) - \sum_{t=1}^{T}\left( \sum_{i\in S_{t}\setminus S_{t-1}} c^+_{i,t} + \sum_{i\in S_t\setminus S_{t+1}} c^-_{i,t} \right)= \\
    & f_{\mathcal{Q}}\left((S_t)_{t=1}^T \right).
\end{aligned}
\]
Next, we need to show an assignment $\tilde{A}_{t,j}$ of $S$ for each MKC in $\tilde{\mathcal{K}}$. 
 Let $t\in [T]$ and $j\in [d]$, and consider the following two cases:
 
\begin{enumerate}
    \item If $j>d_t$, recall $\tilde{K}_{t,j}=(w_0, \{b\} , W_0)$ where $w_0(i,D)=0$ for all $(i,D)\in E$ and $W_0(b)=0$ by definition. We define $\tilde{A}_{t,j}$ by $\tilde{A}_{t,j}(b)=S$. It thus holds that $w_0(\tilde{A}_{t,j}(b))=w_0(S)=0=W_0$. That is, $\tilde{A}_{t,j}$ is feasible. 
    \item  If $j\leq d_t$, then let $b^*\in B_{t,j}$ be some unique bin in $B_{t,j}$ and  we define the assignment $\tilde{A}_{t,j}:B_{t,j}\rightarrow2^E$ by 
    \begin{equation}
    \label{eq:submodular-red_assign}
    \begin{aligned}
        &\tilde{A}_{t,j}(b) = \left(A_{t,j}(b)\times 2^{[T]}\right)\cap S &\forall b\in B_{t,j}\setminus\{b^*\}\\
        &\tilde{A}_{t,j}(b^*) = \left( \left(A_{t,j}(b^*)\times 2^{[T]}\right)\cap S \right) \cup \{(i,D)\in S~|~t\not\in D\}. 
    \end{aligned}
    \end{equation}
    The assignment $\tilde{A}_{t,j}$ is a feasible assignment w.r.t $\tilde{K}_{t,j}$ since for each bin $b\in B_{t,j}$ it holds that
\[ \sum_{(i,D)\in \tilde{A}_{t,j}(b)}\tilde{w}_{t,j}((i,D)) \leq \sum_{i\in A_{t,j}(b)}w_{t,j}(i) \leq W_{t,j}(b). \]

Let $(i,D)\in S$. If $i\in S_t$ there is $b\in B_{t,j}$ such that $i\in A_{t,j}(b)$, hence $(i,D)\in \tilde{A}_{t,j}(b)$ by Equation \eqref{eq:submodular-red_assign}. If $i\not\in S_t$ then $t\notin D$ and thus $(i,D)\in \tilde{A}_{t,j}(b^*)$. Overall, we have $S\subseteq \bigcup_{b\in B_{t,j}} \tilde{A}_{t,j}(b)$. By Equation \eqref{eq:submodular-red_assign} it follows that $S\supseteq \bigcup_{b\in B_{t,j}} \tilde{A}_{t,j}(b)$ as well, thus $S= \bigcup_{b\in B_{t,j}} \tilde{A}_{t,j}(b)$. I.e, $\tilde{A}_{t,j}$ is an assignment of $S$.
\end{enumerate}
Note that the assignments can be constructed in polynomial time. We can conclude that $\left(S,\left(\tilde{A}_{t,j}\right)_{t\in[A], j\in [d]}\right)$ is a feasible solution for $R(\mathcal{Q})$, and its value is $f_{\mathcal{Q}}\left((S_t)_{t=1}^T \right)$.
\end{proof}

\begin{lemma}
Let $\mathcal{Q}$ be an instance of Submodular $d$-GMK with time horizon $T$. For any feasible solution $\left(S,(\tilde{A}_{t,j})_{t\in [T],j\in [d]}\right)$ for $R(\mathcal{Q})$ a feasible solution  $(S_t,\mathcal{A}_t)_{t=1}^T$ for $\mathcal{Q}$ can be constructed in polynomial time such that $f_\mathcal{Q}\left((S_t)_{t=1}^T\right) = \tilde{f}(S)$.
\end{lemma}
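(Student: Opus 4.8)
The plan is to mirror the construction in the proof of the previous lemma (the $d$-GMK case, Lemma~\ref{lem:modular-reduction-dir2}), since the only structural difference is that the modular profit $\sum_{t\in D}p_t(i)$ has been replaced by the submodular terms $\sum_{t=1}^T\tilde p_t(S)$, and these were designed precisely so the per-stage value agrees. Concretely, given a feasible solution $\left(S,(\tilde A_{t,j})_{t\in[T],j\in[d]}\right)$ for $R(\mathcal{Q})$, I would define, for each stage $t$, the set $S_t=\{i\in I\mid\exists(i,D)\in S:t\in D\}$, and for each $t\in[T]$, $j\in[d_t]$, and bin $b\in B_{t,j}$, the assignment $A_{t,j}(b)=\{i\in I\mid\exists(i,D)\in\tilde A_{t,j}(b):t\in D\}$. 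Since the matroid constraint forces $|S|\le|I|$, all of these can be constructed in polynomial time.

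The feasibility argument is identical to the modular case. First, $A_{t,j}$ is an assignment of $S_t$: unwinding the definitions, $\bigcup_{b\in B_{t,j}}A_{t,j}(b)=\{i\mid\exists(i,D)\in\bigcup_b\tilde A_{t,j}(b):t\in D\}$, and since $\tilde A_{t,j}$ is an assignment of $S$ this equals $\{i\mid\exists(i,D)\in S:t\in D\}=S_t$. Second, $A_{t,j}$ is feasible for $K_{t,j}$ because, for each bin $b$,
\[
\sum_{i\in A_{t,j}(b)}w_{t,j}(i)=\sum_{(i,D)\in\tilde A_{t,j}(b):t\in D}\tilde w_{t,j}((i,D))=\sum_{(i,D)\in\tilde A_{t,j}(b)}\tilde w_{t,j}((i,D))\le W_{t,j}(b),
\]
where the middle equality uses that $\tilde w_{t,j}((i,D))=0$ whenever $t\notin D$. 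For $j>d_t$ the MKC $\tilde K_{t,j}$ has zero weights and capacities, so any assignment (e.g. the one induced as above, or the empty one) is trivially feasible; these stages carry no constraint on $\mathcal{Q}$, so nothing more is needed.

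For the value, I would first observe the crucial identity $\tilde p_t(S)=p_t(S_t)$, which holds directly from the definition $\tilde p_t(S)=p_t(\{i\mid\exists(i,D)\in S:t\in D\})=p_t(S_t)$. Then, expanding $\tilde p(S)=\sum_{t=1}^T\tilde p_t(S)+\sum_{(i,D)\in S}\left(\sum_{t\in D:t-1\in D}g^+_{i,t}+\sum_{t\notin D:t-1\notin D}g^-_{i,t}\right)$ and regrouping the gain sums by stage rather than by element — using that $i\in S_{t-1}\cap S_t$ iff the unique $(i,D)\in S$ has $t,t-1\in D$, and $i\notin S_{t-1}\cup S_t$ iff it has $t,t-1\notin D$ — gives exactly $\sum_{t=1}^Tp_t(S_t)+\sum_{t=2}^T\left(\sum_{i\in S_{t-1}\cap S_t}g^+_{i,t}+\sum_{i\notin S_{t-1}\cup S_t}g^-_{i,t}\right)=f_{\mathcal{Q}}\left((S_t)_{t=1}^T\right)$. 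I do not expect a genuine obstacle here; the only point requiring minor care is the bookkeeping that translates the element-indexed gain sums of $\tilde p$ into the stage-indexed gain sums of $f_{\mathcal{Q}}$, which is the same manipulation already carried out in the forward direction and in the modular analogue, and it relies on the fact that $S$ contains exactly one pair $(i,D)$ for every $i\in I$ (the matroid constraint in $R(\mathcal{Q})$ being $|S\cap(\{i\}\times2^{[T]})|=1$ for all $i$).
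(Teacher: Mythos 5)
Your proposal is correct and follows essentially the same route as the paper's own proof: the same definitions of $S_t$ and $A_{t,j}(b)$, the same feasibility verification via the zero-weight property of $\tilde w_{t,j}$ for $t\notin D$, and the same value computation using $\tilde p_t(S)=p_t(S_t)$ together with regrouping the element-indexed gain sums by stage (correctly invoking the exact-one-element-per-item matroid constraint). Nothing is missing.
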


\begin{proof}
Let $\mathcal{Q}=\left((\mathcal{P}_t)_{t=1}^T,g^+,g^-\right)$ be an instance of Submodular $d$-GMK, where $\mathcal{P}_{t} = \left(I,\mathcal{K}_{t},p_t\right)$ and $\mathcal{K}_t = (K_{t,j})_{j=1}^{d_t}$. Also, let $R(\mathcal{Q})=\left(E,\tilde{\mathcal{K}},\tilde{p},\mathcal{I}\right)$ be the reduced instance of $\mathcal{Q}$. and $\left(S, \left(\tilde{A}_{t,j}\right)_{t\in [T],~j\in [d]}\right)$ be a feasible solution for $R(\mathcal{Q})$. We define the solution $(S_t,\mathcal{A}_t)_{t=1}^T$ for $\mathcal{Q}$ as follows. For every stage $t$ we set $S_t = \{i\in I~|~ \exists(i,D)\in S : t\in D\}$. For every $t=1,\ldots,T$, $j=1,\ldots,d_t$ and bin $b\in B_{t,j}$ (the set of bins in MKC $K_{t,j}$) let $A_{t,j}(b) = \{i\in I~|~ \exists (i,D)\in \tilde{A}_{t,j}(b) : t\in D\}$. Observe that the sets $(S_t)_{t=1}^{T}$ and assignments $(A_{t,j})_{t\in[T],~j\in[d_t]}$ can be constructed in polynomial time as at most $|I|$ elements can be chosen due to the matroid constraint. The assignment $A_{t,j}$ is an assignment of $S_t$ since
\[ S_t = \{i\in I~|~\exists(i,D)\in S:t\in D\} = \bigcup_{b\in B_{t,j}}\{i\in I~|~(i,D)\in \tilde{A}_{t,j}(b):t\in D\} = \bigcup_{b\in B_{t,j}}A_{t,j}(b),\]
where the second equality follows the feasibility of the solution for $R(\mathcal{Q})$. In addition, $A_{t,j}$ is a feasible assignment for MKC $K_{t,j}$ since for every bin $b\in B_{t,j}$ it holds that
\[ \sum_{i\in A_{t,j}(b)}w_{t,j}(i) = \sum_{(i,D)\in\tilde{A}_{t,j}(b) :t\in D}\tilde{w}_{t,j}((i,D)) = \sum_{(i,D)\in\tilde{A}_{t,j}(b)}\tilde{w}_{t,j}((i,D)) \leq W_{t,j}(b) \]
Thus $\left(S_t,\mathcal{A}_t\right)_{t=1}^T$ is a feasible solution for $\mathcal{Q}$.

Lastly, consider the value of the solution for $\mathcal{Q}$. It holds that
\[
\begin{aligned}
    &f_{\mathcal{Q}}\left((S_t)_{t=1}^T\right)= \\
    &\sum_{t=1}^T p_t(S_t) + \sum_{t\in [2,T]}\left( \sum_{i\in S_{t-1}\cap S_{t}} g^+_{i,t} + \sum_{i\notin S_{t-1}\cup S_{t}} g^-_{i,t}\right) - \sum_{t\in[T]}\left( \sum_{i\in S_{t}\setminus S_{t-1}} c^+_{i,t} + \sum_{i\in S_t\setminus S_{t+1}} c^-_{i,t} \right) = \\
    &\sum_{t=1}^T\tilde{p}_t(S) + \sum_{(i,D)\in S}\left(\sum_{t\in D :~ t-1\in D}g^+_{i,t} +\sum_{t\notin D :~ t-1\notin D}g^-_{i,t} - \sum_{t\in D :~ t-1\notin D}c^+_{i,t} -\sum_{t\in D :~ t+1\notin D}c^-_{i,t} \right) = \\
    & \tilde{p}(S)
\end{aligned}
\]
\end{proof}

For any Submodular $d$-GMK instance with a fixed time horizon $T$, $R(\mathcal{Q})$ can be constructed in polynomial (as $|E|=|I|\cdot 2^{|T|}$). The next corollary follows from this observation and from the last two lemmas.

\begin{corollary}
For any fixed $T\in\mathbb{N}$, there exists an approximation factor preserving reduction (see Definition \ref{def:approx-reduction}) from Submodular $d$-GMK with a time horizon bounded by $T$ to Submodular $dT$-MKCP+.
\end{corollary}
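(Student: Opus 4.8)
The plan is to exhibit the pair of polynomial-time algorithms required by Definition~\ref{def:approx-reduction}. Take $f$ to be the operator $R$ of Definition~\ref{def:submodular-reduction}, and take $g$ to be the map sending an instance $\mathcal{Q}$ together with a feasible solution $(S,(\tilde A_{t,j})_{t\in[T],j\in[d]})$ of $R(\mathcal{Q})$ to the solution $(S_t,\mathcal{A}_t)_{t=1}^T$ of $\mathcal{Q}$ constructed in the last lemma of this subsection. Once $f$ and $g$ are fixed in this way, verifying the definition is almost entirely a matter of quoting the three lemmas already proved: the first gives validity of $R(\mathcal{Q})$ as a Submodular $dT$-MKCP+ instance, the second yields the optimum inequality, and the third yields the per-solution guarantee for $g$.

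First I would check the running times. The ground set of $R(\mathcal{Q})$ is $E = I\times 2^{[T]}$, so $|E| = |I|\cdot 2^{T}$, which is polynomial in the size of $\mathcal{Q}$ because $T$ is a fixed constant; the $dT$ multiple-knapsack constraints and the profit function $\tilde p$ are then obtained by a direct enumeration over $E$ and over the $T$ stages, so $f=R$ is polynomial. For $g$, the last lemma already notes that $(S_t)_{t=1}^T$ and the assignments $(A_{t,j})$ are obtained, for each stage $t$, by scanning $S$ for the unique element $(i,D)$ associated with each $i\in I$ and testing whether $t\in D$ --- again polynomial in $|E|$ and $T$. This is the only place where the hypothesis that $T$ is bounded is used, and it is exactly where it is needed: without it the $2^{T}$ factor would make the reduction superpolynomial.

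It then remains to match the two bullets of Definition~\ref{def:approx-reduction}. For the first bullet, validity of $R(\mathcal{Q})$ is the content of the first lemma (the independence system is a partition matroid, and $\tilde p$ is non-negative, monotone and submodular via Claim~\ref{submodular-extension}); and applying the second lemma to an optimal solution $(S^*_t,\mathcal{A}^*_t)_{t=1}^T$ of $\mathcal{Q}$ produces a feasible solution of $R(\mathcal{Q})$ of value $f_{\mathcal{Q}}((S^*_t)_{t=1}^T) = OPT(\mathcal{Q})$, whence $OPT(R(\mathcal{Q}))\ge OPT(\mathcal{Q})$. For the second bullet, the last lemma states that for any feasible solution $s_2$ of $R(\mathcal{Q})$ the solution $g(\mathcal{Q},s_2)$ is feasible for $\mathcal{Q}$ and has value $\tilde p(S)$, i.e.\ $obj_{\mathcal{Q}}(g(\mathcal{Q},s_2)) = obj_{R(\mathcal{Q})}(s_2)$, which in particular is $\ge$. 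Both conditions of Definition~\ref{def:approx-reduction} then hold, which is the claim. I do not anticipate any real difficulty: the substance of the reduction lives in the three preceding lemmas, and the corollary is the bookkeeping step that packages them, with the bounded-horizon hypothesis entering only to control $|E|$.
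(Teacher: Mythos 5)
Your proposal is correct and follows the same route as the paper, which derives the corollary directly from the polynomial-time constructibility of $R(\mathcal{Q})$ (since $|E|=|I|\cdot 2^{T}$ with $T$ fixed) together with the two solution-mapping lemmas; your write-up merely makes the bookkeeping against Definition~\ref{def:approx-reduction} explicit.
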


In \cite{fairstein2020tight} a $\left(1-\frac{1}{e}\right)$-approximation algorithm for Submodular $d$-MKCP+ is presented. Thus, the next lemma follows from the above corollary.
\begin{lemma}\label{lem:submodular-for-fixed-horizon}
For any fixed $T\in\mathbb{N}$ there exists a randomized $\left(1-\frac{1}{e}\right)$-approximation algorithm for $d$-GMK with a time horizon bounded by $T$.
\end{lemma}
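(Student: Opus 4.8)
The plan is to obtain the result by composing the reduction from the preceding corollary with the known approximation guarantee for Submodular $dT$-MKCP+. Fix $T\in\mathbb{N}$ and let $\mathcal{Q}$ be an instance of Submodular $d$-GMK whose time horizon is bounded by $T$. First I would apply the operator $R$ of Definition~\ref{def:submodular-reduction} to produce the Submodular $dT$-MKCP+ instance $R(\mathcal{Q})=\left(E,\tilde{\mathcal{K}},\tilde{p},\mathcal{I}\right)$; by the lemma preceding the corollary this is a valid Submodular $dT$-MKCP+ instance, and since $|E|=|I|\cdot 2^{T}$ and $T$ is a fixed constant, the instance has polynomial size and is constructible in polynomial time. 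By the corollary, $R$ together with the polynomial-time back-translation is an approximation factor preserving reduction, so in particular $OPT_{R(\mathcal{Q})}\geq OPT_{\mathcal{Q}}$.

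Next I would invoke the randomized $(1-\frac{1}{e})$-approximation algorithm of~\cite{fairstein2020tight} for Submodular $d$-MKCP+, run on $R(\mathcal{Q})$ with parameter $dT$, to obtain a feasible solution $\left(S,(\tilde{A}_{t,j})_{t\in[T],j\in[d]}\right)$ with $\tilde{p}(S)\geq\left(1-\frac{1}{e}\right)OPT_{R(\mathcal{Q})}$ (with the success probability of that algorithm, which can be amplified by independent repetition if needed). Then, using the backward direction of the reduction --- the last lemma above, whose proof mirrors the modular case of Lemma~\ref{lem:modular-reduction-dir2} --- I would translate $S$ in polynomial time into a feasible solution $(S_t,\mathcal{A}_t)_{t=1}^T$ for $\mathcal{Q}$ satisfying $f_\mathcal{Q}\left((S_t)_{t=1}^T\right)=\tilde{p}(S)$.

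Combining these facts via the two defining properties of an approximation factor preserving reduction (Definition~\ref{def:approx-reduction}) gives
\[
f_\mathcal{Q}\left((S_t)_{t=1}^T\right)=\tilde{p}(S)\geq\left(1-\frac{1}{e}\right)OPT_{R(\mathcal{Q})}\geq\left(1-\frac{1}{e}\right)OPT_\mathcal{Q},
\]
which is the claimed guarantee. I do not expect a genuine obstacle here: all the real content lives in the already-proved reduction lemmas and in~\cite{fairstein2020tight}. The only points needing attention are that the $2^{T}$ blow-up in $|E|$ is harmless precisely because $T$ is fixed (so the overall running time remains polynomial in $|I|$), and that the randomization of the base algorithm passes through unchanged since both the construction of $R(\mathcal{Q})$ and the back-translation of the solution are deterministic.
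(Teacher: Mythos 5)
Your proposal is correct and follows essentially the same route as the paper: the paper's proof is exactly the composition of the approximation factor preserving reduction $R$ (valid in polynomial time since $|E|=|I|\cdot 2^{T}$ with $T$ fixed) with the randomized $\left(1-\frac{1}{e}\right)$-approximation algorithm for Submodular $dT$-MKCP+ from \cite{fairstein2020tight}, chained through Definition~\ref{def:approx-reduction}. You simply spell out the details that the paper leaves implicit.
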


\subsection{General Time Horizon}\label{app:submodular-cutting}

In this section we present an algorithm for Submodular $d$-GMK with a non-constant time horizon $T$. In order to do so the time horizon is cut at several stages into sub-instances. Each sub-instance is optimized independently, and then the solutions are combined to create a solution for the complete instance. A similar technique was used in \cite{bampis2019multistage}, though they considered a model without change costs. 

The following algorithm is identical to the algorithm presented in Section \ref{alg:break-time-horizon}. We present it here as well for completeness.

\begin{algorithm}[H]
	\SetAlgoLined
	\SetKwInOut{Input}{Input}
	\DontPrintSemicolon
	\Input{ A Submodular $d$-GMK instance $\mathcal{Q}$ with time horizon $T$, $0<\epsilon<\frac{1}{2}$ and $\alpha$-approximation algorithm $A$ for Submodular $d$-GMK with time horizon $T\leq\frac{2\phi_\mathcal{Q}}{\epsilon^2}$.}
	
	Set $\mu = \frac{\epsilon^2}{\phi_\mathcal{Q}}$.\;
	
	\For{$j=1,...,\frac{1}{\mu}$}{
	    Set $U_j = \left\{\frac{a}{\mu}+j-1 ~\middle|~  a\in\mathbb{N},~a\geq 1,~ \frac{a}{\mu}+j-1\leq T-\frac{1}{\mu}\right\}\cup\{0,T+1\}$.\;
	    
	    Find a solution for each cut instance in $\mathcal{Q}_{U_j}$ using algorithm $A$ and set $\mathcal{S}_j$ as the respective cut solution.\;
	}
	
	Return the solution $\mathcal{S}_j$ which maximizes the objective function $f_\mathcal{Q}$.
	
	\caption{Non Constant Time Horizon}
	\label{alg:submodular-break-time-horizon}
\end{algorithm}

\begin{lemma}\label{lem:submodular-cutting}
Let $0<\epsilon<\frac{1}{4}$, $\phi\geq 1$ and $A$ be an $\alpha$-approximation algorithm for Submodular $d$-GMK with time horizon $T\leq\frac{2\phi}{\epsilon^2}$. Also, let $\mathcal{Q}$ be an instance of Submodular $d$-GMK such that $\phi_{\mathcal{Q}}\leq \phi$. Algorithm \ref{alg:submodular-break-time-horizon} approximates $\mathcal{Q}$ within a factor of $(1-\epsilon)\alpha$.
\end{lemma}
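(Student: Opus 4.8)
The plan is to mirror, essentially verbatim, the proof of Lemma \ref{lem:cutting-algorithm} for the modular case, since the submodular objective differs from the modular one only by replacing $\sum_{i\in S_t}p_t(i)$ with $p_t(S_t)$ and by the absence of change costs (i.e.\ $c^+=c^-=\vec{0}$). First I would fix an optimal solution $(S^*_t,\mathcal{A}^*_t)_{t=1}^T$ and pass to its interval representation $\tilde{S}^*=E((S^*_t)_{t=1}^T)$ over $E=I\times[T]\times[T]$, together with the element-value function; here, however, the value of an element $e=(i,t_1,t_2)$ carries no change-cost terms, and the value of a set of elements must be measured through the submodular profit functions rather than additively. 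I would then split $\tilde{S}^*$ into short intervals $X$ (those with $t_2-t_1<\phi/\epsilon$) and long intervals $Y$, form for each shift $j$ the set of cut points $U_j$ as in the algorithm, and define $\tilde{S}_j$ to consist of the uncut short elements together with the long-enough pieces (length $\ge\phi$) obtained by cutting elements of $Y$ at $U_j$, exactly as in the modular proof.

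The key simplification is that with no change costs the loss from cutting an element $e=(i,t_1,t_2)$ at a cut point $u$ is just $\ell(e,u)=g^+_{i,u}$, and the loss from discarding a short leftover piece is bounded, using the profit-cost ratio applied to $c^+=c^-=0$ — actually here the relevant bound is simply that a short piece contributes at most $\sum_{t}p_t(\{i\})$ over a window of length $<\phi$ around $u$, and since $e\in Y$ is long, $c^+_{i,t_1}+c^-_{i,t_2}=0$, so the delicate averaging argument of the modular case becomes cleaner. I would therefore reproduce the three-term bound: (i) short elements survive in at least $\frac{1}{\mu}-\frac{\phi}{\epsilon}$ of the $\frac{1}{\mu}$ shifts because consecutive cut points are $\frac{1}{\mu}$ apart and $U_{j_1}\cap U_{j_2}=\{1,T+1\}$; (ii) for long elements, each cut point destroys only $g^+$ gains and at most a $\phi$-window of profit, so summing over all shifts the total destroyed value is at most $2\phi\sum_{(i,t_1,t_2)\in Y}(\sum_{t\in[t_1,t_2]}p_t(\{i\})+\sum_{t\in[t_1+1,t_2]}g^+_{i,t})\le\epsilon\cdot\frac{1}{\mu}(\cdots)$ since $2\phi\le\epsilon/\mu$, giving survival of a $(\frac{1}{\mu}-\frac{\phi}{\epsilon})$ fraction of long value; (iii) a $g^-_{i,t}$ gain present in $\tilde{S}^*$ survives in $\tilde{S}_j$ unless $t\in U_j$, hence in at least $\frac{1}{\mu}-1$ shifts.

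Putting the three bounds together yields $\mathcal{B}\ge(\frac{1}{\mu}-\frac{\phi}{\epsilon})f_\mathcal{Q}((S^*_t)_{t=1}^T)$ for the total value $\mathcal{B}$ of optimal solutions over all $\frac{1}{\mu}$ cut-instance families, and by averaging there is a shift $j^*$ whose cut instances admit solutions of combined value $\ge\mu\mathcal{B}=(1-\epsilon)f_\mathcal{Q}((S^*_t)_{t=1}^T)$; running $A$ on each cut instance and invoking Corollary \ref{cor:cut-solution} (whose proof applies verbatim with $c^+=c^-=0$) to recombine gives a solution of value $\ge(1-\epsilon)\alpha\cdot f_\mathcal{Q}((S^*_t)_{t=1}^T)$. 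The one genuine new point, and the part I expect to require the most care, is that the profit is now submodular: quantities like $\sum_{e\in\tilde{S}_j}v(e)$ and the monotone-recombination step in Corollary \ref{cor:cut-solution} must be read with $v$ incorporating the submodular marginal values, and one needs that restricting the solution at each stage to a subset $\tilde{S}_j(t)\subseteq S^*_t$ does not increase the objective — which holds by monotonicity of $p_t$ — and that the per-shift lower bounds on $\mathcal{B}$ remain valid when profit is measured submodularly rather than additively; I would handle this by noting that $p_t(S^*_t)=\sum$ of marginals of its elements in a fixed order, bounding the marginals of the discarded elements by their stand-alone values (submodularity), so that the same arithmetic as in the modular proof goes through. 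Hence the lemma follows.
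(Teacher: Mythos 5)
Your proposal is correct in outline but takes a genuinely different, and much heavier, route than the paper. The paper's proof exploits the fact that Submodular $d$-GMK has \emph{no change costs}: it never drops any item, but simply restricts the optimal solution $(S^*_t,\mathcal{A}^*_t)$ to each window of each shifted cut-point set $U_j$. Cutting then forfeits only the gains $g^+_{i,t},g^-_{i,t}$ at the cut points themselves, so the per-shift value is $f_{\mathcal{Q}}((S^*_t)_t)$ minus the gains at $U_j$; since the interiors of the $U_j$ are pairwise disjoint and total gains are at most $f_{\mathcal{Q}}((S^*_t)_t)$, summing over the $\tfrac1\mu$ shifts and averaging yields a shift of value at least $(1-\mu)f_{\mathcal{Q}}((S^*_t)_t)\ge(1-\epsilon)f_{\mathcal{Q}}((S^*_t)_t)$, and Corollary~\ref{cor:cut-solution} recombines the $\alpha$-approximate solutions. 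Your plan instead replicates the modular machinery (interval representation, the $X/Y$ split at length $\phi/\epsilon$, discarding pieces shorter than $\phi$, per-cut amortization), all of which exists in the modular proof only to amortize the extra change costs created at cut points; with $c^+=c^-=\vec 0$ it buys nothing and forces the additional layer of linearizing the submodular profit. That layer is the one place where your sketch is loose: monotonicity gives $p_t(\tilde S_j(t))\le p_t(S^*_t)$, which is the wrong direction (you need a \emph{lower} bound on the restricted solution), and ``bounding the marginals of the discarded elements by their stand-alone values'' is not the fact that makes the arithmetic work — if losses are charged at stand-alone values while surviving elements are only credited their marginals, the $2\phi\sum_e v(e)$ amortization can fail. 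The correct (and available) statement is that, fixing an order on $S^*_t$ and letting $m_t(i)$ be the marginal of $i$ within $S^*_t$, submodularity and non-negativity give $p_t(\tilde S_j(t))\ge\sum_{i\in\tilde S_j(t)}m_t(i)$, so you should define element values via these marginals throughout; with that repair your argument goes through and yields the same $(1-\epsilon)\alpha$ bound, just with considerably more work than the paper's direct restriction-and-averaging proof.
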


\begin{proof}
Let $\mathcal{Q}=\left((\mathcal{P}_t)_{t=1}^T,g^+,g^-\right)$ be an instance of Submodular $d$-GMK with time horizon $T$ for which we denote its profit-cost ratio by $\phi$ and assume for simplicity it is integral. Let $0<\epsilon<\frac{1}{4}$ and $\mu = \frac{\epsilon^2}{\phi}$. Also, let $A$ be an $\alpha$-approximation algorithm for Submodular $d$-GMK with time horizon $T'\leq\frac{2\phi}{\epsilon^2}=\frac{2}{\mu}$. Note, if $T\leq\frac{2}{\mu}$, the cut points set $U_0$ will be an empty set. Thus $A$ will return an $\alpha$-approximation solution for $\mathcal{Q}$ as required. 

We show there exists a set of cut points $U_j$ and a tuple of solutions $\left((S_t,\mathcal{A}_t)_{t=u_r}^{u_{r+1}-1}\right)_{r=0}^{k-1}$ for each cut instance in $\mathcal{Q}_{U_j}=(q_j^r)_{r=0}^{k-1}$, such that the sum of values of the solutions, $\sum_{r=0}^{k-1}f_{q_j^r}\left((S_t)_{t=u_r}^{u_{r+1}-1}\right)$, is sufficiently large. From Corollary \ref{cor:cut-solution} it follows that the value of a cut solution is larger than the sum of its parts (due to lost gains and change costs saved if an item is assigned in adjacent instances). Thus, this will prove that the maximum cut solution found is sufficiently large as well. 

Let $\left(S^*_t,\mathcal{A}^*_t\right)_{t=1}^T$ be an optimal solution for $\mathcal{Q}$. Also, let $\mathcal{Q}_{U_j}=\left(q_{j,r}\right)_{r=0}^{k-1}$ be the tuple of cut instance w.r.t cut points set $U_j=\{u_0^j,\ldots,u_{k}^j\}$. For $r=0,\ldots,k-1$, the tuple $(S^*_t,\mathcal{A}^*_t)_{t=u^j_r}^{u^j_{r+1}-1}$ is a feasible solution for cut instance $q_{j,r}$ due to the feasibility of the optimal solution. The sum of values of these solutions is
\begin{align*}
\sum_{r=0}^{k-1}f_{q_j^r}\left((S^*_t)_{t=u^j_r}^{u^j_{r+1}-1}\right) &=  \sum_{r=0}^{k-1}\left(\sum_{t=u^j_r}^{u^j_{r+1}-1}p_t(S^*_t) + \sum_{t=u^j_r+1}^{u^j_{r+1}-1}\left(\sum_{i\in S_{t-1}\cap S_t}g^+_{i,t} + \sum_{i\notin S_{t-1}\cup S_t}g^-_{i,t}\right)\right) \\
&= \sum_{t=1}^Tp_t(S^*_t) + \sum_{t=[2,T]\setminus U_j}\left(\sum_{i\in S_{t-1}\cap S_t}g^+_{i,t} + \sum_{i\notin S_{t-1}\cup S_t}g^-_{i,t}\right) \\
&= f_\mathcal{Q}\left((S^*_t)_{t=1}^T\right) - \sum_{t\in U_j}\left(\sum_{i\in S_{t-1}\cap S_t}g^+_{i,t} + \sum_{i\notin S_{t-1}\cup S_t}g^-_{i,t}\right)
\end{align*}
For any set of cut points $U_1,\ldots,U_{\frac{1}{\mu}}$ such a tuple of solution can be constructed. Their total value is
\begin{align*}
\sum_{j=1}^{\frac{1}{\mu}}\sum_{r=0}^{k-1}f_{q_j^r}\left((S^*_t)_{t=u^j_r}^{u^j_{r+1}-1}\right) &= \sum_{j=1}^{\frac{1}{\mu}}\left(f_\mathcal{Q}\left((S^*_t)_{t=1}^T\right) - \sum_{t\in U_j}\left(\sum_{i\in S_{t-1}\cap S_t}g^+_{i,t} + \sum_{i\notin S_{t-1}\cup S_t}g^-_{i,t}\right)\right) \\
&\geq \frac{1}{\mu}\cdot f_\mathcal{Q}\left((S^*_t)_{t=1}^T\right) - \sum_{t=2}^T\left(\sum_{i\in S_{t-1}\cap S_t}g^+_{i,t} + \sum_{i\notin S_{t-1}\cup S_t}g^-_{i,t}\right) \\
& \geq \left(\frac{1}{\mu}-1\right)\cdot f_\mathcal{Q}\left((S^*_t)_{t=1}^T\right)
\end{align*}
There must be a value $j^*$ for which the sum of values of solutions $\sum_{r=0}^{k-1}f_{q_{j^*}^r}\left((S^*_t)_{t=u^{j^*}_r}^{u^{j^*}_{r+1}-1}\right)$ is at least as high as the average sum of values of solutions. It follows that
\begin{align*}
\sum_{r=0}^{k-1}f_{q_{j^*}^r}\left((S^*_t)_{t=u^{j^*}_r}^{u^{j^*}_{r+1}-1}\right) \geq (1-\mu)\cdot f_\mathcal{Q}\left((S^*_t)_{t=1}^T\right)
\end{align*}
Let $\left((S_t,\mathcal{A}_t)_{t=u_r}^{u_{r+1}-1}\right)_{r=0}^{k-1}$ be the solutions returned by $A$ for the cut instances. Since Algorithm $A$ returns $\alpha$-approximation solution for the problem, it holds that
\begin{align*}
\sum_{r=0}^{k-1}f_{q_{j^*}^r}\left((S_t)_{t=u^{j^*}_r}^{u^{j^*}_{r+1}-1}\right) \geq (1-\mu)\cdot\alpha\cdot f_\mathcal{Q}\left((S^*_t)_{t=1}^T\right)
\end{align*}
Finally, from Corollary \ref{cor:cut-solution} we can conclude that the value of $\mathcal{S}_{j^*}$ is at least as high as the sum of values of solution for the cut instances. Thus, its value is at least $(1-\mu)\cdot\alpha\cdot f_\mathcal{Q}\left((S^*_t)_{t=1}^T\right)$ as required.
\end{proof}

The correctness of Theorem \ref{thm:submodular-result} follows immediately from Lemma \ref{lem:submodular-cutting} and Lemma \ref{lem:submodular-for-fixed-horizon}. 

\end{document}